%% file: main.tex
\documentclass[12pt]{article}

\usepackage[english]{babel}

\usepackage{amsmath,amsthm,amssymb, graphicx, multicol, array, bm, bbm}
\usepackage{graphicx}
\usepackage[colorlinks=true, allcolors=blue]{hyperref}
\usepackage{caption}
\usepackage{subcaption}
\usepackage{natbib}
\usepackage{algorithm}
\usepackage[utf8]{inputenc}
\usepackage[margin=1in]{geometry}
\usepackage{booktabs}
\usepackage{array}
\usepackage{amsmath, amsfonts, amssymb}
\usepackage{color}
\usepackage{hyperref}
\usepackage{bbm}
\usepackage[noend]{algpseudocode}
\usepackage{subcaption}
\usepackage{graphbox}
\usepackage{accents}
\usepackage{algorithm}
\usepackage{pdflscape}
\usepackage{lscape}
\usepackage{multirow}
\usepackage{cleveref}
\usepackage{authblk}
\usepackage{pifont}
\usepackage{setspace}

\definecolor{darkgreen}{rgb}{0.0, 0.5, 0.0}
\newcommand{\greenup}{\textcolor{darkgreen}{\boldmath$\uparrow$}}
\newcommand{\reddown}{\textcolor{red}{\boldmath$\downarrow$}}

\newcommand{\cmark}{\ding{51}}%
\newcommand{\xmark}{\ding{55}}%

\input{math_command}

\title{Prediction-based Inference in Electronic Health Record (EHR)-linked Biobanks with Clinically Informative Outcomes}

\begin{document}

% \author[]{Draft}
\author[1]{Xingran Chen}
\author[2]{Cheng-Han Yang}
\author[1]{Zhenke Wu}
% \author[1]{Xu Shi}
\author[2]{Bhramar Mukherjee}

\affil[1]{Department of Biostatistics, University of Michigan}
% \affil[12]{Department of Statistics, University of Washington}
\affil[2]{Department of Biostatistics, Yale University}
\affil[ ]{\texttt{\{chenxran,zhenkewu\}@umich.edu}}
% \affil[ ]{\texttt{\{tylermc\}@uw.edu}}
\affil[ ]{\texttt{\{bhramar.mukherjee\}@yale.edu}}
\maketitle

% Version 1.0 
\newpage
\doublespacing

\begin{abstract}

Electronic health record (EHR)-linked biobank data hold tremendous promise for large-scale discoveries via genome-wide association study (GWAS) on diverse phenotypic traits and biomarkers routinely captured in the EHR. However, heterogeneous missingness in biomarkers compromises the validity and efficiency of statistical analyses. Prediction-based (PB) inference methods meet this challenge by using external machine learning (ML) predictions to impute missing biomarker outcomes, thereby improving statistical power and estimation accuracy in association analyses. Yet, their suitability remains unclear when outcomes are subject to clinically informative observation processes, that is, when laboratory tests are ordered based on both measured and unmeasured patient- and health system-level characteristics. In this paper, we review the statistical underpinnings of popular PB methods and then evaluate nine methods, including four PB methods and five traditional missing-data approaches, under an encompassing set of outcome observation processes for continuous and binary outcomes. PB methods can substantially improve statistical power and estimation efficiency when the missing-data mechanism is correctly specified. Under misspecification, however, these gains require both conditional independence between the covariates of interest and the missingness mechanism and independence between imputation error and the missingness mechanism. Using \textit{All of Us} (AoU) data, we perform GWAS of six laboratory biomarkers and demonstrate that PB methods can replicate known genetic associations while improving efficiency relative to (weighted) complete-case analysis (CCA). \textcolor{black}{Their performance in replicating existing GWAS results in AoU also depends} on imputation quality and the underlying missingness mechanism.
\end{abstract}

% \textbf{Keywords:} \textit{All of Us}, Clinically informative observation process, Electronic health records, Genome-wide association studies, Missing data, Missing not at random.

% , to accurately predict the missing values within the dataset and carry out statistical analysis.

% \tableofcontents

\newpage

\section{Introduction}
\label{sec:intro}

Electronic health record (EHR)-linked biobank data can accelerate biomedical research by enabling large-scale, cost-effective studies of diseases, genetics, and treatment outcomes~\citep{abul2019personalized}.
Such databases link biological samples with detailed, longitudinal EHR and allow researchers to study disease progression, drug responses, and genetic markers \textcolor{black}{in large national biobanks such as US-based \textit{All of Us} (AoU)~\citep{all2019all} and the UK Biobank~\citep{sudlow2015uk}.} However, a central analytic challenge in fully leveraging these databases for scientific insights is that outcomes of interest, particularly laboratory biomarkers, are often only partially observed. Unlike clinical trials, which follow a pre-specified protocol that prescribes the measurement timing, instrument, quality check and control, laboratory tests are typically ordered based on clinical indication, \textcolor{black}{and consequently} different biomarkers exhibit distinct frequencies and patterns of missingness in the EHR data. 
For example, \textcolor{black}{in the laboratory biomarker genome-wide association studies (GWAS) using the AoU dataset}, \textcolor{black}{46,226 (96.3\%) of the 47,970 participants} had at least one recorded white blood cell (WBC) count, a biomarker routinely measured during annual examinations. In contrast, for C-reactive protein (CRP), which is typically ordered only when inflammation is suspected, \textcolor{black}{only 25,455 (26.2\%) of the 97,210 participants had at least one measurement} \textcolor{black}{(See Section~\ref{sec:data_preparation} for details on the GWAS cohort construction process, which leads to differences in sample sizes across biomarkers).}
Without rigorously considering and addressing this outcome missingness issue, obtaining valid and efficient inference for these phenotypes in large-scale biomedical studies, such as GWAS~\citep{tam2019benefits,uffelmann2021genome}, is challenging~\citep{beesley2020emerging, zhou2022global}.

With the rapid development of predictive models based on machine learning (ML)~\citep{ho1995random,cortes1995support}, deep learning (DL)~\citep{krizhevsky2012imagenet,he2016deep,vaswani2017attention,ren2025comprehensive}, \textcolor{black}{and, more recently, large language models~\citep{nazi2024large,li2024scoping}}, highly accurate predictions are increasingly available for imputing missing entries in rectangular analytic datasets and \textcolor{black}{even for complex} unstructured multimodal data. This dovetails with the rich \textcolor{black}{and diverse expanse of} health records that may provide abundant features for predicting \textcolor{black}{unobserved values for multiple} laboratory biomarkers~\citep{goldstein2016opportunities,sharma2024advances}. \textcolor{black}{However, these ML/DL-based predictions do not fit the standard multiple imputation (MI) procedure~\citep{rubin1987multiple,rubin1996multiple,murray2018multiple,little2019statistical}, as analysts may only have access to a single predicted value for each missing cell rather than multiple draws from a predictive distribution, making MI not directly applicable. See detailed discussion in Section~\ref{sec:prediction-based-inference}.} As such, a growing body of work has developed prediction-based (PB) inference methods~\citep{chen2000unified, angelopoulos2023prediction,angelopoulos2023ppi++,mccaw2023leveraging,mccaw2024synthetic,gronsbell2024another,miao2025assumption,miao2024valid, gan2024prediction, kluger2025prediction, chen2025unified, zhao2025imputation, xu2025blockwise} to leverage ML prediction without introducing bias from imputed data distribution \textcolor{black}{and accounting for uncertainty in an appropriate way}. In the context of EHR-linked biobanks \textcolor{black}{where the outcome lab biomarker is partially observed, the PB inference considers splitting the available analytic data into two subsets: a} labeled subset consists of patients for whom the laboratory test was ordered and recorded, \textcolor{black}{and an} unlabeled subset consists of patients for whom only ML-predicted laboratory results are available. Existing PB methods differ in how they incorporate predictions: some augment the estimating equation~\citep{angelopoulos2023prediction,angelopoulos2023ppi++,miao2025assumption,gan2024prediction, zhao2025imputation, xu2025blockwise}, while others augment the estimator directly~\citep{chen2000unified, gronsbell2024another, kluger2025prediction, chen2025unified}. As an exception, SynSurr~\citep{mccaw2024synthetic} formulates a joint model for the observed and the imputed outcomes. Recent extensions further incorporate inverse probability weighting to accommodate the missingness-at-random mechanism~\citep{kluger2025prediction, chen2025unified}. We provide a detailed structured review and categorization of these methods in Section~\ref{sec:method}.

While PB methods avoid assumptions about the prediction models, they  rely on assumptions about the missingness mechanism, namely, all existing PB methods require missing completely at random (MCAR) or, at most, missing at random (MAR). In practice, however, EHR data are collected under clinically informative observation processes~\citep{du2024new, yang2026joint}, so that the resulting missingness may depend on the unobserved outcome itself, a mechanism known as missing not at random (MNAR). For example, a clinician is more likely to order a Hemoglobin A1c (HbA1c) test for a patient suspected of having diabetes or pre-diabetes, \textcolor{black}{in which case patients with elevated HbA1c values are disproportionately likely to be tested.}
% in which case the HbA1c level is \textcolor{black}{more likely to be high or low}. 
In such a setting, the probability of observing the outcome depends on its value, violating the MCAR and MAR assumptions on which PB methods rely. It remains unclear whether PB methods provide \textcolor{black}{greater statistical power and better efficiency compared to complete-case analysis (CCA) while maintaining valid type I error control} when these assumptions are violated. To date, comprehensive empirical evaluations of PB methods under realistic EHR missingness mechanisms are lacking, as are theoretical conditions characterizing when each approach remains valid.

To address this knowledge gap, this paper makes the following contributions in the context of cross-sectional \textcolor{black}{association} analyses with partially observed outcomes \textcolor{black}{and completely observed covariate data. This \textcolor{black}{setting} is natural when the predictors are genetic markers, because biobanks have genetic data on a large fraction of participants.} First, \textcolor{black}{we provide a structured review of recent PB inference \textcolor{black}{methods, selected from the extant literature}. We categorize \textcolor{black}{these} PB methods into three \textcolor{black}{classes}} based on how they incorporate predictions, and summarize their \textcolor{black}{assumptions regarding underlying missingness mechanisms} (Section~\ref{sec:method}; Table~\ref{tab:summary_literature}). Second, we conduct a comparative simulation study evaluating
\textcolor{black}{nine methods, including \textcolor{black}{four of the reviewed PB methods} and five traditional methods based on CCA and multiple imputation (MI),} under 10 outcome observation process models spanning MCAR, MAR, and MNAR, for both continuous and binary outcomes (Section~\ref{sec:data_simulation_settings}). The simulation results (Sections~\ref{sec:simulation-results} and~\ref{sec:binary-outcome-results}) reveal that PB methods \textcolor{black}{achieve higher statistical power} \textcolor{black}{and more precise estimation} when their \textcolor{black}{assumed missingness mechanisms} hold, but can exhibit inflated type I error under MNAR even in settings where CCA remains valid. Third, we provide theoretical results that explain these patterns \textcolor{black}{revealed in the simulations}; these results are synthesized into a decision tree (Figure~\ref{fig:decision_tree}) that provides practitioners \textcolor{black}{with a structured summary} on when valid inference is achievable using each approach. Finally, we apply CCA, weighted CCA, and two PB methods to conduct GWAS of six laboratory biomarkers using AoU data, comparing results with external GWAS reference summary statistics (Section~\ref{sec:gwas}). The analysis demonstrates that PB methods replicate known genetic associations with improved efficiency compared to CCA, and, importantly, extend inference to a more representative study population by incorporating individuals with missing outcomes. \textcolor{black}{In addition, it is observed that the performance} of PB methods in GWAS depend on the imputation quality and the underlying \textcolor{black}{missingness} mechanisms.

The remainder of the paper is organized as follows. In Section~\ref{sec:method}, we introduce the problem setup, review existing approaches for missing data analysis, and provide a structured review of PB inference methods developed so far. The simulation design is introduced in Section~\ref{sec:data_simulation_settings}. In Section~\ref{sec:simulation-results}, we present the simulation results and provide theoretical explanations for the findings. In Section~\ref{sec:gwas}, we conduct GWAS of six laboratory biomarkers using AoU data to perform further empirical comparison. In Section~\ref{sec:discussion}, we discuss the results from both the simulations and the analysis of AoU data, and conclude the paper with the limitations and potential directions for future research.

\section{Method}\label{sec:method}

\subsection{Problem setup}

\textbf{Inferential target.} \textcolor{black}{We consider cross-sectional regression settings in EHR-linked biobank data. Specifically, we use the first recorded value of each laboratory biomarker for each individual to summarize longitudinal laboratory measurements in the EHR-linked biobank data \textcolor{black}{, an approach typically used for large-scale association analysis~\citep{goldstein2020labwas}}. This cross-sectional formulation underlies large-scale analyses such as \textcolor{black}{GWAS and} LabWAS~\citep{goldstein2020labwas}}, \textcolor{black}{in which the data may be summarized using median or mean, while the challenges remain similar.}. \textcolor{black}{We consider a regression model relating a biomarker of interest $Y$ to a $p$-dimensional vector of covariates of primary interest $\vX \in \R^p$ and $q$ additional confounders $\vZ \in \R^q$. In GWAS, $\vX$ is typically a scalar variable ($p = 1$) representing one of millions of genetic variants, and the regression model is repeatedly fitted for these variants one at a time. The confounders $\vZ$ generally include age, sex, and genetic principal components (PCs), which are commonly adjusted for in GWAS. The confounders $\vZ$ are included in the regression to remove spurious associations between $\vX$ and $Y$. The primary scientific interest of GWAS lies in the adjusted association between $Y$ and $\vX$. With this notation, for each genotype being tested we consider the underlying full dataset with sample size $N$: $\cD_{\text{Full}} = \{(Y_i, \vX_i, \vZ_i)\}^{N}_{i=1}$.}

When $Y_i$ is continuous, we adopt a linear regression model parameterized by $\vbeta = (\beta_0, \vbeta^\top_{\vX}, \vbeta^\top_{\vZ})^\top$, with systematic component: $\E[Y_i| \vX_i, \vZ_i] = \beta_0 + \vX_i\vbeta_{\vX} + \vZ_i\vbeta_{\vZ}$, where $\beta_0$ is the intercept term, $\vbeta_{\vX}$ is the parameters of interest, and $\vbeta_{\vZ}$ is the main effect of $\vZ_i$. Denote $\vG_i = (1, \vX_i, \vZ_i)$, when the full data $\cD_{\text{Full}}$ are available, one may estimate $\vbeta$ by solving $\widehat{\vbeta}_{\text{Full}}$ from
\begin{equation}
\label{eq:linear_ee}
\sum^N_{i=1} \vG_i(Y_i - \vG_i\vbeta) = 0.
\end{equation}
If $Y_i$ is binary, we have the systematic component: $\text{logit}(\E[Y_i| \vX_i, \vZ_i]) = \beta_0 + \vX_i\vbeta_{\vX} + \vZ_i\vbeta_{\vZ}$, where $\text{logit}(x) = \ln{(x / (1-x))}$. In this setting, the full-data estimator $\widehat{\vbeta}_{\text{Full}}$ solves
\begin{equation}
\label{eq:logistic_ee}
\sum^N_{i=1} \vG_i\{Y_i - \text{expit}(\vG_i\vbeta)\} = 0,
\end{equation}
where $\text{expit}(x) = e^x / (1 + e^x)$. Under standard regularity conditions, these estimators are consistent and asymptotically normal. We refer the reader to~\citet{van2000asymptotic} for further details.

\noindent \textbf{Missing data mechanisms.} \textcolor{black}{For large-scale biomedical studies such as GWAS \textcolor{black}{and} LabWAS~\citep{goldstein2020labwas}, it is more important to consider missingness in $Y$ as genotype data \textcolor{black}{($\vX$) and other covariates ($\vZ$) such as age, gender, and genetic PCs} are typically more complete. Therefore, in this work we consider an outcome-only missingness setting, where} not all individuals have their laboratory biomarker $Y_i$ recorded in the EHR data. We introduce the missingness indicator $R_i \in \{0,1\}$, where $R_i = 1$ if $Y_i$ is observed and $R_i = 0$ otherwise. We assume that $R_i$ follows an outcome observation model $\pi(R_i; \vomega, \vX_i, \vZ_i, Y_i)$ parameterized by $\vomega$ and potentially dependent on $(\vX_i, \vZ_i, Y_i)$. Following~\citet{little2019statistical} and~\citet{rubin1976inference}, the missingness mechanism can be classified as follows. Under MCAR, missingness is independent of all variables, i.e., $\pi(R_i; \vomega, \vX_i, \vZ_i, Y_i) = \pi(R_i; \vomega)$. For example, a patient’s WBC counts may be missing because a blood sample was accidentally damaged in the laboratory; since this event is unrelated to patient characteristics or outcomes, the mechanism is MCAR. Under MAR, missingness depends only on observed variables, i.e., $\pi(R_i; \vomega, \vX_i, \vZ_i, Y_i) = \pi(R_i; \vomega, \vX_i, \vZ_i)$. For instance, whether a laboratory test is ordered may be determined by information from the last visit, as well as age or sex available in the EHR. In this case, the probability of missingness depends on observed data. Finally, the MNAR assumption indicates that the missingness indicator $R_i$ depends on the unobserved outcome $Y_i$ itself. \textcolor{black}{In EHR-linked biobank data, lab biomarkers are collected under the clinically informative observation process~\citep{du2024new, yang2026joint}.} For example, a clinician is more likely to order an HbA1c test when the patient is suspected of having diabetes, so that the probability of observation depends directly on the (potentially unobserved) outcome value. As a result, the probability of missingness depends directly on the missing outcome value.

\subsection{\textcolor{black}{Traditional} missing-data approaches}

In this subsection, \textcolor{black}{we provide a recap of  traditional} approaches for statistical analysis in the presence of missing data.

\noindent \textbf{(Weighted) complete-case analysis.} Given a dataset with partially observed outcome biomarkers, the simplest and most widely used strategy is CCA, which discards all observations with missing biomarkers and proceeds using only the subset of fully observed cases. In the regression settings, CCA yields valid statistical inference as long as the missing data mechanism is MCAR or MAR. Even when the mechanism is MNAR, \citet{kundu2024framework} showed that CCA can achieve valid type I error control in logistic regression when the covariates of interest \textcolor{black}{$\vX$ and the confounders $\vZ$ are independent of the missingness mechanism.} The primary drawback of CCA is its incapability to incorporate information from incomplete cases: when 90\% of subjects have missing $Y$, CCA uses only the remaining 10\% \textcolor{black}{of fully observed subjects, discarding substantial information from partially observed individuals.} Furthermore, under MAR or MNAR, the inferential target shifts to the complete-case subpopulation, which may not represent the broader biobank population. To address this, one can estimate the probability of observing the outcome and weight each complete case by the inverse of this probability, a strategy known as inverse probability weighting (IPW)~\citep{li2013weighting,sun2018inverse,sun2018inverse_epid,ross2023accounting}. For linear regression, such a weighted CCA estimator is obtained by solving $\widehat{\vbeta}$ from $\sum^N_{i=1} w_i\vG_i(Y_i - \vG_i\vbeta) = 0$, where $w_i$ is the only additional term compared with Equation~(\ref{eq:linear_ee}), and $w_i = 1 / \widehat{\pi}(R_i = 1| \vX_i, \vZ_i)$ is the inverse of the estimated probability that the 
$i$-th outcome is observed.

\noindent \textbf{Multiple imputation.} An alternative to CCA is to impute missing entries and carry out inference using the imputed dataset. MI, proposed by~\citet{rubin1987multiple}, accounts for imputation uncertainty by performing inference across multiply imputed datasets. Here we briefly present the MI procedure in our context; a
more thorough overview is available in~\citet{murray2018multiple}. First, the predictive distribution of the outcome biomarkers $Y$ given $\vX$ and $\vZ$ is modeled. Then, the missing values of $Y$ are imputed from this estimated distribution $K$ times to create $K$ completed datasets. The regression model is fitted to each completed dataset, yielding $(\widehat{\vbeta}^{(1)}, \cdots, \widehat{\vbeta}^{(K)})$, and corresponding variance estimates $(\widehat{U}^{(1)}, \cdots, \widehat{U}^{(K)})$. The MI estimator $\widehat{\vbeta}_{\text{MI}}$ and its estimated sampling variance $\widehat{U}_{\text{MI}}$ are obtained via Rubin's rule:
$$\widehat{\vbeta}_{\text{MI}} = \frac{1}{K} \sum^K_{k=1} \widehat{\vbeta}^{(k)}, \qquad \widehat{U}_{\text{MI}} = \bar{\widehat{U}} + \left ( 1 + \frac{1}{K} \right ) B,$$
where 
$$\bar{\widehat{U}} =  \frac{1}{K}\sum^K_{k=1} \widehat{U}^{(k)}$$
and
$$B = \frac{1}{K - 1}\sum^K_{k=1}\left ( \widehat{\vbeta}^{(k)} - \widehat{\vbeta}_{\text{MI}} \right )\left ( \widehat{\vbeta}^{(k)} - \widehat{\vbeta}_{\text{MI}} \right )^\top$$
are the within-imputation and the between-imputation variance, respectively. Under large-sample theory, the MI estimator follows a $t$-distribution, based on which hypothesis testing and confidence intervals can be derived~\citep{rubin1987multiple}. MI imposes assumptions on the imputation model; improper imputations may lead to anti-conservative inferential results~\citep{rubin1996multiple,murray2018multiple}. \textcolor{black}{A thorough discussion of CCA, IPW, and MI for missing data can be found in~\citet{little2024comparison}.}

\subsection{Prediction-based inference approach}
\label{sec:prediction-based-inference}

With the rapid development of predictive models based on ML and DL methods, highly accurate predictions for missing biomarkers have become increasingly available for the EHR-linked biobank data. When ML imputations are available, the setting differs subtly from multiple imputation~\citep{chen2025unified}: in MI, the imputation model is constructed solely from the observed analytic data. In the context of ML-based imputation, it is commonly assumed that analysts can access predictions from large-scale models trained on external data sources (e.g., large language models). In the regression setting, this implies that, in addition to the partially observed biomarkers, analysts have access to predicted biomarker values $\{\widehat{Y}_i\}_{i=1}^N$ for each individual, regardless of whether the true biomarker is observed. Direct application of the MI procedure is infeasible in this setting. First, the ML/DL models may not satisfy the assumptions of the imputation models in MI, as they are too complicated to manipulate and may lie outside the analysts' control. Second, in many applications, only a single prediction is available per individual, making it impossible to apply Rubin's rule.

Consequently, a growing body of literature has proposed PB methods that incorporate ML-based imputation without imposing assumptions on the imputation model itself. \textcolor{black}{Table~\ref{tab:summary_literature} summarizes recent work in this area. Specifically, the studies are summarized in terms of publication year, accommodated missingness patterns, and assumed missing data mechanisms. Most methods~\citep{angelopoulos2023prediction, angelopoulos2023ppi++, gan2024prediction, chen2000unified, gronsbell2024another, mccaw2024synthetic} focus on the missing-outcome scenario, in which only the outcome variable is subject to missingness. Several other works address either covariate missingness~\citep{miao2025assumption, kluger2025prediction} or more general missingness patterns~\citep{chen2025unified, zhao2025imputation, xu2025blockwise}. In terms of the assumed missing data mechanism, most studies consider MCAR, whereas~\citet{kluger2025prediction, chen2025unified} assume MAR, under which the probability of missingness depends on observed data. To elucidate the core idea of these methods, we first illustrate their use through a mean estimation example, and then describe how different PB methods are formalized, leading to a natural categorization.}

Consider the problem of estimating the population mean HbA1c level. Suppose the data consist of two subsets: 1) the labeled subset $\cD_{\text{lab}} = \{(Y^{\text{lab}}_i, \widehat{Y}^{\text{lab}}_i)\}^{m}_{i=1}$, for which both the true HbA1c levels $Y^{\text{lab}}_i$ and the predicted HbA1c levels $\widehat{Y}^{\text{lab}}_i$ are available. In practice, such predictions can be made using glucose levels. 2) the unlabeled subset $\cD_{\text{unlab}} = \{(\widehat{Y}^{\text{unlab}}_i)\}^{M}_{i=1}$, for which only the predicted HbA1c levels $\widehat{Y}^{\text{unlab}}_i$ are available. The total sample size is therefore $N = M + m$. Under the assumption of MCAR, the mean estimation from the PB methods can be written as follows:
\begin{equation}
\widehat{\mu}_{\text{PB}} = \frac{1}{M} \sum^{M}_{i=1}  \widehat{Y}^{\text{unlab}}_i - \left (\frac{1}{m} \sum^{m}_{i=1}  \widehat{Y}^{\text{lab}}_i - \frac{1}{m} \sum^{m}_{i=1}  Y^{\text{lab}}_i \right ).
\end{equation}

There are two complementary perspectives for interpreting this estimator. First, following~\citet{angelopoulos2023prediction}, the first term, $ \sum^{M}_{i=1}  \widehat{Y}^{\text{unlab}}_i / M$, can be viewed as a potentially biased estimate of the mean HbA1c level based on predicted HbA1c values in the unlabeled subset. The second term, $ \sum^{m}_{i=1}  (\widehat{Y}^{\text{lab}}_i -  Y^{\text{lab}}_i) / m$, serves as a rectifier that corrects this bias by estimating it using the true and predicted HbA1c levels in the labeled subset.

Alternatively, by rearranging the terms, we have:
\begin{equation}
\widehat{\mu}_{\text{PB}} = \frac{1}{m} \sum^{m}_{i=1}  Y^{\text{lab}}_i - \left ( \frac{1}{m} \sum^{m}_{i=1}\widehat{Y}^{\text{lab}}_i -  \frac{1}{M} \sum^{M}_{i=1}\widehat{Y}^{\text{unlab}}_i \right ).
\end{equation}
In this form, $\widehat{\mu}_{\text{PB}}$ can be expressed as the difference between a complete-case (CC) mean estimator $\widehat{\mu}_{\text{CC}} = \sum_{i=1}^{m} Y^{\text{lab}}_i / m$, which is unbiased and consistent, and an estimate of zero $\sum^{m}_{i=1}\widehat{Y}^{\text{lab}}_i / m - \sum^{M}_{i=1}\widehat{Y}^{\text{unlab}}_i / M$, which is defined as the difference between the average predicted HbA1c levels in the labeled and unlabeled subsets.

The goal of PB inference is to incorporate additional information in the predicted outcomes to reduce the estimator’s variance. Since the labeled subset and the unlabeled subset are independent, the covariance between the two subsets is zero. Thus, the variance of $\widehat{\mu}_{\text{PB}}$ is:
\begin{equation}
\Var(\widehat{\mu}_{\text{PB}}) = \frac{1}{M}\Var(\widehat{Y}^{\text{unlab}}_i) + \frac{1}{m}\Var\bigl(Y^{\text{lab}}_i - \widehat{Y}^{\text{lab}}_i\bigr).
\end{equation}
Mathematically, $\widehat{\mu}_{\text{PB}}$ tends to have a small variance when the $M \gg m$, and when $\Var\bigl(Y^{\text{lab}}_i - \widehat{Y}^{\text{lab}}_i\bigr)$ is small. Intuitively, this suggests that PB methods \textcolor{black}{can yield substantial efficiency gains in estimation and greater statistical power in hypothesis testing} \textcolor{black}{compared to CCA} when the proportion of individuals with recorded HbA1c levels is small, and when highly accurate predictions of HbA1c are available.

Building upon this estimator, recent work~\citep{angelopoulos2023ppi++} introduces a tuning parameter $\theta$ and considers the family:
\begin{equation}
\widehat{\mu}_{\text{PB}}(\theta) = \frac{1}{M} \sum^{M}_{i=1}  \widehat{Y}^{\text{unlab}}_i - \theta \left (\frac{1}{m} \sum^{m}_{i=1}  \widehat{Y}^{\text{lab}}_i - \frac{1}{m} \sum^{m}_{i=1}  Y^{\text{lab}}_i \right ).
\end{equation}
It has been proved in~\citet{angelopoulos2023ppi++} that a data-driven choice $\widehat{\theta} = M / (M + m)$ guarantees that $\widehat{\mu}_{\text{PB}}(M/(M + m))$ \textcolor{black}{to be at least as efficient as} $\widehat{\mu}_{\text{CC}}$.

More broadly, the PB inference literature typically focuses on the framework of convex M-estimation or, equivalently, Z-estimation. Mathematically, when the full data are observed, a Z-estimator $\widehat{\vbeta}_{\text{Full}}$ is defined as a solution to the estimating equation $\sum^N_{i=1} \vpsi(\vG_i, Y_i; \vbeta) = 0$, where $\vpsi$ is a $d$-dimensional estimating function. In particular, Equations~(\ref{eq:linear_ee}) and~(\ref{eq:logistic_ee}) correspond to the estimating equations for linear and logistic regression, respectively. When the outcome is partially observed, methods differ in how they incorporate the imputed outcomes $\widehat{Y}_i$. Excluding SynSurr~\citep{mccaw2024synthetic}, most existing methods fall into the following two categories.

\noindent \textbf{Augmenting information at the estimating-equation level.} These methods formulate their estimators by an estimating equation of the form
\begin{align}
&\frac{\sum^N_{i=1} R_i\vpsi(\vG_i, Y_i; \vbeta)}{\sum R_i}  
\nonumber - \vTheta \left (  
\frac{\sum^N_{i=1} R_i\vpsi(\vG_i, \widehat{Y}_i; \vbeta)}{\sum R_i}  
-  \frac{\sum^N_{i=1} (1 - R_i)\vpsi(\vG_i, \widehat{Y}_i; \vbeta)}{N - \sum R_i} 
\right )  
= \vzero \\ \label{eq:estimating_equation_level}
\end{align}

The first term corresponds to the CCA estimating equation, whereas the second term provides an unbiased estimate of zero. A tuning parameter $\vTheta \in \R^{d \times d}$ is introduced so that efficiency gains over CCA can be achieved when it is appropriately chosen. Existing methods in this category may differ from Equation~(\ref{eq:estimating_equation_level}) by treating the missingness indicator $R_i$ as fixed (i.e., not accounting for randomness in the missingness mechanism). They may also specify the tuning parameter $\vTheta$ differently. In particular, \citet{angelopoulos2023prediction, angelopoulos2023ppi++, miao2025assumption, gan2024prediction} can be viewed as special cases of this formulation. PPI++~\citep{angelopoulos2023ppi++} specifies the tuning parameter as $\vTheta = \theta \mathbf{I}$, where $\theta$ is a scalar tuning parameter and $\mathbf{I}$ is an identity matrix; when $\theta = 1$, this reduces to PPI~\citep{angelopoulos2023prediction}. PSPA~\citep{miao2025assumption} is another special case, with $\vTheta$ being restricted as a diagonal matrix.

\noindent \textbf{Augmenting information at the estimator level.} These methods define their estimators in the form of:
\begin{equation}
\label{eq:estimator_level}
\widehat{\vbeta}_{\text{PB}} = \widehat{\vbeta}_{\text{CC}} - \vTheta (\widehat{\vgamma}_{1} - \widehat{\vgamma}_2),
\end{equation}
where $\widehat{\vbeta}_{\text{CC}}$ is obtained by solving $\sum^N_{i=1} w_i R_i\vpsi(\vG_i, Y_i; \vbeta) = 0$, while $\widehat{\vgamma}_{1}$ and $\widehat{\vgamma}_2$ are obtained by solving $\sum^N_{i=1} w_i R_i\vpsi(\vG_i, \widehat{Y}_i; \vgamma) = 0$ and $\sum^N_{i=1} w_i (1 - R_i)\vpsi(\vG_i, \widehat{Y}_i; \vgamma) = 0$, respectively. $w_i$ denotes inverse probability weights used to account for the MAR assumption~\citep{chen2025unified,kluger2025prediction}. The estimator proposed in \citet{gronsbell2024another, chen2000unified} for linear regression is a special case of Equation~(\ref{eq:estimator_level}) with $w_i = 1$, as the missingness mechanism is assumed to be MCAR. In addition, the Predict-Then-Debias  (PTD) estimator~\citep{kluger2025prediction} and the Pattern-Stratified PPI (PS-PPI) estimator~\citep{chen2025unified} are also special cases, with $w_i$ equal to either the inverse sampling probability or the inverse probability of being observed. Because these methods share the same structure, it has been proved by~\citet{chen2025unified} that their frameworks are mathematically equivalent.

\noindent \textbf{Joint likelihood.} SynSurr~\citep{mccaw2024synthetic} does not fall into either category above and is exclusively designed for linear regression. Rather, it assumes the following \textcolor{black}{joint distribution between the true values $Y_i$ and the synthetic surrogates $\widehat{Y}_i$}:
\begin{equation}
\label{eq:synsurr_framework}
\begin{bmatrix}
Y \\ \widehat{Y}
\end{bmatrix} \Big| \vG
=
\begin{bmatrix}
\vG & 0 \\
0 & \vG
\end{bmatrix}
\begin{bmatrix}
\vbeta \\ \valpha
\end{bmatrix}
+
\begin{bmatrix}
\varepsilon_T \\ \varepsilon_S
\end{bmatrix},
\end{equation}
where the residuals $(\varepsilon_T, \varepsilon_S)^\top$ follow a bivariate normal distribution $\cN(\vzero, \Sigma_\varepsilon)$. According to~\cite{mccaw2024synthetic}, the computational procedure involves two regressions: 1) regressing $Y$ on $\vG$ to obtain $\widehat{\valpha}$; 2) regressing $\widehat{Y}$ on $(Y, \vG)$ to obtain $(\widehat{\delta}, \widehat{\vgamma})$, where $\widehat{\delta}$ and $\widehat{\vgamma}$ are coefficients for $Y$ and $\vG$, respectively. The resulting SynSurr estimator is $\widehat{\vbeta} = \widehat{\vgamma} + \widehat{\delta}\widehat{\valpha}$. A Wald test can be derived from the assumed joint association model in Equation~(\ref{eq:synsurr_framework}).

\section{Simulation Settings}
\label{sec:data_simulation_settings}

\subsection{Methods included in the simulation}\label{sec:methods_selected}

We evaluate PB methods alongside CC and MI methods under different outcome observation models. The methods are grouped as follows.

\noindent \textbf{Complete-case approaches.} \textbf{CCA} is included to fit the regression model using only individuals with observed outcomes. CCA weighted by the inverse of the estimated probability of being observed, denoted as \textbf{WCCA (est)}, is also considered, where inverse probability weights are obtained from a propensity score model fitted using both $\vX$ and $\vZ$. This represents the best practically achievable propensity score model. When the true outcome observation model depends on $Y_i$, such propensity score models are misspecified.

\noindent \textbf{Multiple imputation approaches.} Predictive mean matching (PMM)~\citep{rubin1986statistical,little1988missing} and random forest imputation~\citep{ho1995random} are used as predictive models. The missing outcomes are multiply imputed using these models, and inference is conducted based on Rubin’s rules. We use the implementation of MI provided by the \texttt{MICE 3.16.0}\footnote{\href{https://github.com/amices/mice}{https://github.com/amices/mice}} package in \texttt{R}~\citep{van2011mice}. MI with PMM is denoted as \textbf{MI}, while MI with random forest is denoted as \textbf{MI-RF}.

\noindent \textbf{PB approaches.} A naive imputation approach is first included, in which missing outcomes are directly imputed using machine learning predictions and regression is performed on the \textcolor{black}{expanded} completed dataset without bias correction; this approach is denoted as \textbf{Naive}. PB methods that augment information at different levels are included. Specifically, we include \textbf{PPI} and \textbf{PPI\texttt{++}}~\citep{angelopoulos2023prediction, angelopoulos2023ppi++}, which augment information at the estimating equation level, \textbf{PS-PPI}~\citep{chen2025unified}, which augments information at the estimator level. Finally, \textbf{SynSurr}~\citep{mccaw2024synthetic} is included as a joint association modeling method.

In addition to the methods from the above three approaches, we include an infeasible \textcolor{black}{but gold standard} full-data analysis, which assumes all outcomes are observed, denoted as \textbf{Full}, for reference.

%\subsection{Data simulation settings} 
\subsection{Data generating process}
% \label{sec:data_simulation_settings}

%\paragraph{Data generating process} 
We consider three simulation settings. The first is a linear regression setting in which each $Y_i$ is generated from
\begin{equation}
    \label{eq:linear_regression_cts}
    Y_{i} = \beta_0 + \beta_1X_{i1} + \beta_2X_{i2} + \beta_3 Z_{i1} + \beta_4 Z_{i2} + \epsilon_i,
\end{equation}
where
$$X_{i1} = \cos(Z_{i1}) + \tau_i, \; X_{i2} = \sin(Z_{i2}) + \nu_i,$$
and \textcolor{black}{$Z_{i1}$ and $Z_{i2}$ are independent and identically distributed as univariate normal distributions with mean zero and marginal variance $\sigma_z$, respectively.} The error terms are independently distributed as $\epsilon_i \sim \cN(0, \sigma^2)$, $\tau_i \sim \cN(0, \sigma_\tau^2)$, and $\nu_i \sim {\tt Exponential}(\lambda)$. \textcolor{black}{The scientific interest is to obtain consistent estimates and valid type I error controls of the coefficients $\beta_1$ and $\beta_2$ corresponding to the covariates of interest $X_1$ and $X_2$, respectively}; $Z_{1}$ and $Z_{2}$ serve as confounders.

Second, we consider another linear regression setting in which the confounders are dummy variables derived from a single categorical variable. Note that we do not include multiple categorical variables in this setting.
\begin{equation}
    \label{eq:linear_regression_dummy}
    Y_{i} = \beta_0 + \beta_1X_{i1} + \beta_2X_{i2} + \beta_3 Z_{i3} + \beta_4 Z_{i4} + \epsilon_i.
\end{equation}
Specifically, $Z_{i3} = I(Z_{i1} < 0) I(Z_{i2} < 0)$ and $Z_{i4} = I(Z_{i1} > 0)I(Z_{i2} > 0)$ are two dummy variables derived from a categorical variable indicating the quadrant in which $(Z_{i1}, Z_{i2})$ falls. We consider two distinct covariate settings in the linear regression simulations because the included methods exhibit different performance across these settings (see Section~\ref{sec:simulation-results}).

Finally, we consider a logistic regression setting, where $Y_i$ is binary and its conditional expectation satisfies
\begin{equation}
    \label{eq:logistic_regression}
    \text{logit}\{\E[Y_{i}|\vX_i, \vZ_i]\} = \beta_0 + \beta_1X_{i1} + \beta_2X_{i2} + \beta_3 Z_{i1} + \beta_4 Z_{i2},
\end{equation}
where the covariates are generated in the same way as above.

To construct synthetic surrogates for the PB methods, in the linear regression setting, we design the predicted outcome as
$$\widehat{Y}_i = Y_i + 2\sin(X_{i1}^2 + X_{i2}^3 + Z_{i1}^2 + Z_{i2}^2).$$
This construction induces covariate-dependent noise in the true outcome. \textcolor{black}{We defer a detailed discussion of how such covariate-dependent imputation errors influence type I error control and the consistency of PB estimators to Section~\ref{sec:theoretical_explanation}.} Synthetic surrogate design for logistic regression is detailed in Section~\ref{sec:synthetic_surrogates_logistic_regression} of the Supplementary Materials. The impact of imputation quality is examined in Section~\ref{sec:simulation_quality_of_imputation}.

\subsection{Observation mechanism for $Y$}

%\paragraph{Observation Mechanism for outcome $Y$} As mentioned, we assume that 

We assume only the outcome variable $Y_i$ is subject to missingness. To simulate the outcome observation process under different missing data mechanisms, we construct a series of outcome observation models for $R_i$. All models use a logistic specification; we present representative examples here and provide the full configurations in Section~\ref{sec:supplementary-setting-details} of the Supplementary Materials.

Under MCAR, the probability of observing $Y_i$ is constant:
\begin{equation*}
\Pr(R_i = 1) = \omega. \tag{MCAR}
\end{equation*}
Under MAR, missingness depends on observed covariates and/or confounders. For example:
\begin{align*}
&\text{logit}\{\Pr(R_i = 1)\} = \omega_0 + \omega_1 Z_{i2}, \tag{MAR1} \\
&\text{logit}\{\Pr(R_i = 1)\} = \omega_0 + \omega_1 X_{i1} + \omega_2 Z_{i2}. \tag{MAR2}
\end{align*}
In MAR1, the missingness depends solely on $Z_2$, whereas in MAR2, it depends on both $Z_2$ and the covariate of interest $X_1$.
%\begin{align*}
%& R_i \sim Z_{i2}, \tag{MAR1} \\
%& R_i \sim X_{i1} + Z_{i2}. \tag{MAR2}
% &\text{logit}\{\Pr(R_i = 1|\vZ_i)\} = \omega_0 + \omega_1 Z_{i2}, \tag{MAR1} \\
% &\text{logit}\{\Pr(R_i = 1|\vX_i, \vZ_i)\} = \omega_0 + \omega_1 X_{i1} + \omega_2 Z_{i2}. \tag{MAR2}
%\end{align*}
Under MNAR, we allow $R_i$ to depend on the unobserved outcome $Y$, with progressively richer dependence structures:
\begin{align*}
&\text{logit}\{\Pr(R_i = 1)\} = \omega_0 + \omega_1 Z_{i2} + \omega_2 g(Y_i), \tag{MNAR1} \\
&\text{logit}\{\Pr(R_i = 1)\} = \omega_0 + \omega_1 X_{i1} + \omega_2 Z_{i2} + \omega_3 g(Y_i), \tag{MNAR2} \\
&\text{logit}\{\Pr(R_i = 1)\} = \omega_0 + \omega_1 X_{i2} + \omega_2 Z_{i2} + \omega_3 g(Y_i), \tag{MNAR3} \\
&\text{logit}\{\Pr(R_i = 1)\} = \omega_0 + \omega_1 X_{i1} + \omega_2 X_{i2} + \omega_3 Z_{i2} + \omega_4 g(Y_i). \tag{MNAR4}
\end{align*}
Specifically, in MNAR1 the missingness depends on $Z_2$ and $Y$. In MNAR2 and MNAR3, $X_1$ and $X_2$ are added to the model, respectively, and both are included in MNAR4. \textcolor{black}{We introduce $g(Y_i)$ to accommodate two implementations used in our simulations. In the linear regression settings, we take $g$ to be an indicator function, for example, $g(Y_i) = I(Y_i < y_0)$ for a prespecified $y_0$. In the logistic regression setting, we take $g$ to be the identity function, i.e., $g(Y_i) = Y_i$. We refer the reader to Section~\ref{sec:supplementary-setting-details} for additional details.} 
%Finally, to simulate the MNAR mechanism, we allow $R_i$ to depend on the unobserved outcome $Y_i$. We first specify a model where missingness depends on both $Z_{2}$ and $Y$:
%\begin{align*}
%& R_i \sim Z_{i2} + Y_i, \tag{MNAR1}
% &\text{logit}\{\Pr(R_i = 1|\vZ_i, Y_i)\} = \omega_0 + \omega_1 Z_{i2} + \omega_2 Y_i. \tag{MNAR1}
%\end{align*}
%We then allow missingness to also depend on covariates of interest, i.e., $X_{1}$ and/or $X_{2}$:
%\begin{align*}
%& R_i \sim X_{i1} + Z_{i2} + Y_i, \tag{MNAR2} \\
%& R_i \sim X_{i2} + Z_{i2} + Y_i. \tag{MNAR3} \\
%& R_i \sim X_{i1} + X_{i2} + Z_{i2} + Y_i. \tag{MNAR4}
% &\text{logit}\{\Pr(R_i = 1|\vX_i, \vZ_i, Y_i)\} = \omega_0 + \omega_1X_{i1} + \omega_2Z_{i2} + \omega_3Y_i, \tag{MNAR2} \\
% &\text{logit}\{\Pr(R_i = 1|\vX_i, \vZ_i, Y_i)\} = \omega + 2.0X_{i2} + 0.8Z_{i2} + 0.5Y_i, \tag{MNAR3} \\
% &\text{logit}\{\Pr(R_i = 1|\vX_i, \vZ_i, Y_i)\} = \omega + 1.5X_{i1} + 1.5X_{i2} + 1.2Z_{i2} + 0.5Y_i. \tag{MNAR4}
%\end{align*}
Building upon MNAR4, we additionally introduce interaction effects between $Y$ and the covariates/confounders:
\begin{align*}
\text{logit}\{\Pr(R_i = 1)\} 
&= \omega_0 + \omega_1 X_{i1} + \omega_2 X_{i2} + \omega_3 Z_{i2} + \omega_4 g(Y_i) + \omega_5 X_{i1}g(Y_i), \tag{MNAR5} \\[6pt]
\text{logit}\{\Pr(R_i = 1)\} 
&= \omega_0 + \omega_1 X_{i1} + \omega_2 X_{i2} + \omega_3 Z_{i2} + \omega_4 g(Y_i) + \omega_5 X_{i2}g(Y_i), \tag{MNAR6} \\[6pt]
\text{logit}\{\Pr(R_i = 1)\} 
&= \omega_0 + \omega_1 X_{i1} + \omega_2 X_{i2} + \omega_3 Z_{i2} + \omega_4 g(Y_i) + \omega_5 Z_{i2}g(Y_i). \tag{MNAR7}
\end{align*}
Here, interaction terms between $X_1$ and $Y$, and between $X_2$ and $Y$, are included in MNAR5 and MNAR6, respectively. In MNAR7, an interaction term between $Z_2$ and $Y$ is included.

For each subject, we sample $R_i$ from one of these observation models and mask the corresponding $Y_i$ when $R_i = 0$. Directed acyclic graphs (DAGs) illustrating the full data-generating process under each outcome observation model are presented in Figure~\ref{fig:selection-model-dag}.

\begin{figure}[t]
\centering
\includegraphics[width=1\linewidth]{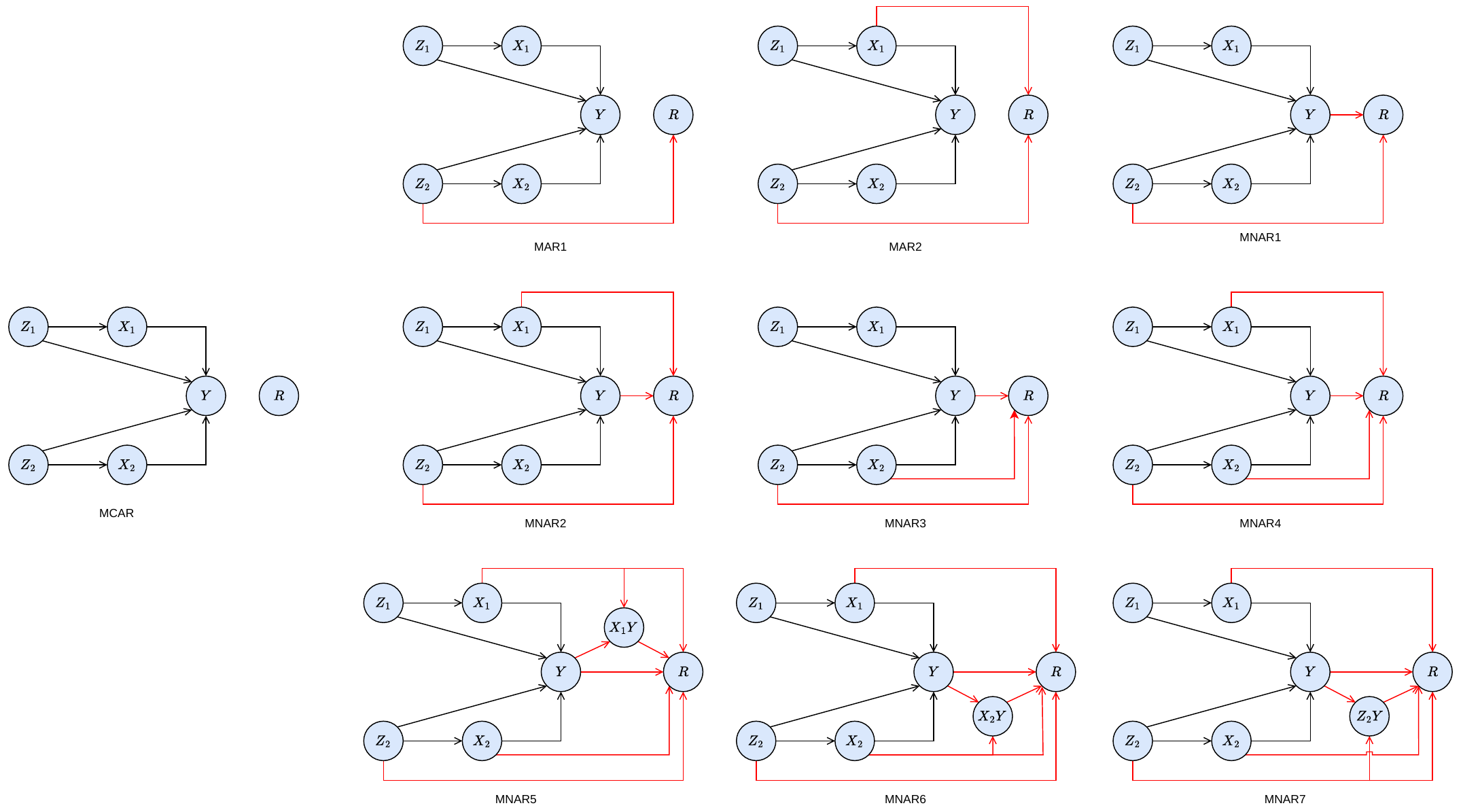}
\caption{Directed acyclic graphs (DAGs) of the full data-generating process under different outcome observation models. Red arrows indicate the missingness mechanism (i.e., which variables the missingness indicator depends on). Note that under the linear regression setting with a categorical confounder, $Z_1$ and $Z_2$ influence $Y$ via intermediate variables $Z_3$ and $Z_4$.}
\label{fig:selection-model-dag}
\end{figure}

\subsection{Evaluation metrics} We evaluate the performance of all included methods for estimating $\beta_1$ and $\beta_2$ using the following metrics: 1) bias, 2) mean squared error (MSE), 3) type I error, and 4) power. We take $\beta_1$ as an example to illustrate how these evaluation metrics are defined. Suppose that $L$ simulation replicates are generated, and in the $j$-th replicate let $\widehat{\beta}^{(j)}_1$ denote the point estimate of $\beta_1$, and let $p^{(j)}$ denote the $p$-value for testing $H_0: \beta_1 = 0$. We compute the type I error, i.e., the probability of rejecting $H_0$ when the true parameter satisfies $\beta_1 = 0$:
$$ \texttt{Type I Error} = \frac{1}{L} \sum^L_{j=1} I\left ( p^{(j)} < 0.05 \right ).$$
We follow~\citet{zhang1994adjusted} and use adjusted power. Specifically, we use the empirical 5th percentile of the $p$-values under the null setting (i.e., when the true value $\beta_1 = 0$), denoted by $p_{0.05}$, as a calibrated significance threshold. The adjusted power is then
$$ \texttt{Power} = \frac{1}{L} \sum^L_{j=1} I\left ( p^{(j)} < p_{0.05} \right ).$$
To evaluate point-estimation performance, we compute bias and MSE as
$$ \texttt{Bias} = \left \lvert \frac{1}{L} \sum^L_{j=1} \left ( \widehat{\beta}^{(j)}_1 - \beta_1 \right ) \right \rvert, \; \texttt{MSE} = \frac{1}{L} \sum^L_{j=1} \left ( \widehat{\beta}^{(j)}_1 - \beta_1 \right )^2.$$
Similar evaluation metrics are defined for $\beta_2$.

\section{Simulation Results}
\label{sec:simulation-results}

Table~\ref{tab:linear_regression_summary_continuous} and~\ref{tab:linear_regression_summary_dummy} present the linear regression simulation results of the included methods under different missingness mechanisms. \textcolor{black}{Results} on type I error and power under varying imputation quality are presented in Figures~\ref{fig:linear_regression_simulation_varying_quality_power} and~\ref{fig:linear_regression_simulation_varying_quality_type_i_error} for the continuous-confounder setting, and in Figures~\ref{fig:linear_regression_simulation_binary_z_varying_quality_power} and~\ref{fig:linear_regression_simulation_binary_z_varying_quality_type_i_error} for the categorical-confounder setting. Logistic regression simulation results are presented in Section~\ref{sec:binary-outcome-results} of the Supplementary Materials.

\subsection{Missing completely at random}

\noindent \textbf{Hypothesis testing.} Under MCAR, in both the continuous- and categorical-confounder settings \textcolor{black}{(Tables~\ref{tab:linear_regression_summary_continuous} and~\ref{tab:linear_regression_summary_dummy})}, most methods (e.g., CCA, WCCA (est), PPI, PPI++, SynSurr, and PS-PPI) achieve type I error rates close to the nominal 0.05 level when testing both $\beta_1$ and $\beta_2$. MI and MI-RF exhibit conservative type I error under the categorical-confounder setting, although MI-RF shows moderately inflated type I error of 0.20 and 0.19 for $\beta_1$ and $\beta_2$, respectively, when the confounders are continuous \textcolor{black}{(Table~\ref{tab:linear_regression_summary_continuous})}. In contrast, the Naive method shows severely inflated type I error (1.0 for both coefficients, regardless of \textcolor{black}{confounder types}). Among methods with valid type I error control, \textcolor{black}{according to Table~\ref{tab:linear_regression_summary_continuous}}, the PB methods (except PPI) show comparable performance, with power of about 0.97 for $\beta_1$ and around 0.86 for $\beta_2$ when the confounders are continuous; similar findings hold when the confounder is a single categorical variable \textcolor{black}{(Table~\ref{tab:linear_regression_summary_dummy})}. Overall, the PB approaches exhibit higher power than CCA and WCCA (est), while MI is uniformly less powerful than these approaches. If one ignores type I error inflation, the Naive estimator attains power close to 1.0 for both $\beta_1$ and $\beta_2$ across the two confounder settings.

\noindent \textbf{Point estimation} For point estimation, across both continuous and categorical confounders \textcolor{black}{(Tables~\ref{tab:linear_regression_summary_continuous} and~\ref{tab:linear_regression_summary_dummy}, respectively)}, the PB methods (except PPI) achieve the best overall MSE performance (around 0.07--0.13) in the MCAR setting, smaller than that of CCA, WCCA (est), and Naive. MI-RF shows the smallest MSE among all methods when the $\beta$s are under the null and the confounders are dummy variables derived from a single categorical variable. Overall, however, the two MI methods have larger MSE than the PB approaches and CCA/WCCA (est).

\subsection{Missing at random}

\noindent \textbf{Hypothesis testing.} Under MAR1 (missingness depends on $Z_2$) and MAR2 (missingness depends on both $X_1$ and $Z_2$), type I error patterns differ between the continuous- and categorical-confounder settings. Specifically, CCA and WCCA (est) maintain valid type I error control across scenarios, while the Naive method again shows anti-conservative type I error. MI and MI-RF also exhibit liberal type I error (around 0.07--0.08 and 0.25--0.49, respectively) when the confounders are continuous \textcolor{black}{(Table~\ref{tab:linear_regression_summary_continuous})}; under the categorical-confounder setting \textcolor{black}{(Table~\ref{tab:linear_regression_summary_dummy})}, they exhibit conservative type I error (0.01--0.02). Among PB inference approaches, PS-PPI continues to achieve valid type I error control because it incorporates propensity score models to account for the missingness mechanism. The remaining methods (PPI, PPI++, and SynSurr) show inflated type I error, except for $\beta_2$ under MAR1 when the confounders are continuous and for $\beta_1$ under MAR1 in the categorical-confounder setting. Among methods with valid type I error control, the PB methods achieve higher power for both $\beta_1$ and $\beta_2$ across both confounder settings when compared to CCA and WCCA.

\noindent \textbf{Point estimation.} Under MAR, PS-PPI is overall competitive in terms of MSE across both confounder settings. For continuous confounders, \textcolor{black}{from Table~\ref{tab:linear_regression_summary_continuous},} PS-PPI attains MSEs of 0.12 and 0.26 for $(\beta_1,\beta_2)$ under MAR1 and 0.20 and 0.36 under MAR2. For the categorical-confounder setting \textcolor{black}{(Table~\ref{tab:linear_regression_summary_dummy})}, PS-PPI attains MSEs of 0.07 and 0.14 under MAR1 and 0.13 and 0.21 under MAR2. When the true coefficients are under the null, MI-RF yields the smallest MSE$_0$ for the categorical-confounder setting (Table~\ref{tab:linear_regression_summary_dummy}), although its type I error is uniformly conservative.

\subsection{Missing not at random}

\noindent \textbf{Hypothesis testing.} When missingness depends on $Y$, type I error patterns differ between the two confounder settings. Specifically, for continuous confounders, \textcolor{black}{as shown in Table~\ref{tab:linear_regression_summary_continuous},} all included methods exhibit inflated type I error for both $\beta_1$ and $\beta_2$ in most missingness mechanisms settings. Even when some entries for a given method appear close to 0.05 for a particular MNAR mechanism or coefficient, supplementary results in Figure~\ref{fig:linear_regression_simulation_varying_quality_type_i_error} under alternative ML imputation designs show inflated type I error. For the categorical-confounder setting \textcolor{black}{according to Table~\ref{tab:linear_regression_summary_dummy}}, CCA, WCCA (est), and the two multiple-imputation methods achieve valid or conservative type I error control when the corresponding covariate of interest is not included in the outcome-observation model. For instance, under MNAR1 (missingness depends on $Z_2$ and $Y$ but not on $X_1$ or $X_2$), CCA and WCCA (est) have type I error close to 0.05 for both $\beta_1$ and $\beta_2$, while the two MI methods are conservative, with type I error close to 0.01. When missingness depends only on $X_1$ (MNAR2) or $X_2$ (MNAR3), these methods exhibit inflated type I error for the corresponding coefficient ($\beta_1$ for MNAR2; $\beta_2$ for MNAR3). By contrast, type I error for the other coefficient ($\beta_2$ for MNAR2; $\beta_1$ for MNAR3) show results similar to MNAR1. Finally, when both $X_1$ and $X_2$ are included in the observation model (MNAR4--MNAR7), they exhibit inflated type I error for both coefficients. On the other hand, \textcolor{black}{from Table~\ref{tab:linear_regression_summary_dummy},} the PB approaches show inflated type I error in nearly all MNAR scenarios under the categorical-confounder setting.

\noindent \textbf{Point estimation.} In terms of the point estimation, under the continuous confounder settings \textcolor{black}{(Table~\ref{tab:linear_regression_summary_continuous})}, both bias and MSE for $\beta_1$ and $\beta_2$ increase significantly for all the missing data methods when comparing to those from the MCAR and MAR settings. Under the categorical-confounder setting \textcolor{black}{(Table~\ref{tab:linear_regression_summary_dummy})}, for many settings where valid or approximately valid type I error controls are achieved for $\beta_1$ or $\beta_2$, the estimation bias is significantly higher when the coefficient is not under the null, compared to when the coefficient is under the null. For example, from \textcolor{black}{Table~\ref{tab:linear_regression_summary_dummy}}, under MNAR1, where CCA, WCCA (est) achieve valid type I error controls for both the coefficients, the estimation bias for CCA is 0.38 for $\beta_1$ and 0.13 for $\beta_2$ when they are under the null; those estimation bias for WCCA (est) are 0.44 for $\beta_1$, and 0.05 for $\beta_2$, respectively. Whereas, when $\beta_1 = 0.1$ and $\beta_2 = 0.1$, the estimation bias for CCA increases to 0.59 and 0.98; for WCCA (est) the numbers become 0.47 and 0.85. Similar observations are found for the two MI methods. This indicates that the validity of these methods may depend on whether the true parameters are under the null.

\subsection{Theoretical explanation of simulation results}
\label{sec:theoretical_explanation}

We now provide theoretical results that explain the type I error \textcolor{black}{and bias} patterns observed in the simulations. The key question is: \textbf{under what conditions do CCA and PB methods yield valid inference \textcolor{black}{and consistent estimation} when the missingness mechanism is MNAR?}

The simulation results show that under MCAR and MAR, CCA yields consistent estimation in both linear and logistic regression, consistent with well-established theoretical results~\citep{little2019statistical}. The theorems below characterize the MNAR case.

\begin{theorem}
\label{thrm:linear_regression_cca}
Suppose the true model is $Y_i = \vX_i\vbeta_{\vX}^\ast + \vZ_i\vbeta_{\vZ}^\ast + \epsilon_i, \epsilon_i \sim \cN(0, \sigma^2)$; $\vbeta_{\vX}^\ast$ is the parameter of interest. Let $R_i$ be the missingness indicator for $Y_i$. Consider the complete-case analysis (CCA) estimator $\widehat{\vbeta}_{\vX, \text{CC}}$ obtained by regressing $Y_i$ on $(\vX_i, \vZ_i)$ using only samples with $R_i = 1$:
\begin{enumerate}
    \item If the outcome $Y_i$ is at least missing at random, i.e., $Y_i \perp R_i \mid (\vX_i, \vZ_i)$, then $\widehat{\vbeta}_{\vX, \text{CC}}$ is consistent and asymptotically normal with mean $\vbeta_{\vX}^\ast$.
    
    \item If the outcome $Y_i$ is missing not at random, i.e., $Y_i \not\perp R_i \mid (\vX_i, \vZ_i)$, then $\widehat{\vbeta}_{\vX, \text{CC}}$ is consistent and asymptotically normal with mean $\vbeta_{\vX}^\ast$ under the following sufficient conditions, all of which must hold:
    \begin{enumerate}
        \item $\vbeta_{\vX}^\ast = \vzero$;
        \item $R_i$ does not depend on $\vX_i$ once $Y_i$ and $\vZ_i$ are given, i.e., $\vX_i \perp R_i \mid (Y_i, \vZ_i)$;
        \item $\vZ_i$ is a multi-dimensional one-hot encoded vector, i.e., only one component in $\vZ_i$ can be 1 and all the others equal 0.
    \end{enumerate}
\end{enumerate}
\end{theorem}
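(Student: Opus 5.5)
The plan is to identify the probability limit of the CCA estimator as the population least-squares projection on the complete-case subpopulation. By the law of large numbers and standard M-estimation arguments \citep{van2000asymptotic}, $\widehat{\vbeta}_{\text{CC}}$ converges to the solution $\vbeta^{\dagger}$ of
\[
\E\bigl[R_i \vG_i^\top (Y_i - \vG_i\vbeta)\bigr] = \vzero ,
\]
and it is asymptotically normal around $\vbeta^{\dagger}$ with sandwich variance. Hence both parts reduce to showing $\vbeta^{\dagger}_{\vX} = \vbeta^\ast_{\vX}$. Since $\vG_i = (1, \vX_i, \vZ_i)$ contains an intercept, when $\vZ_i$ is a full one-hot vector I would drop the intercept or one category to avoid collinearity. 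This reparametrization does not affect the $\vX$-coefficient.

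For part 1, I would substitute $Y_i = \vX_i\vbeta^\ast_{\vX} + \vZ_i\vbeta^\ast_{\vZ} + \epsilon_i$ and condition on $(\vX_i,\vZ_i)$. Under $Y_i \perp R_i \mid (\vX_i,\vZ_i)$ we have $\E[R_i\epsilon_i \mid \vX_i,\vZ_i] = \pi(\vX_i,\vZ_i)\,\E[\epsilon_i\mid\vX_i,\vZ_i] = 0$. Therefore $\vbeta^\ast$ solves the population equation, and uniqueness follows from positive definiteness of $\E[R_i\vG_i^\top\vG_i]$.

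For part 2, I would work under (a), so $Y_i = \vZ_i\vbeta^\ast_{\vZ} + \epsilon_i$ and hence $Y_i \perp \vX_i \mid \vZ_i$. Combining this with (b), $\vX_i \perp R_i \mid (Y_i,\vZ_i)$, the contraction property gives $\vX_i \perp (Y_i, R_i) \mid \vZ_i$. Consequently, on the selected population $\{R_i=1\}$ we still have $\vX_i \perp Y_i \mid \vZ_i$. Now use (c): because $\vZ_i$ is one-hot, any function of $\vZ_i$ is linear in $\vZ_i$. So $\E[Y_i \mid \vZ_i, R_i=1] = \vZ_i\vgamma$ for some $\vgamma$, where $\vgamma$ collects the stratum-specific complete-case means, which differ from $\vbeta^\ast_{\vZ}$. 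It follows that
\[
\E[Y_i \mid \vX_i,\vZ_i,R_i=1] = \E[Y_i\mid \vZ_i,R_i=1] = \vZ_i\vgamma
\]
is exactly linear in $\vG_i$ with zero $\vX$-coefficient. The conditional-mean model is thus correctly specified on the complete-case population, so $\vbeta^\dagger = (\vzero, \vgamma)$ solves the population normal equations, giving $\vbeta^\dagger_{\vX} = \vzero = \vbeta^\ast_{\vX}$. Asymptotic normality then follows from the usual sandwich argument. The residual $Y_i - \vZ_i\vgamma$ is no longer normal or homoskedastic, but a robust variance handles this, or the model-based variance remains valid for the $\vX$-component under independence.

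The main obstacle is the conditional-independence step together with the role of (c). I must verify carefully that selection on $R_i$ preserves $\vX_i\perp Y_i\mid\vZ_i$; this is where (b) is needed, and the contraction argument is what I would write out first. I must also show that, without (c), $\E[Y_i\mid\vZ_i,R_i=1]$ would be a nonlinear function of $\vZ_i$. In that case the misspecified linear projection could load on $\vX_i$ whenever $\vX_i$ is correlated with $\vZ_i$, which is the case in the simulations (e.g.\ $X_{i1}=\cos(Z_{i1})+\tau_i$). That contrast explains why saturation in $\vZ_i$ is required, and why the failure without (a) appears as the bias seen in the non-null simulations.
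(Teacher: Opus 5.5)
Your proposal is correct and follows essentially the same route as the paper's proof. Both identify the CCA limit as the complete-case population least-squares solution. Both then show that (a) and (b) make $\E[Y_i \mid \vX_i, \vZ_i, R_i = 1]$ free of $\vX_i$: you do this via contraction and weak union, the paper via an explicit Bayes-rule computation of $f(Y_i \mid \vX_i, \vZ_i, R_i = 1)$. Both finally use (c) to make that conditional mean exactly linear in $\vG_i$ with zero $\vX$-coefficient.

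Your handling of the intercept/one-hot collinearity is a small improvement, since the paper inverts $\E(\vG_1^\top \vG_1 \mid R_1 = 1)$ without comment. Your closing plan to show nonlinearity when (c) fails is not needed for a sufficiency statement.
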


Here, we provide further explanation of Theorem~\ref{thrm:linear_regression_cca} \textcolor{black}{when the outcome is MNAR}. \textcolor{black}{First, condition 2(a) and 2(b) require the covariate of interest $\vX_i$ be independent of the missingness indicator $R_i$ conditional on the outcome and confounders, and that $\vbeta_{\vX}^\ast = \vzero$. This indicates that when the conditional independence assumption between $\vX_i$ and $R_i$ holds, CCA can achieve valid type I error control and consistent estimation under the null, but it may be biased when the true effect is nonzero. This is confirmed empirically in our simulations for MNAR1--MNAR3, where $X_{i1}$ and/or $X_{i2}$ are excluded from the outcome observation model.} In particular, Tables~\ref{tab:linear_regression_summary_continuous} and~\ref{tab:linear_regression_summary_dummy} show that when the true parameters deviate from the null, CCA and WCCA produce substantially more biased estimates under \textcolor{black}{these outcome observation models}, even when their type I error remains near 0.05. \textcolor{black}{Second, Condition 2(c) requires the confounder to be a multidimensional one-hot encoded vector. This condition is satisfied either when a single categorical variable is included as a confounder and encoded using a one-hot representation, or when multiple categorical variables are treated as confounders but only their interaction terms are included in the model. If the main effects of more than one categorical variable are included, then $\vZ_i$ is generally not a one-hot encoded vector.} When confounders are continuous (Equation~(\ref{eq:linear_regression_cts})), the residual function $r(\vX, \vZ) = \E[Y \mid \vX, \vZ, R = 1] - \vG\vbeta^\ast$ is generally nonlinear in $\vZ$ and cannot be accommodated within a linear model. Consequently, $\widehat{\vbeta}_{\vX}$ is biased even when conditions 2(a) and 2(b) hold. This explains the inflated type I error of CCA and WCCA under MNAR in the continuous-confounder setting.

For logistic regression, \textcolor{black}{the sufficient conditions required to achieve valid type I error control are less restrictive}. Building on the core idea of~\citet{kundu2024framework}, we establish the following result.

\begin{theorem}
\label{thrm:logistic_regression_cca}
Suppose the true model is $\text{logit}(\E(Y_i \mid \vX_i, \vZ_i)) = \beta^\ast_0 + \vX_i\vbeta_{\vX}^\ast + \vZ_i\vbeta_{\vZ}^\ast$, \textcolor{black}{$Y_i$ is a binary outcome,} and $\vbeta_{\vX}^\ast$ is the parameter of interest. Let $R_i$ be the missingness indicator for $Y_i$. Consider the complete-case analysis (CCA) estimator $\widehat{\vbeta}_{\vX, \text{CC}}$ obtained by regressing $Y_i$ on $(\vX_i, \vZ_i)$ using only samples with $R_i = 1$:
\begin{enumerate}
\item If the outcome $Y_i$ is at least missing at random, i.e., $Y_i \perp R_i \mid (\vX_i, \vZ_i)$, then $\widehat{\vbeta}_{\vX, \text{CC}}$ is consistent and asymptotically normal with mean $\vbeta_{\vX}^\ast$.

\item If the outcome $Y_i$ is missing not at random, i.e., $Y_i \not\perp R_i \mid (\vX_i, \vZ_i)$, then $\widehat{\vbeta}_{\vX, \text{CC}}$ is consistent and asymptotically normal with mean $\vbeta_{\vX}^\ast$ when $\vX_i \perp R_i \mid (Y_i, \vZ_i)$ \textcolor{black}{and $\vZ_i \perp R_i \mid Y_i$.}
\end{enumerate}
\end{theorem}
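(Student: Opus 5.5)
The plan is to identify the probability limit of the complete-case logistic estimator as the solution of the population score equation restricted to $R=1$, namely $\E[R\,\vG\{Y-\text{expit}(\vG\vbeta)\}]=\vzero$. Asymptotic normality then follows from standard M-estimation theory (\citet{van2000asymptotic}) once we know the limit. So everything reduces to showing that the complete-case conditional model $\Pr(Y=1\mid \vX,\vZ,R=1)$ is again a logistic model in $\vG$ whose $\vX$-coefficient equals $\vbeta_{\vX}^\ast$. Then the estimating equation, which is correctly specified for the selected subpopulation, has a unique root whose $\vX$-block is $\vbeta_{\vX}^\ast$.

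Part 1 is immediate. If $Y\perp R\mid(\vX,\vZ)$, then $\Pr(Y=1\mid\vX,\vZ,R=1)=\Pr(Y=1\mid\vX,\vZ)=\text{expit}(\vG\vbeta^\ast)$. Hence $\E[R\vG\{Y-\text{expit}(\vG\vbeta^\ast)\}]=\E[R\vG\{\E(Y\mid\vX,\vZ,R=1)-\text{expit}(\vG\vbeta^\ast)\}]=\vzero$. Identifiability follows from positive definiteness of $\E[R\,\vG^\top\vG\,\text{expit}'(\vG\vbeta)]$.

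For part 2 we follow the odds-ratio argument of \citet{kundu2024framework}. By Bayes' rule,
\[
\frac{\Pr(Y=1\mid\vX,\vZ,R=1)}{\Pr(Y=0\mid\vX,\vZ,R=1)}=\frac{\Pr(R=1\mid Y=1,\vX,\vZ)}{\Pr(R=1\mid Y=0,\vX,\vZ)}\cdot\frac{\Pr(Y=1\mid\vX,\vZ)}{\Pr(Y=0\mid\vX,\vZ)}.
\]
The condition $\vX\perp R\mid(Y,\vZ)$ reduces the first ratio to $\Pr(R=1\mid Y=1,\vZ)/\Pr(R=1\mid Y=0,\vZ)$. The condition $\vZ\perp R\mid Y$ further reduces it to a constant $c=\Pr(R=1\mid Y=1)/\Pr(R=1\mid Y=0)$. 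Taking logs gives
\[
\text{logit}\,\Pr(Y=1\mid\vX,\vZ,R=1)=(\beta_0^\ast+\log c)+\vX\vbeta_{\vX}^\ast+\vZ\vbeta_{\vZ}^\ast .
\]
The complete-case model is therefore correctly specified with only the intercept shifted, and the part 1 argument applies verbatim to the shifted parameter $\tilde\vbeta=(\beta_0^\ast+\log c,\vbeta_{\vX}^{\ast\top},\vbeta_{\vZ}^{\ast\top})^\top$. Consistency and asymptotic normality of $\widehat{\vbeta}_{\vX,\text{CC}}$ centered at $\vbeta_{\vX}^\ast$ then follow.

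The main obstacle is justifying that the ratio collapses to a constant. This needs $\Pr(R=1\mid Y=y,\vX,\vZ)$ to be a function of $y$ alone. The two stated conditional independences give this by chaining: the first removes $\vX$ given $(Y,\vZ)$, and the second removes $\vZ$ given $Y$. One must also check that conditioning on $(\vX,\vZ)$ in $\vZ\perp R\mid Y$ is legitimate after $\vX$ has been dropped. We also need positivity, i.e. $\Pr(R=1\mid Y=y)>0$ for $y=0,1$, and a nonsingular complete-case information matrix, both of which we will state as regularity assumptions. Finally, unlike Theorem~\ref{thrm:linear_regression_cca}, no null restriction $\vbeta_{\vX}^\ast=\vzero$ is needed, because the MNAR selection acts multiplicatively on the odds and is absorbed entirely into the intercept.
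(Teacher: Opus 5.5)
Your proposal is correct and follows essentially the same route as the paper. The paper also uses the \citet{kundu2024framework} Bayes-rule representation, in which the complete-case logit equals the true linear predictor plus $\log r(\vX,\vZ)$; it notes $r \equiv 1$ under MAR, and uses the two conditional independences to make $r$ the constant $\Pr(R=1\mid Y=1)/\Pr(R=1\mid Y=0)$, which is absorbed into the intercept. Your explicit population score equation, positivity assumption, and nonsingular complete-case information matrix simply spell out the ``standard regularity conditions'' the paper invokes without stating.
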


In contrast to Theorem~\ref{thrm:linear_regression_cca}, Theorem~\ref{thrm:logistic_regression_cca} does not require $\vbeta_{\vX}^\ast = \vzero$ or the confounder to be a single categorical variable. The key difference arises because the logistic model absorbs the MNAR-induced bias as an additive term $\log r(\vX, \vZ)$ on the logit scale; when $r(\vX, \vZ)$ is a constant of $\vX$ and $\vZ$, this term is absorbed by the intercept without affecting $\vbeta_{\vX}^\ast$ \textcolor{black}{when $\vX_i$ and $\vZ_i$ are independent}~\citep{kundu2024framework}. This explains why \textcolor{black}{CC approaches} maintain valid type I errors for $\beta_1$ under MNAR1 and MNAR3 in the logistic regression simulations (Section~\ref{sec:binary-outcome-results}), \textcolor{black}{where both $X_1$ and $Z_1$, the parent node of $X_1$, do not enter the outcome observation model.} \textcolor{black}{On the other hand, although $X_2$ does not enter the outcome observation model in MNAR1 and MNAR2, the CC approaches exhibit inflated type I error since $Z_2$, an ancestor node of $X_2$, enters the outcome observation model.}

Finally, we present Theorem~\ref{theorem:linear_regression_pb} to characterize when estimator-level PB methods remain valid under MNAR in linear regression.

%Regarding the PB approaches, we have observed from Table~\ref{tab:linear_regression_summary_continuous} and~\ref{tab:linear_regression_summary_dummy} that they achieve inflated type I error controls even when CCA and WCCA (est) remain valid. However, from Figure~\ref{fig:linear_regression_simulation_varying_quality_type_i_error} and~\ref{fig:linear_regression_simulation_binary_z_varying_quality_type_i_error} it can be seen that PB approaches can stay consistent in terms of valid type I error controls with CCA and WCCA (est) when the design of the synthetic surrogates change. In Theorem~\ref{theorem:linear_regression_pb}, we provide a theoretical explanation for estimator level PB methods. The proof can be found in Section~\ref{sec:proof_theorem_linear_regression_pb} of the Supplementary Materials. Specifically, in addition to the sufficient conditions of CC approaches, one additional condition regarding the machine learning imputation is required in order to guarantee the consistency of the PB methods under either MAR or MNAR.

\begin{theorem}
\label{theorem:linear_regression_pb}
In the data setting of Theorem~\ref{thrm:linear_regression_cca}, the estimator-level PB inference estimator $\widehat{\vbeta}_{\text{PB}}$ in the form of Equation~(\ref{eq:estimator_level}) are consistent and asymptotically normal with mean $\vbeta^\ast$ when the sufficient conditions of $\widehat{\vbeta}_{\text{CC}}$, and $\widehat{Y}_i$ can be decomposed into $\widehat{Y}_i = \vG_i\tilde{\vbeta} + \tau_{i,\text{ML}}$, where $\tau_{i,\text{ML}}$ is independent of $R_i$.
\end{theorem}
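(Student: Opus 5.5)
The approach is to show that, with $\vTheta$ converging to a fixed matrix, the correction term $\widehat{\vgamma}_1 - \widehat{\vgamma}_2$ in Equation~(\ref{eq:estimator_level}) converges to $\vzero$, and that its randomness enters only as a mean-zero asymptotically normal fluctuation. Then $\widehat{\vbeta}_{\text{PB}}$ inherits the probability limit $\vbeta^\ast$ of $\widehat{\vbeta}_{\text{CC}}$, which Theorem~\ref{thrm:linear_regression_cca} provides under its sufficient conditions. The restriction to $\vbeta_{\vX}$ is then immediate.

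\textbf{Step 1 (limits).} Write $\widehat{\vgamma}_1$ and $\widehat{\vgamma}_2$ as weighted least-squares solutions on the labeled and unlabeled subsets: $\widehat{\vgamma}_1 = (\sum w_iR_i\vG_i^\top\vG_i)^{-1}\sum w_iR_i\vG_i^\top\widehat{Y}_i$, and analogously $\widehat{\vgamma}_2$ with $1-R_i$. Substitute $\widehat{Y}_i = \vG_i\tilde{\vbeta} + \tau_{i,\text{ML}}$. This gives $\widehat{\vgamma}_1 = \tilde{\vbeta} + (\sum w_iR_i\vG_i^\top\vG_i)^{-1}\sum w_iR_i\vG_i^\top\tau_{i,\text{ML}}$, and similarly for $\widehat{\vgamma}_2$.

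\textbf{Step 2 (the $\tau$ terms).} By the law of large numbers each $\tau$ term converges to $\E[wR\vG^\top\vG]^{-1}\E[wR\vG^\top\tau]$ (respectively with $1-R$). Using $\tau_{\text{ML}}\perp R$, I would like to factor $\E[wR\vG^\top\tau]$. The natural reading of the assumption is that $\tau_{\text{ML}}$ is independent of $(R,\vG)$ jointly, or at least that $\E[\tau\mid\vG,R]=\E[\tau\mid\vG]=c$ is constant. Under that reading, $\E[wR\vG^\top\tau] = c\,\E[wR\vG^\top]$. Because $\vG$ contains the intercept, $\E[wR\vG^\top\vG]^{-1}\E[wR\vG^\top]\,c = c\,e_1$ exactly, and the same holds for the unlabeled subset. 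Hence $\widehat{\vgamma}_1-\widehat{\vgamma}_2 \to \tilde{\vbeta}+ce_1-\tilde{\vbeta}-ce_1=\vzero$. This holds whatever the weights $w_i$ are, and regardless of whether $R$ depends on $Y$ (MNAR), since the $\vG_i\tilde{\vbeta}$ part is reproduced exactly on any subsample. Consistency of $\widehat{\vbeta}_{\text{PB}}$ for $\vbeta^\ast$ follows by Slutsky.

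\textbf{Step 3 (asymptotic normality).} Stack the three estimating equations for $(\vbeta,\vgamma_1,\vgamma_2)$, including the propensity-score equation if the $w_i$ are estimated. Apply standard Z-estimator theory \citep{van2000asymptotic}: the stacked estimator is jointly asymptotically normal with mean $(\vbeta^\ast,\tilde{\vbeta}+ce_1,\tilde{\vbeta}+ce_1)$. The delta method then gives asymptotic normality of $\widehat{\vbeta}_{\text{CC}}-\vTheta(\widehat{\vgamma}_1-\widehat{\vgamma}_2)$ centered at $\vbeta^\ast$. The first component is centered correctly by Theorem~\ref{thrm:linear_regression_cca}, which under MNAR requires conditions 2(a)–(c) so that the complete-case regression of $Y$ on $\vG$ has limit $\vbeta^\ast$. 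For a data-driven $\widehat{\vTheta}\to\vTheta$, the term $(\widehat{\vTheta}-\vTheta)(\widehat{\vgamma}_1-\widehat{\vgamma}_2)=o_p(N^{-1/2})$ is negligible.

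The main obstacle is Step 2: pinning down what ``$\tau_{i,\text{ML}}$ independent of $R_i$'' must mean for the factorization to go through. Marginal independence alone does not remove $\E[R\vG^\top\tau]$ when $\tau$ is correlated with $\vG$. I would therefore state explicitly that the condition is $\E[\tau_{\text{ML}}\mid \vG,R]$ constant, or equivalently take $\tilde{\vbeta}$ as the projection of $\widehat{Y}$ onto $\vG$ with residual independent of $(\vG,R)$. I would remark that the paper's simulated surrogate $\widehat{Y}=Y+2\sin(\cdot)$ violates this, which explains the inflated type I error.
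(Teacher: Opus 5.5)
Your proposal is correct and follows the paper's argument: plug $\widehat{Y}_i=\vG_i\tilde{\vbeta}+\tau_{i,\text{ML}}$ into the labeled and unlabeled auxiliary regressions, show both converge to $\tilde{\vbeta}$ shifted by $\E[\tau_{\text{ML}}]$ in the intercept, and inherit the limit of $\widehat{\vbeta}_{\text{CC}}$ from Theorem~\ref{thrm:linear_regression_cca}; your explicit requirement that $\E[\tau_{\text{ML}}\mid\vG,R]$ be constant is what the paper's proof relies on without saying so when it replaces $\E[\vG^\top\tau_{\text{ML}}\mid R=1]$ by $\E[\tau_{\text{ML}}]$, and marginal independence from $R$ alone would indeed not suffice. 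Your stacked Z-estimator step supplies the asymptotic normality that the paper only asserts, though under MNAR only the $\vX$-block of $\widehat{\vbeta}_{\text{CC}}$ converges to $\vbeta^\ast_{\vX}$ (the intercept and $\vZ$-block converge to $\vbeta^\triangle$), and that block is all you need because the correction term vanishes.
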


Theorem~\ref{theorem:linear_regression_pb} shows that PB methods require an additional condition beyond what CCA needs: the ML imputation error $\tau_{i,\text{ML}}$ must be independent of the missingness indicator $R_i$. This condition ensures that the auxiliary term $\widehat{\vgamma}_1 - \widehat{\vgamma}_2$ in Equation~(\ref{eq:estimator_level}) converges to zero. When $\tau_{i,\text{ML}}$ depends on $R_i$, the labeled- and unlabeled-subset regressions on $\widehat{Y}_i$ converge to different limits, inducing bias. This explains why PB methods can exhibit inflated type I error under MNAR even when CCA remains valid: the synthetic surrogate $\widehat{Y}_i = Y_i + 2\sin(X_{i1}^2 + X_{i2}^3 + Z_{i1}^2 + Z_{i2}^2)$ used in our main simulations has an imputation error that depends on covariates entering the observation model, thereby violating the required independence condition. Supplementary Figures~\ref{fig:linear_regression_simulation_varying_quality_type_i_error} and~\ref{fig:linear_regression_simulation_binary_z_varying_quality_type_i_error} confirm that alternative surrogate designs satisfying the independence condition restore valid type I error for the PB methods.

In practice, the independence condition $\tau_{i,\text{ML}} \perp R_i$ is difficult to assess because the missingness mechanism is typically unknown. \textcolor{black}{However, relative to CCA, PB methods rely on a stronger set of conditions: CCA requires conditions only on the analysis and outcome observation models, whereas PB methods also require a condition on the ML imputation model. As such, when valid type I error control is the primary concern, CCA may be the more reliable choice; PB methods are more attractive under MCAR or MAR, where they can provide greater estimation efficiency and greater statistical power.}

\section{Lab Biomarker GWAS in \textit{All of Us}}
\label{sec:gwas}

To further evaluate the performance of existing PB methods in large-scale EHR-linked biobank analyses, we apply these methods to conduct GWAS of 6 laboratory biomarkers and compare the results with publicly available GWAS summary statistics. Specifically, we perform GWAS of WBC count, HbA1c, alanine aminotransferase (ALT), albumin, CRP, and vitamin D levels using data from the NIH \textit{All of Us} Research Program (AoU)~\citep{all2019all}, a large-scale database containing medical records of patients across the US and their biospecimens. The Curated Data Repository version 8 of the data is used in this study.

\subsection{\textcolor{black}{Data preparation}}
\label{sec:data_preparation}
\noindent \textbf{\textcolor{black}{Analysis data preparation.}} To perform the GWAS, we first identify 310,465 individuals \textcolor{black}{from the AoU data} who are older than 18 years, male or female sex assigned at birth, have both EHR data and short-read whole-genome sequencing (WGS) available, and have at least one measurement of body mass index (BMI), weight, and height. \textcolor{black}{As discussed below, BMI, weight, and height are used to predict missing biomarkers and are therefore required to be fully observed.} Among these individuals, we retain those of European genetic ancestry with available socioeconomic status (SES) data and at least one Systematized Nomenclature of Medicine (SNOMED)-coded diagnosis in their records, yielding 150,429 individuals. Trait-specific analytic cohorts are constructed from these individuals. Specifically, for each trait, we first assign a predictive biomarker to enhance the performance of the ML models: neutrophils count for WBC, glucose level for HbA1c, aspartate aminotransferase (AST) for ALT, CRP for albumin, albumin for CRP, and calcium for vitamin D. As a result, for each trait, we restrict the analytic cohort to individuals with at least one recorded measurement of the corresponding predictive biomarkers. The resulting sample sizes for analytic cohorts in each trait are summarized in Table~\ref{tab:gwas_summary}. For each outcome, we use the first recorded measurement in a patient’s medical record as the outcome value. We apply an inverse normal transformation to each outcome. Detailed information on the selected outcomes and other biomarkers used in this work is summarized in Table~\ref{tab:aou_trait_units_fields}. We adjust the GWAS models for age, sex, and the first five genetic principal components (PCs).

\noindent \textbf{\textcolor{black}{ML model-fitting data preparation.}} To generate ML-based imputations to be used in the PB methods, we \textcolor{black}{first identify} 382,475 individuals \textcolor{black}{in the AoU data} who meet the same eligibility criteria as the analytic cohorts, except that the WGS data are not required. \textcolor{black}{Note that we will exclude individuals already included in the analytic cohort before fitting ML models for each trait to ensure that the analytic cohort and the ML model-fitting cohort are independent.} Similarly, only individuals with available SES data and at least one SNOMED-coded diagnosis in their records, resulting in 257,832 individuals. Note that we do not filter the ML model-fitting cohort by genetic ancestry.  We extract their diagnosis records coded by SNOMED and use them as predictors. Specifically, each individual's diagnosis history is encoded as a high-dimensional binary vector, with each dimension representing a SNOMED-coded disease and a value of 1 indicating that the person has been diagnosed with this disease in the medical history (0 otherwise). We retain only conditions with prevalence greater than 5\%, resulting in 160 SNOMED-coded diseases. Additional predictors include age, sex, race, ethnicity, first recorded height, weight, and BMI, and neighborhood SES variables including median income and the proportion of residents without health insurance. For each trait, we also include its corresponding predictive biomarker as an additional predictor. The sample size of each ML model-fitting set are summarized in Table~\ref{tab:gwas_summary}. For each trait, we randomly split the ML model-fitting dataset into training and validation sets in a 4:1 ratio. Random forest, gradient boosting regression tree, histogram-based gradient boosting regression tree, and XGBoost~\citep{chen2016xgboost} are fitted, and the best model measured by $R^2$ on the validation set is selected to generate imputations for outcomes in the analytic cohort. Figure~\ref{fig:gwas-pipeline} summarizes the preprocessing pipelines for both the analytic and ML training data.

\noindent \textbf{\textcolor{black}{Genotype data preparation.}} In terms of the genotyping array data, we only keep genetic variants with at least one non-reference call, call rate $> 95\%$, minor allele frequency (MAF) $> 0.5\%$, and Hardy–Weinberg equilibrium $p \geq 1 \times 10^{-8}$. These criteria yield 158,388 variants across all traits. Within each trait-specific analytic cohort, we further exclude variants with MAF less than 1\%. 

\subsection{\textcolor{black}{Evaluation settings}}

We include CCA, WCCA, PS-PPI, and SynSurr, which cover both CC and PB approaches, with and without inverse probability weighting to account for the MAR mechanism, in our evaluation. To obtain inverse probability weights for WCCA and PS-PPI, we fit logistic regression propensity score models including the same predictors used in the ML models, except for the SNOMED-coded diagnosis variables. We evaluate the included methods from two perspectives. First, we compare SynSurr with CCA, and compare PS-PPI with WCCA. Specifically, we evaluate: 1) the number of genome-wide significant (GWS) single-nucleotide polymorphisms (SNPs), 2) the difference in $\beta$ estimates for each genetic variant, and 3) the ratio of standard errors. Second, we compare the GWAS results from each method with external reference GWAS summary statistics. For HbA1c and CRP, we use~\citet{loya2025scalable} as the reference; for ALT, WBC, and albumin, we use~\citet{verma2024diversity}; and for vitamin D, we use~\citet{manousaki2020genome}. We match the GWAS results between reference GWAS summary statistics and results from the included methods using \texttt{pyranges}~\citep{stovner2020pyranges} based on the criteria that the two genetic variants are within 250,000 base pairs. We then report the number of rediscovered GWS SNPs for each method.

\subsection{\textcolor{black}{GWAS results and comparison}} 

\noindent \textbf{\textcolor{black}{Descriptive analysis results.}} Table~\ref{tab:summary_statistics_traits} summarizes the characteristics of the analytic samples for each trait. Note that the International Classification of Diseases (ICD) codes (both versions 9 and 10) of the common phenotypes presented can be found in Table~\ref{tab:phenotype_icd_prefix}. It can be seen that, first, \textcolor{black}{the proportion of missingness differs across biomarker traits. Specifically, WBC has the lowest proportion of missingness, with 46,226 of 47,970 (96.3\%) participants having observed values. In contrast, CRP and vitamin D have among the highest proportions of missingness, with only 25,455 of 97,210 (26.2\%) and 27,172 of 109,951 (24.7\%) participants having observed values, respectively. These results indicate that, under clinically informative observation processes~\citep{du2024new,yang2026joint}, missingness patterns vary substantially across biomarkers.} Second, within each trait, population characteristics differ between the complete-case subset and the full analytic sample. More importantly, the complete-case subsets tend to represent individuals with poorer health status compared to the broader analytic samples used by PB methods, and this discrepancy becomes more pronounced as the proportion of missing outcomes increases. For example, for WBC, where the incomplete sample size (1,744) is substantially smaller than the complete-case subset (46,226), disease prevalences are similar between the two samples: in the complete-case subset, prevalences of depression, hypertension, diabetes, anxiety, and obesity are 38.1\%, 52.5\%, 20.7\%, 42.2\%, and 38.3\%, respectively, compared with 37.7\%, 52.5\%, 20.6\%, 41.9\%, and 38.1\% in the PB analytic sample. In contrast, for HbA1c, where the incomplete-case subset exceeds the complete-case subset, the complete-case subset exhibits remarkably higher prevalence of obesity-related conditions, including hypertension (63.8\%), diabetes (32.0\%), and obesity (46.7\%), compared with the full analytic sample (51.4\%, 19.8\%, and 35.5\%, respectively). Similar patterns are observed for mental health diagnoses. These findings suggest that restricting analysis to complete cases may introduce selection bias by disproportionately focusing on individuals with poorer health status. On the other hand, PB methods incorporate individuals with missing target traits, thereby enabling inference over a more representative population.

\noindent \textbf{\textcolor{black}{GWAS results.}} Table~\ref{tab:gwas_summary} and Figures~\ref{fig:beta-diff-synsurr-cca}–\ref{fig:se-ratio-psppi-wcca} present the GWAS results for all traits. It can be observed that, first, for ALT, PB methods identify more GWS variants (11 for SynSurr and 6 for PS-PPI) than CCA (6 GWS variants) and WCCA (2 GWS variants), likely due to the fact of relatively better quality of synthetic surrogates ($R^2 = 0.579$). For the remaining traits, the number of significant SNPs is similar across methods, \textcolor{black}{which may be due to the underlying MNAR mechanism in the data.} Second, as shown in Figures~\ref{fig:beta-diff-synsurr-cca} and~\ref{fig:beta-diff-psppi-wcca}, differences in $\widehat{\beta}$ between SynSurr and CCA, as well as PS-PPI and WCCA, are centered near zero for most traits. Exceptions arise for ALT and vitamin D when comparing PS-PPI to WCCA. One possible explanation is that the missingness mechanism for these traits is MNAR. As shown in Table~\ref{tab:linear_regression_summary_continuous}, under MNAR settings, the biases of the WCCA and PS-PPI estimates differ significantly in magnitude. Third, as shown in Figures~\ref{fig:se-ratio-synsurr-cca} and~\ref{fig:se-ratio-psppi-wcca}, SynSurr and PS-PPI consistently improve estimation efficiency relative to CCA and WCCA, respectively, although the magnitude of improvement varies across traits. In particular, SynSurr yields the largest efficiency gains for HbA1c, whereas improvements are more marginal for ALT, WBC, and albumin, where the incomplete-case subsets are significantly smaller than the complete-case subsets. PS-PPI similarly provides greater efficiency gains for HbA1c and ALT. Similar findings can be found within GWS variants. Fourth, when compared with external reference GWAS summary statistics, most GWS variants identified by the included methods are rediscovered signals from existing reports. Notable exceptions include vitamin D, where PS-PPI identifies 32 GWS variants absent from the reference study, and HbA1c, for which all evaluated methods identify additional GWS variants relative to the reference results.
 
\section{Discussion}
\label{sec:discussion}

This paper presents a statistical review of PB methods and a comparative evaluation of these methods with existing classical approaches using both simulated and AoU data. From the simulation, PB methods \textcolor{black}{offer increased statistical power and better estimation efficiency relative} to CCA and WCCA when the missing data mechanism assumptions are satisfied. Otherwise, they require more restrictive conditions compared to CC approaches to achieve valid type I error controls in certain regression scenarios. In GWAS applications using AoU data, PB methods improve estimation efficiency relative to CCA and WCCA while replicating previously reported GWS variants. More importantly, by incorporating individuals with missing traits, PB methods extend inference beyond the complete-case subset to a more representative study population. 

\textcolor{black}{More specifically, the evaluation provides the following insights.} First, in simulation studies across an encompassing set of outcome observation process models, the Naive method, which directly imputes missing cells, consistently fails to control type I error even under MCAR, for both continuous and binary outcomes. \textcolor{black}{This finding suggests that direct imputation without debiasing as is required in PB methods should not be used for inference.} In addition, when the missing data mechanism assumptions hold, both CC and PB approaches achieve valid type I error controls, and PB methods improve statistical power compared to CCA and WCCA, indicating that incorporating machine learning predictions can enhance \textcolor{black}{statistical power in hypothesis testing and improve estimation efficiency.}

% Specifically, we prove that under MNAR in linear regression, CCA can maintain valid type I error for testing a covariate's effect when that covariate does not enter the observation model and confounders are categorical (Theorem~\ref{thrm:linear_regression_cca}); PB methods require an additional decomposability condition on the ML predictions, namely, that the imputation error is independent of the missingness indicator, to achieve the same validity (Theorem~\ref{theorem:linear_regression_pb}). For logistic regression, CCA remains valid under MNAR whenever the covariate of interest is conditionally independent of missingness given the outcome and confounders (Theorem~\ref{thrm:logistic_regression_cca}).

When assumptions of the PB methods deviate from the true missing data mechanisms, valid type I error control is not guaranteed, and may require stronger conditions than those needed for CC approaches. \textcolor{black}{In particular, in the linear regression setting, Theorem~\ref{thrm:linear_regression_cca} shows that CC approaches can maintain valid type I error control when testing the effect of the covariate of interest in a linear regression model with a categorical confounder (Equation~(\ref{eq:linear_regression_dummy})). In contrast, the PB methods achieve valid type I error under the same setting when the imputation error is independent of the missingness (Theorem~\ref{theorem:linear_regression_pb}).} Nonetheless, both CC and PB approaches may yield biased estimates when the true parameter deviates from the null. In logistic regression with multiple confounders (Equation~(\ref{eq:logistic_regression})), CCA remains valid under MNAR whenever the covariate of interest is conditionally independent of missingness given the outcome and confounders (Theorem~\ref{thrm:logistic_regression_cca}). Whereas the PB methods are generally not guaranteed to remain valid when the outcome observation mechanism is misspecified.

% can be used when fitting linear regression with a categorical confounder (Equation~(\ref{eq:linear_regression_dummy})), type I error for testing a covariate's effect can be maintained by CC approaches when that covariate does not enter the observation model. In contrast, PB methods require an additional decomposability condition on the ML predictions, namely, that the imputation error is independent of the missingness indicator, to achieve valid type I error controls, as stated in Theorem~\ref{theorem:linear_regression_pb}. Nonetheless, both CC and PB approaches may yield biased estimates when the true parameter deviates from the null. In logistic regression with multiple confounders (Equation~(\ref{eq:logistic_regression})), CCA remains valid under MNAR whenever the covariate of interest is conditionally independent of missingness given the outcome and confounders (Theorem~\ref{thrm:logistic_regression_cca}). Whereas the PB methods are generally not guaranteed to remain valid when the outcome observation mechanism is misspecified.

\begin{figure}[H]
\centering
\includegraphics[width=0.9\linewidth]{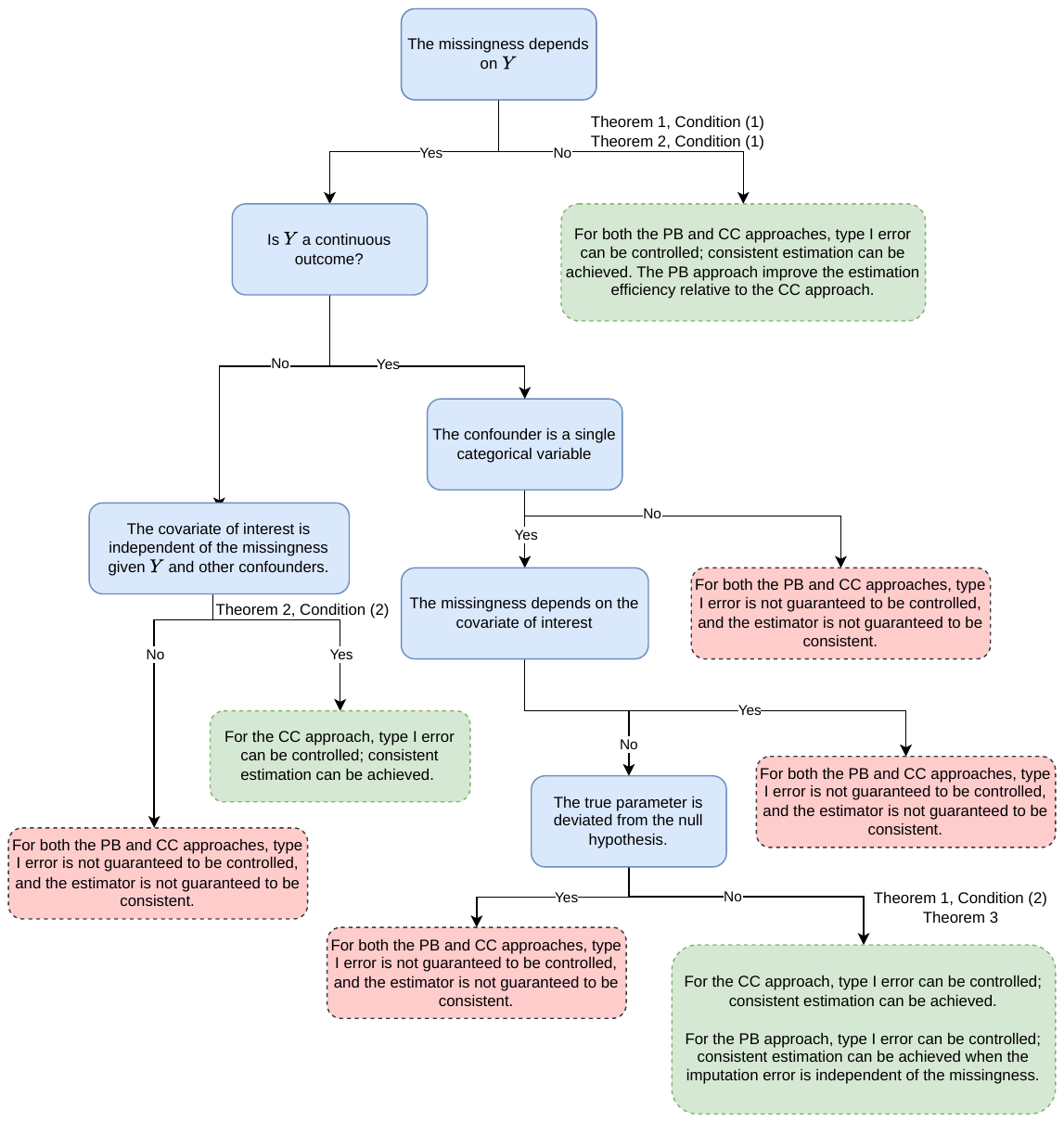}
\caption{Decision tree illustrating whether valid Type I error control and consistent estimation can be achieved under different scenarios. Abbreviations: prediction-based (PB) approaches and complete-case (CC) approaches. Red leaves indicate that neither method achieves valid Type I error control or consistent estimation, while green leaves indicate that both properties are attainable by at least one method. Relevant theorems are annotated next to the green leaves to provide formal theoretical justification.}
\label{fig:decision_tree}
\end{figure}

\textcolor{black}{To synthesize our theoretical findings,} we summarize the theoretical results in a decision tree (Figure~\ref{fig:decision_tree}) \textcolor{black}{that demonstrates} the scenarios where either CC or PB approaches can achieve valid inference. According to the tree, valid type I error control can be achieved in the following different scenarios: 1) missingness does not depend on $Y$; 2) missingness depends on $Y$, the outcome is binary, and the covariate of interest is conditionally independent of missingness given $Y$ and other confounders; 3) the missingness depends on $Y$, the outcome is continuous, the confounder is a categorical variable, and missingness does not depend on the covariate of interest. 

Finally, the AoU data analysis demonstrates that applying PB methods in EHR-linked biobank data can improve the generalizability of inferential results by extending the study population beyond the complete-case subset, which may comprise participants with poorer health status relative to the full target population. In GWAS results, PB methods enhance estimation efficiency relative to CCA and WCCA and are able to replicate the findings from previous studies. \textcolor{black}{However, the PB approaches did not consistently identify more significant SNPs compared to CCA and WCCA, likely due to the imputation quality as well as the underlying MNAR mechanisms in the EHR-linked biobank data.}

Despite the above contributions, this work has several limitations. First, although many PB methods are applicable to general Z-estimation problems, our comparative evaluation is restricted to cross-sectional regression settings. In particular, both the simulation and real-data analyses ignore the longitudinal nature of EHR-linked biobank data. This is partly due to the fact that there is no well-established PB inference theory in longitudinal data analysis or data generated from a clinically informative observation process. \textcolor{black}{Furthermore, the comparative studies in this paper focus on outcome-only missingness, whereas in practice missingness may occur across multiple variables and vary across individuals. Future research should therefore develop PB inference theory for longitudinal settings and evaluate PB methods under general missingness patterns such as~\citet{chen2025unified, xu2025blockwise, zhao2025imputation} to better understand the regimes in which the PB methods perform reliably.} Second, while this work identifies \textcolor{black}{missing mechanisms} in which consistent estimation or valid type I error control cannot be achieved, it does not propose remedies. Future research should develop methodologies that can recover consistency or validity, even if only under extra conditions. \textcolor{black}{For example,~\citet{yang2026joint} proposed a joint modeling framework that can derive valid statistical inference under the clinically informative observation process.} Finally, in the real-data application, we used a predictive biomarker for each target trait to enhance predictive performance. However, such a strategy restricts the analytic sample for each trait. Because predictive accuracy directly influences the performance of PB methods, future research should investigate strategies to enhance prediction without compromising generalizability to the target population.

%\backmatter
% \section*{Author contributions}

% \textcolor{black}{Conceptualization: Mukherjee, B.; Methodology: Chen, X.; Software: Chen, X.; Formal analysis: Chen, X.; Resources: Mukherjee, B., Wu, Z.; Writing -- original draft: Chen, X.; Writing -- review \& editing: Mukherjee, B., Wu, Z., Yang, C.; Supervision: Mukherjee, B., Wu, Z.; Funding acquisition: Mukherjee, B., Wu, Z.}

\section*{Acknowledgments}

We gratefully acknowledge \textit{All of Us} participants for their contributions, without whom this research would not have been possible. We also thank the National Institutes of Health’s \textit{All of Us} Research Program for making available the participant data examined in this study. The \textit{All of Us} data can be accessed at: \href{https://www.researchallofus.org/}{https://www.researchallofus.org/}. ZW is partly supported by an NIH grant (R33AA028315) and a Gates Foundation grant (305215). BM's effort is partially supported by UH3 CA267907 from the NIH/NCI.

\bibliographystyle{apalike}
\bibliography{sample}

\newpage

\begin{table}[H]
\caption{A summary of existing prediction-based (PB) methods. The table summarizes the accommodated missingness patterns and assumed missing mechanisms of each method.}
\label{tab:summary_literature}
\resizebox{1.0\columnwidth}{!}{
\begin{tabular}{llcccccc}
\toprule
Category & References & Method & Year & Outcome & Covariates & General & Missing Mechanism \\ 
\midrule

\multirow{6}{*}{\parbox{3cm}{\centering Augmenting\\ estimating\\ equations}}
& \citet{angelopoulos2023prediction} & PPI & 2023 & \cmark & \xmark & \xmark & MCAR \\
& \citet{angelopoulos2023ppi++} & PPI++ & 2023 & \cmark & \xmark & \xmark & MCAR \\
& \citet{gan2024prediction} & PDC & 2024 & \cmark & \xmark & \xmark & MCAR \\
& \citet{miao2025assumption} & PSPA & 2024 & \cmark & \cmark & \xmark & MCAR \\
& \citet{zhao2025imputation} & IPI & 2025 & \cmark & \cmark & \cmark & MCAR \\
& \citet{xu2025blockwise} & IBM & 2025 & \cmark & \cmark & \cmark & MCAR \\

\midrule

\multirow{3}{*}{\parbox{3cm}{\centering Augmenting\\ estimators}}
& \citet{gronsbell2024another,chen2000unified} & - & 2024 & \cmark & \xmark & \xmark & MCAR \\
& \citet{kluger2025prediction} & PTD & 2025 & \cmark & \cmark & \xmark & MAR \\
& \citet{chen2025unified} & PS-PPI & 2025 & \cmark & \cmark & \cmark & MAR \\

\midrule

\multirow{1}{*}{\parbox{3cm}{\centering Joint likelihood}}
& \citet{mccaw2023leveraging} & SynSurr & 2024 & \cmark & \xmark & \xmark & MCAR \\

\bottomrule
\end{tabular}}
\end{table}

\begin{table}[H]
\centering
\caption{Simulation results under linear regression settings with continuous confounders. \reddown; lower is better; \greenup; higher is better; $\rightarrow x$ indicates that values closer to $x$ are preferable. The best-performing results are \textbf{bolded}, and the second-best-performing results are \underline{underlined}. Abbreviations: mean squared error (MSE) and type I error (Type I). All bias and MSE values are scaled by a factor of 100 for ease of presentation.}
\label{tab:linear_regression_summary_continuous}
\resizebox{\textwidth}{!}{
\begin{tabular}{@{}%
c|l|ccc|ccc|ccc|ccc||%
c|l|ccc|ccc|ccc|ccc@{}}
\toprule
& &
\multicolumn{6}{c|}{$\beta_{1}$} &
\multicolumn{6}{c||}{$\beta_{2}$} &
& &
\multicolumn{6}{c|}{$\beta_{1}$} &
\multicolumn{6}{c}{$\beta_{2}$} \\ \midrule
\multirow{3}{*}{\shortstack{\textbf{Missing}\\\textbf{Mechanism}}} & \multirow{3}{*}{\textbf{Method}} &
\multicolumn{3}{c}{$\beta_{1}=0,\,N=10{,}000$} &
\multicolumn{3}{c|}{$\beta_{1}=0.1,\,N=10{,}000$} &
\multicolumn{3}{c}{$\beta_{2}=0,\,N=10{,}000$} &
\multicolumn{3}{c||}{$\beta_{2}=0.1,\,N=10{,}000$} &
\multirow{3}{*}{\shortstack{\textbf{Missing}\\\textbf{Mechanism}}} & \multirow{3}{*}{\textbf{Method}} &
\multicolumn{3}{c}{$\beta_{1}=0,\,N=10{,}000$} &
\multicolumn{3}{c|}{$\beta_{1}=0.1,\,N=1{,}000$} &
\multicolumn{3}{c}{$\beta_{2}=0,\,N=10{,}000$} &
\multicolumn{3}{c}{$\beta_{2}=0.1,\,N=1{,}000$} \\
& &
\textbf{$\lvert$Bias$_0\rvert$} & \textbf{MSE$_0$} & \textbf{Type I} &
\textbf{$\lvert$Bias$\rvert$} & \textbf{MSE} & \textbf{Power} &
\textbf{$\lvert$Bias$_0\rvert$} & \textbf{MSE$_0$} & \textbf{Type I} &
\textbf{$\lvert$Bias$\rvert$} & \textbf{MSE} & \textbf{Power} &
& &
\textbf{$\lvert$Bias$_0\rvert$} & \textbf{MSE$_0$} & \textbf{Type I} &
\textbf{$\lvert$Bias$\rvert$} & \textbf{MSE} & \textbf{Power} &
\textbf{$\lvert$Bias$_0\rvert$} & \textbf{MSE$_0$} & \textbf{Type I} &
\textbf{$\lvert$Bias$\rvert$} & \textbf{MSE} & \textbf{Power} \\
& &
$\rightarrow 0$ & \reddown & $\rightarrow 0.05$ &
$\rightarrow 0$ & \reddown & \greenup &
$\rightarrow 0$ & \reddown & $\rightarrow 0.05$ &
$\rightarrow 0$ & \reddown & \greenup &
& &
$\rightarrow 0$ & \reddown & $\rightarrow 0.05$ &
$\rightarrow 0$ & \reddown & \greenup &
$\rightarrow 0$ & \reddown & $\rightarrow 0.05$ &
$\rightarrow 0$ & \reddown & \greenup \\
\midrule
% ===================== Pair: MCAR (left)  ||  MNAR3 (right) =====================
\multirow{10}{*}{\shortstack{MCAR \\ $(\varnothing)$}} & Full & 0.11 & 0.03 & 0.052 & 0.11 & 0.03 & 1.000 & 0.04 & 0.05 & 0.052 & 0.04 & 0.05 & 0.992 & \multirow{10}{*}{\shortstack{MNAR3 \\ $(X_2, Z_2, Y)$}} & Full & 0.11 & 0.03 & 0.052 & 0.11 & 0.03 & 1.000 & 0.04 & 0.05 & 0.052 & 0.04 & 0.05 & 0.992 \\
& WCCA (est) & 0.37 & 0.14 & 0.048 & 0.37 & 0.14 & 0.732 & \underline{0.01} & 0.22 & 0.032 & \underline{0.01} & 0.22 & 0.592 & & WCCA (est) & \underline{3.44} & 0.35 & \textbf{0.090} & 4.85 & 0.46 & 0.142 & 8.10 & 1.05 & 0.242 & 6.99 & 0.88 & 0.406 \\
& CCA & 0.36 & 0.14 & 0.042 & 0.36 & 0.14 & 0.742 & \textbf{0.00} & 0.22 & 0.032 & \textbf{0.00} & 0.22 & 0.592 & & CCA & 6.11 & 0.53 & 0.368 & 6.83 & 0.62 & 0.010 & \underline{4.62} & \underline{0.50} & \underline{0.140} & \underline{4.00} & \underline{0.45} & 0.558 \\
& Naive & 15.23 & 2.36 & 1.000 & 15.23 & 2.36 & \textbf{0.998} & 13.19 & 1.80 & 1.000 & 13.19 & 1.80 & 0.000 & & Naive & 17.00 & 2.93 & 1.000 & 17.02 & 2.94 & \textbf{1.000} & 13.81 & 1.97 & 1.000 & 13.88 & 1.99 & 0.000 \\
& PPI & 0.17 & 0.13 & 0.044 & 0.17 & 0.13 & 0.802 & 0.08 & 0.21 & \underline{0.048} & 0.08 & 0.21 & 0.578 & & PPI & 11.99 & 1.58 & 0.910 & 11.16 & 1.38 & 0.802 & 32.15 & 10.56 & 1.000 & 30.41 & 9.47 & 0.446 \\
& PPI++ & \textbf{0.07} & \underline{0.07} & \textbf{0.052} & \textbf{0.07} & \underline{0.07} & \underline{0.972} & 0.05 & 0.10 & 0.046 & 0.05 & 0.10 & \underline{0.864} & & PPI++ & 5.41 & 0.37 & 0.514 & 4.74 & 0.30 & 0.960 & 22.12 & 5.02 & 1.000 & 20.97 & 4.53 & \underline{0.784} \\
& SynSurr & \underline{0.08} & 0.07 & 0.054 & \underline{0.08} & 0.07 & 0.968 & 0.05 & \underline{0.10} & \underline{0.048} & 0.05 & \underline{0.10} & \textbf{0.866} & & SynSurr & 3.58 & \textbf{0.20} & 0.280 & \underline{3.19} & \textbf{0.17} & \underline{0.974} & 10.88 & 1.31 & 0.852 & 10.11 & 1.15 & \textbf{0.838} \\
& PS-PPI & 0.10 & \textbf{0.07} & \textbf{0.052} & 0.10 & \textbf{0.07} & 0.970 & 0.04 & \textbf{0.10} & 0.044 & 0.04 & \textbf{0.10} & \underline{0.864} & & PS-PPI & \textbf{2.75} & \underline{0.21} & \underline{0.150} & \textbf{3.02} & \underline{0.22} & 0.260 & 6.43 & 0.63 & 0.344 & 6.25 & 0.61 & 0.708 \\
& MI & 0.13 & 0.18 & 0.064 & 0.20 & 0.18 & 0.526 & 0.10 & 0.27 & \textbf{0.050} & 0.06 & 0.26 & 0.394 & & MI & 9.83 & 1.13 & 0.518 & 10.89 & 1.36 & 0.002 & \textbf{3.59} & \textbf{0.45} & \textbf{0.116} & \textbf{2.87} & \textbf{0.39} & 0.310 \\
& MI-RF & 0.32 & 0.14 & 0.200 & 0.94 & 0.15 & 0.632 & 1.89 & 0.27 & 0.194 & 1.73 & 0.26 & 0.550 & & MI-RF & 4.07 & 0.37 & 0.418 & 6.26 & 0.59 & 0.046 & 12.26 & 1.88 & 0.766 & 10.67 & 1.53 & 0.226 \\
\midrule
% ===================== Pair: MAR1 (left)  ||  MNAR4 (right) =====================
\multirow{10}{*}{\shortstack{MAR1 \\ $(Z_2)$}} & Full & 0.11 & 0.03 & 0.052 & 0.11 & 0.03 & 1.000 & 0.04 & 0.05 & 0.052 & 0.04 & 0.05 & 0.992 & \multirow{10}{*}{\shortstack{MNAR4 \\ $(X_1, X_2, Z_2, Y)$}} & Full & 0.11 & 0.03 & 0.052 & 0.11 & 0.03 & 1.000 & 0.04 & 0.05 & 0.052 & 0.04 & 0.05 & 0.992 \\
& WCCA (est) & 0.53 & 0.29 & \underline{0.046} & 0.53 & 0.29 & 0.430 & 0.44 & 0.66 & \underline{0.052} & 0.44 & 0.66 & 0.256 & & WCCA (est) & 10.70 & 1.61 & \underline{0.390} & 9.50 & 1.34 & 0.358 & 8.60 & 1.27 & \textbf{0.224} & 7.62 & 1.10 & 0.300 \\
& CCA & \textbf{0.17} & \underline{0.14} & \textbf{0.048} & \textbf{0.17} & \underline{0.14} & 0.746 & \textbf{0.30} & 0.22 & 0.036 & \textbf{0.30} & 0.22 & 0.638 & & CCA & 17.42 & 3.22 & 0.984 & 16.93 & 3.06 & 0.652 & 8.34 & 0.97 & 0.350 & 8.11 & 0.94 & 0.558 \\
& Naive & 16.43 & 2.74 & 1.000 & 16.43 & 2.74 & \textbf{1.000} & 14.00 & 2.02 & 1.000 & 14.00 & 2.02 & 0.000 & & Naive & 15.78 & 2.53 & 1.000 & 15.84 & 2.55 & \textbf{1.000} & 12.87 & 1.72 & 1.000 & 12.93 & 1.73 & 0.000 \\
& PPI & 7.03 & 0.63 & 0.484 & 7.03 & 0.63 & 0.832 & 0.82 & 0.21 & 0.042 & 0.82 & 0.21 & 0.544 & & PPI & 31.67 & 10.19 & 1.000 & 30.66 & 9.56 & 0.000 & 40.67 & 16.77 & 1.000 & 39.17 & 15.57 & 0.000 \\
& PPI++ & 4.12 & 0.24 & 0.338 & 4.12 & 0.24 & \underline{0.988} & 0.38 & \textbf{0.11} & 0.034 & 0.38 & \textbf{0.11} & \underline{0.840} & & PPI++ & 25.88 & 6.78 & 1.000 & 25.17 & 6.43 & 0.000 & 27.50 & 7.69 & 1.000 & 26.73 & 7.27 & 0.000 \\
& SynSurr & 3.71 & 0.21 & 0.276 & 3.71 & 0.21 & 0.986 & 0.77 & \underline{0.11} & 0.046 & 0.77 & \underline{0.11} & \textbf{0.922} & & SynSurr & 16.56 & 2.83 & 1.000 & 16.18 & 2.70 & \underline{0.942} & 17.35 & 3.13 & 0.998 & 16.88 & 2.97 & \textbf{0.862} \\
& PS-PPI & \underline{0.34} & \textbf{0.12} & 0.066 & \underline{0.34} & \textbf{0.12} & 0.758 & \underline{0.33} & 0.26 & \textbf{0.050} & \underline{0.33} & 0.26 & 0.540 & & PS-PPI & \textbf{3.29} & \textbf{0.41} & \textbf{0.198} & \textbf{2.95} & \textbf{0.36} & 0.566 & \textbf{7.91} & \textbf{0.95} & 0.408 & \underline{7.57} & \underline{0.89} & \underline{0.568} \\
& MI & 0.41 & 0.18 & 0.084 & 0.45 & 0.19 & 0.312 & 0.59 & 0.30 & 0.070 & 0.42 & 0.28 & 0.422 & & MI & 11.90 & 1.64 & 0.630 & 10.57 & 1.33 & 0.128 & \underline{8.04} & \underline{0.97} & \underline{0.292} & \textbf{7.53} & \textbf{0.88} & 0.238 \\
& MI-RF & 1.10 & 0.25 & 0.252 & 1.69 & 0.26 & 0.236 & 8.04 & 1.16 & 0.492 & 7.44 & 1.06 & 0.224 & & MI-RF & \underline{8.35} & \underline{0.99} & 0.626 & \underline{6.02} & \underline{0.64} & 0.152 & 12.28 & 1.94 & 0.746 & 10.87 & 1.60 & 0.192 \\
\midrule
% ===================== Pair: MAR2 (left)  ||  MNAR5 (right) =====================
\multirow{10}{*}{\shortstack{MAR2 \\ $(X_1, Z_2)$}} & Full & 0.11 & 0.03 & 0.052 & 0.11 & 0.03 & 1.000 & 0.04 & 0.05 & 0.052 & 0.04 & 0.05 & 0.992 & \multirow{10}{*}{\shortstack{MNAR5 \\ $(X_1, X_2, Z_2, Y, X_1Y)$}} & Full & 0.11 & 0.03 & 0.052 & 0.11 & 0.03 & 1.000 & 0.04 & 0.05 & 0.052 & 0.04 & 0.05 & 0.992 \\
& WCCA (est) & 0.10 & 0.59 & \textbf{0.052} & 0.10 & 0.59 & 0.250 & 0.66 & 0.97 & \textbf{0.054} & 0.66 & 0.97 & 0.230 & & WCCA (est) & 17.44 & 3.93 & 0.512 & 18.26 & 4.18 & 0.006 & \textbf{6.72} & \underline{1.11} & \textbf{0.150} & \textbf{6.04} & \underline{0.99} & 0.324 \\
& CCA & \underline{0.09} & \textbf{0.19} & 0.036 & \underline{0.09} & \textbf{0.19} & 0.656 & \textbf{0.36} & \textbf{0.24} & 0.060 & \textbf{0.36} & \textbf{0.24} & 0.556 & & CCA & \textbf{0.77} & \textbf{0.24} & \textbf{0.048} & \textbf{0.14} & \textbf{0.24} & 0.514 & 9.37 & 1.14 & 0.454 & 8.94 & 1.06 & \underline{0.568} \\
& Naive & 15.26 & 2.37 & 1.000 & 15.26 & 2.37 & \textbf{1.000} & 12.91 & 1.73 & 1.000 & 12.91 & 1.73 & 0.000 & & Naive & 14.72 & 2.21 & 1.000 & 14.80 & 2.23 & \textbf{0.998} & 12.66 & 1.67 & 1.000 & 12.73 & 1.68 & 0.000 \\
& PPI & 18.09 & 3.43 & 0.990 & 18.09 & 3.43 & 0.818 & 7.59 & 0.77 & 0.398 & 7.59 & 0.77 & 0.754 & & PPI & 51.29 & 26.50 & 1.000 & 49.80 & 25.00 & 0.000 & 39.80 & 16.08 & 1.000 & 38.13 & 14.77 & 0.000 \\
& PPI++ & 10.24 & 1.13 & 0.952 & 10.24 & 1.13 & 0.950 & 4.45 & \underline{0.30} & 0.224 & 4.45 & \underline{0.30} & \textbf{0.928} & & PPI++ & 31.07 & 9.77 & 1.000 & 30.24 & 9.26 & 0.000 & 27.60 & 7.75 & 1.000 & 26.67 & 7.24 & 0.000 \\
& SynSurr & 9.54 & 0.99 & 0.902 & 9.54 & 0.99 & \underline{0.956} & 5.18 & 0.37 & 0.368 & 5.18 & 0.37 & \underline{0.922} & & SynSurr & 15.56 & 2.52 & 1.000 & 15.04 & 2.37 & \underline{0.894} & 17.80 & 3.29 & 1.000 & 17.28 & 3.11 & \textbf{0.826} \\
& PS-PPI & \textbf{0.02} & \underline{0.20} & \underline{0.042} & \textbf{0.02} & \underline{0.20} & 0.630 & \underline{0.49} & 0.36 & \underline{0.058} & \underline{0.49} & 0.36 & 0.448 & & PS-PPI & \underline{3.70} & 1.46 & 0.370 & \underline{4.21} & 1.34 & 0.042 & 11.19 & 1.93 & 0.548 & 10.81 & 1.78 & 0.326 \\
& MI & 0.37 & 0.25 & 0.068 & 0.43 & 0.27 & 0.344 & 0.62 & 0.33 & 0.078 & 0.52 & 0.30 & 0.334 & & MI & 4.59 & \underline{0.48} & \underline{0.130} & 5.53 & \underline{0.59} & 0.060 & \underline{8.08} & \textbf{0.97} & \underline{0.288} & \underline{7.86} & \textbf{0.91} & 0.278 \\
& MI-RF & 2.53 & 0.42 & 0.308 & 3.30 & 0.46 & 0.134 & 7.78 & 1.13 & 0.440 & 7.14 & 1.07 & 0.270 & & MI-RF & 15.20 & 2.87 & 0.776 & 17.54 & 3.63 & 0.024 & 12.55 & 2.13 & 0.688 & 11.06 & 1.77 & 0.128 \\
\midrule
% ===================== Pair: MNAR1 (left)  ||  MNAR6 (right) =====================
\multirow{10}{*}{\shortstack{MNAR1 \\ $(Z_2, Y)$}} & Full & 0.11 & 0.03 & 0.052 & 0.11 & 0.03 & 1.000 & 0.04 & 0.05 & 0.052 & 0.04 & 0.05 & 0.992 & \multirow{10}{*}{\shortstack{MNAR6 \\ $(X_1, X_2, Z_2, Y, X_2Y)$}} & Full & 0.11 & 0.03 & 0.052 & 0.11 & 0.03 & 1.000 & 0.04 & 0.05 & 0.052 & 0.04 & 0.05 & 0.992 \\
& WCCA (est) & 3.87 & 0.33 & 0.166 & 5.40 & 0.47 & 0.058 & 6.84 & 0.78 & 0.238 & 8.19 & 0.99 & 0.004 & & WCCA (est) & 12.61 & 2.21 & \underline{0.448} & 11.85 & 1.98 & 0.304 & 16.62 & 3.55 & 0.492 & 17.23 & 3.72 & 0.000 \\
& CCA & 5.37 & 0.44 & 0.302 & 6.50 & 0.57 & 0.028 & 17.23 & 3.19 & 0.964 & 18.24 & 3.55 & 0.000 & & CCA & 21.88 & 4.96 & 1.000 & 21.43 & 4.77 & \underline{0.632} & 8.50 & 1.05 & 0.314 & 8.88 & 1.12 & 0.000 \\
& Naive & 15.78 & 2.53 & 1.000 & 15.82 & 2.54 & \textbf{1.000} & 13.57 & 1.91 & 1.000 & 13.63 & 1.92 & 0.000 & & Naive & 16.07 & 2.62 & 1.000 & 16.13 & 2.64 & \textbf{1.000} & 13.43 & 1.87 & 1.000 & 13.50 & 1.88 & 0.000 \\
& PPI & 4.45 & 0.34 & 0.226 & 3.67 & 0.28 & 0.814 & 3.15 & 0.29 & \underline{0.092} & \underline{1.85} & \underline{0.22} & \underline{0.620} & & PPI & 31.96 & 10.36 & 1.000 & 30.82 & 9.65 & 0.000 & 65.31 & 42.93 & 1.000 & 63.18 & 40.20 & 0.000 \\
& PPI++ & \underline{1.00} & \underline{0.08} & \underline{0.066} & \textbf{0.16} & \textbf{0.08} & \underline{0.942} & 4.01 & \underline{0.27} & 0.214 & 5.09 & 0.37 & 0.080 & & PPI++ & 27.91 & 7.87 & 1.000 & 27.11 & 7.43 & 0.000 & 35.61 & 12.85 & 1.000 & 34.70 & 12.21 & 0.000 \\
& SynSurr & \textbf{0.29} & \textbf{0.07} & \textbf{0.064} & \underline{0.90} & \underline{0.08} & 0.920 & 6.43 & 0.52 & 0.500 & 7.21 & 0.62 & 0.000 & & SynSurr & 18.67 & 3.57 & 1.000 & 18.33 & 3.45 & 0.000 & 18.19 & 3.47 & 0.996 & 17.62 & 3.26 & \textbf{0.754} \\
& PS-PPI & 3.45 & 0.22 & 0.236 & 3.62 & 0.23 & 0.254 & \underline{0.89} & \textbf{0.19} & \textbf{0.086} & \textbf{0.70} & \textbf{0.19} & \textbf{0.666} & & PS-PPI & \textbf{5.77} & \textbf{0.98} & \textbf{0.344} & \textbf{5.49} & \textbf{0.86} & 0.384 & \textbf{1.30} & \textbf{0.70} & \textbf{0.184} & \textbf{1.79} & \textbf{0.64} & \underline{0.138} \\
& MI & 8.64 & 0.91 & 0.488 & 10.36 & 1.25 & 0.008 & 16.02 & 2.83 & 0.788 & 17.70 & 3.39 & 0.006 & & MI & 16.04 & 2.79 & 0.872 & 15.12 & 2.49 & 0.122 & \underline{7.37} & \underline{0.96} & \underline{0.248} & \underline{8.12} & \underline{1.03} & 0.004 \\
& MI-RF & 4.30 & 0.34 & 0.446 & 6.53 & 0.59 & 0.046 & \textbf{0.70} & 0.29 & 0.218 & 2.68 & 0.35 & 0.212 & & MI-RF & \underline{10.05} & \underline{1.41} & 0.680 & \underline{8.03} & \underline{1.05} & 0.218 & 13.24 & 2.36 & 0.726 & 14.69 & 2.77 & 0.008 \\
\midrule
% ===================== Pair: MNAR2 (left)  ||  MNAR7 (right) =====================
\multirow{10}{*}{\shortstack{MNAR2 \\ $(X_1, Z_2, Y)$}} & Full & 0.11 & 0.03 & 0.052 & 0.11 & 0.03 & 1.000 & 0.04 & 0.05 & 0.052 & 0.04 & 0.05 & 0.992 & \multirow{10}{*}{\shortstack{MNAR7 \\ $(X_1, X_2, Z_2, Y, Z_2Y)$}} & Full & 0.11 & 0.03 & 0.052 & 0.11 & 0.03 & 1.000 & 0.04 & 0.05 & 0.052 & 0.04 & 0.05 & 0.992 \\
& WCCA (est) & 10.06 & 1.37 & 0.442 & 8.81 & 1.12 & 0.390 & 5.73 & 0.78 & 0.168 & 6.88 & 0.91 & 0.010 & & WCCA (est) & 14.63 & 2.85 & \underline{0.502} & 13.93 & 2.60 & 0.296 & \textbf{4.13} & 1.08 & \textbf{0.096} & \textbf{3.86} & 1.03 & 0.242 \\
& CCA & 14.25 & 2.21 & 0.924 & 13.37 & 1.97 & 0.650 & 13.68 & 2.10 & 0.826 & 14.28 & 2.27 & 0.000 & & CCA & 25.15 & 6.50 & 1.000 & 24.81 & 6.33 & \underline{0.684} & 7.98 & \textbf{0.94} & 0.322 & 7.84 & 0.92 & \underline{0.500} \\
& Naive & 14.26 & 2.08 & 1.000 & 14.31 & 2.09 & \textbf{1.000} & 12.29 & 1.57 & 1.000 & 12.36 & 1.59 & 0.000 & & Naive & 15.97 & 2.59 & 1.000 & 16.02 & 2.61 & \textbf{1.000} & 13.13 & 1.79 & 1.000 & 13.21 & 1.81 & 0.000 \\
& PPI & 28.88 & 8.51 & 1.000 & 27.75 & 7.86 & 0.000 & 8.70 & 0.95 & 0.482 & 7.37 & 0.73 & 0.622 & & PPI & 30.22 & 9.28 & 1.000 & 28.92 & 8.52 & 0.000 & 51.35 & 26.59 & 1.000 & 49.18 & 24.41 & 0.000 \\
& PPI++ & 23.15 & 5.45 & 1.000 & 22.21 & 5.03 & 0.000 & \textbf{0.09} & \textbf{0.11} & \textbf{0.032} & \textbf{0.98} & \textbf{0.12} & \textbf{0.778} & & PPI++ & 28.48 & 8.20 & 1.000 & 27.55 & 7.68 & 0.000 & 36.48 & 13.45 & 1.000 & 35.41 & 12.68 & 0.000 \\
& SynSurr & 12.92 & 1.76 & 0.996 & 12.31 & 1.60 & \underline{0.942} & 1.82 & \underline{0.14} & \underline{0.108} & 2.38 & \underline{0.16} & 0.558 & & SynSurr & 19.42 & 3.86 & 1.000 & 19.06 & 3.72 & 0.000 & 21.15 & 4.61 & 1.000 & 20.60 & 4.38 & \textbf{0.788} \\
& PS-PPI & \textbf{3.07} & \textbf{0.29} & \textbf{0.188} & \textbf{2.84} & \textbf{0.27} & 0.744 & 2.44 & 0.30 & 0.134 & 2.13 & 0.28 & \underline{0.624} & & PS-PPI & \textbf{4.95} & \textbf{0.98} & \textbf{0.340} & \textbf{4.81} & \textbf{0.88} & 0.324 & \underline{6.09} & \underline{0.96} & \underline{0.238} & \underline{5.87} & \underline{0.89} & 0.322 \\
& MI & 8.97 & 1.01 & \underline{0.436} & 7.76 & 0.81 & 0.268 & 11.43 & 1.56 & 0.602 & 12.45 & 1.84 & 0.012 & & MI & 18.82 & 3.80 & 0.954 & 17.42 & 3.32 & 0.084 & 8.25 & 1.00 & 0.318 & 6.95 & \textbf{0.80} & 0.208 \\
& MI-RF & \underline{7.53} & \underline{0.80} & 0.648 & \underline{5.34} & \underline{0.52} & 0.224 & \underline{0.22} & 0.35 & 0.224 & \underline{1.73} & 0.38 & 0.160 & & MI-RF & \underline{10.25} & \underline{1.51} & 0.678 & \underline{8.63} & \underline{1.21} & 0.156 & 12.84 & 2.30 & 0.656 & 11.47 & 1.94 & 0.116 \\
\bottomrule
\end{tabular}}
\end{table}

\begin{table}[H]
\centering
\caption{Simulation results under the linear regression settings with a categorical confounder. \reddown; lower is better; \greenup; higher is better; $\rightarrow x$ indicates that values closer to $x$ are preferable. The best-performing results are \textbf{bolded}, and the second-best-performing results are \underline{underlined}. Abbreviations: mean squared error (MSE) and type I error (Type I). All bias and MSE values are scaled by a factor of 100 for ease of presentation.}
\label{tab:linear_regression_summary_dummy}
\resizebox{\textwidth}{!}{
\begin{tabular}{@{}%
c|l|ccc|ccc|ccc|ccc||%
c|l|ccc|ccc|ccc|ccc@{}}
\toprule
& &
\multicolumn{6}{c|}{$\beta_{1}$} &
\multicolumn{6}{c||}{$\beta_{2}$} &
& &
\multicolumn{6}{c|}{$\beta_{1}$} &
\multicolumn{6}{c}{$\beta_{2}$} \\ \midrule
\multirow{3}{*}{\shortstack{\textbf{Missing}\\\textbf{Mechanism}}} & \multirow{3}{*}{\textbf{Method}} &
\multicolumn{3}{c}{$\beta_{1}=0,\,N=10{,}000$} &
\multicolumn{3}{c|}{$\beta_{1}=0.1,\,N=10{,}000$} &
\multicolumn{3}{c}{$\beta_{2}=0,\,N=10{,}000$} &
\multicolumn{3}{c||}{$\beta_{2}=0.1,\,N=10{,}000$} &
\multirow{3}{*}{\shortstack{\textbf{Missing}\\\textbf{Mechanism}}} & \multirow{3}{*}{\textbf{Method}} &
\multicolumn{3}{c}{$\beta_{1}=0,\,N=10{,}000$} &
\multicolumn{3}{c|}{$\beta_{1}=0.1,\,N=1{,}000$} &
\multicolumn{3}{c}{$\beta_{2}=0,\,N=10{,}000$} &
\multicolumn{3}{c}{$\beta_{2}=0.1,\,N=1{,}000$} \\
& &
\textbf{$\lvert$Bias$_0\rvert$} & \textbf{MSE$_0$} & \textbf{Type I} &
\textbf{$\lvert$Bias$\rvert$} & \textbf{MSE} & \textbf{Power} &
\textbf{$\lvert$Bias$_0\rvert$} & \textbf{MSE$_0$} & \textbf{Type I} &
\textbf{$\lvert$Bias$\rvert$} & \textbf{MSE} & \textbf{Power} &
& &
\textbf{$\lvert$Bias$_0\rvert$} & \textbf{MSE$_0$} & \textbf{Type I} &
\textbf{$\lvert$Bias$\rvert$} & \textbf{MSE} & \textbf{Power} &
\textbf{$\lvert$Bias$_0\rvert$} & \textbf{MSE$_0$} & \textbf{Type I} &
\textbf{$\lvert$Bias$\rvert$} & \textbf{MSE} & \textbf{Power} \\
& &
$\rightarrow 0$ & \reddown & $\rightarrow 0.05$ &
$\rightarrow 0$ & \reddown & \greenup &
$\rightarrow 0$ & \reddown & $\rightarrow 0.05$ &
$\rightarrow 0$ & \reddown & \greenup &
& &
$\rightarrow 0$ & \reddown & $\rightarrow 0.05$ &
$\rightarrow 0$ & \reddown & \greenup &
$\rightarrow 0$ & \reddown & $\rightarrow 0.05$ &
$\rightarrow 0$ & \reddown & \greenup \\
\midrule
% ===================== Pair: MCAR (left)  ||  MNAR3 (right) =====================
\multirow{10}{*}{\shortstack{MCAR \\ $(\varnothing)$}} & Full & 0.11 & 0.03 & 0.052 & 0.11 & 0.03 & 1.000 & 0.04 & 0.05 & 0.050 & 0.04 & 0.05 & 0.988 & \multirow{10}{*}{\shortstack{MNAR3 \\ $(X_2, Z_2, Y)$}} & Full & 0.11 & 0.03 & 0.052 & 0.11 & 0.03 & 1.000 & 0.04 & 0.05 & 0.050 & 0.04 & 0.05 & 0.988 \\
& WCCA (est) & 0.37 & 0.14 & 0.044 & 0.37 & 0.14 & 0.740 & \underline{0.04} & 0.26 & \textbf{0.050} & \underline{0.04} & 0.26 & 0.472 & & WCCA (est) & 0.28 & 0.26 & \underline{0.046} & 0.59 & 0.26 & 0.552 & 11.99 & 1.88 & 0.390 & 12.74 & 2.06 & 0.502 \\
& CCA & 0.36 & 0.14 & 0.036 & 0.36 & 0.14 & 0.740 & \textbf{0.04} & 0.26 & 0.052 & \textbf{0.04} & 0.26 & 0.466 & & CCA & 0.22 & 0.15 & \textbf{0.052} & 0.69 & \underline{0.16} & 0.774 & 13.56 & 2.22 & 0.590 & 14.33 & 2.43 & 0.570 \\
& Naive & 15.23 & 2.36 & 1.000 & 15.23 & 2.36 & \textbf{0.998} & 12.54 & 1.65 & 1.000 & 12.54 & 1.65 & 0.000 & & Naive & 16.82 & 2.87 & 1.000 & 16.83 & 2.87 & \textbf{1.000} & 12.01 & 1.51 & 0.994 & 12.06 & 1.53 & 0.000 \\
& PPI & 0.18 & 0.13 & 0.040 & 0.18 & 0.13 & 0.804 & 0.23 & 0.25 & \textbf{0.050} & 0.23 & 0.25 & 0.550 & & PPI & 10.27 & 1.19 & 0.810 & 9.62 & 1.06 & 0.836 & 34.69 & 12.28 & 1.000 & 33.43 & 11.42 & 0.560 \\
& PPI++ & \underline{0.08} & 0.07 & \textbf{0.050} & \underline{0.08} & 0.07 & \underline{0.972} & 0.14 & 0.13 & \textbf{0.050} & 0.14 & \underline{0.13} & \underline{0.808} & & PPI++ & 6.28 & 0.47 & 0.648 & 6.23 & 0.47 & 0.968 & 26.65 & 7.25 & 1.000 & 26.19 & 7.00 & 0.000 \\
& SynSurr & \textbf{0.08} & 0.07 & \underline{0.054} & \textbf{0.08} & \underline{0.07} & 0.968 & 0.14 & 0.13 & \textbf{0.050} & 0.14 & 0.13 & \underline{0.808} & & SynSurr & 5.52 & 0.38 & 0.556 & 5.71 & 0.40 & \underline{0.978} & 19.22 & 3.84 & 0.996 & 19.03 & 3.76 & \textbf{0.854} \\
& PS-PPI & 0.09 & \underline{0.07} & \underline{0.054} & 0.09 & \textbf{0.07} & 0.970 & 0.14 & \underline{0.12} & 0.044 & 0.14 & \textbf{0.12} & \textbf{0.818} & & PS-PPI & \textbf{0.06} & \underline{0.12} & 0.064 & \underline{0.46} & \textbf{0.12} & 0.874 & \textbf{4.43} & \textbf{0.39} & \underline{0.146} & \underline{4.82} & \underline{0.42} & \underline{0.772} \\
& MI & 0.16 & 0.55 & 0.016 & 0.21 & 0.23 & 0.668 & 0.99 & 1.23 & 0.014 & 0.49 & 0.40 & 0.446 & & MI & 0.27 & 0.26 & 0.038 & \textbf{0.18} & 0.23 & 0.534 & 12.20 & 3.40 & \textbf{0.130} & 12.76 & 2.28 & 0.358 \\
& MI-RF & 0.16 & \textbf{0.03} & 0.016 & 5.57 & 0.36 & 0.578 & 0.05 & \textbf{0.06} & 0.008 & 5.45 & 0.38 & 0.382 & & MI-RF & \underline{0.11} & \textbf{0.05} & 0.022 & 4.84 & 0.31 & 0.484 & \underline{5.54} & \underline{0.42} & 0.160 & \textbf{1.41} & \textbf{0.22} & 0.284 \\
\midrule
% ===================== Pair: MAR1 (left)  ||  MNAR4 (right) =====================
\multirow{10}{*}{\shortstack{MAR1 \\ $(Z_2)$}} & Full & 0.11 & 0.03 & 0.052 & 0.11 & 0.03 & 1.000 & 0.04 & 0.05 & 0.050 & 0.04 & 0.05 & 0.988 & \multirow{10}{*}{\shortstack{MNAR4 \\ $(X_1, X_2, Z_2, Y)$}} & Full & 0.11 & 0.03 & 0.052 & 0.11 & 0.03 & 1.000 & 0.04 & 0.05 & 0.050 & 0.04 & 0.05 & 0.988 \\
& WCCA (est) & 0.55 & 0.15 & 0.030 & 0.55 & 0.15 & 0.718 & \textbf{0.03} & 0.31 & 0.046 & \textbf{0.03} & 0.31 & 0.444 & & WCCA (est) & 10.87 & 1.73 & 0.312 & 11.61 & 1.86 & 0.404 & 11.47 & 2.07 & 0.282 & 12.16 & 2.20 & 0.366 \\
& CCA & 0.48 & 0.13 & 0.032 & \underline{0.48} & 0.13 & 0.768 & 0.11 & 0.30 & \underline{0.048} & \underline{0.11} & 0.30 & 0.476 & & CCA & 14.48 & 2.30 & 0.890 & 15.12 & 2.49 & 0.712 & 14.69 & 2.54 & 0.682 & 15.36 & 2.74 & 0.574 \\
& Naive & 15.58 & 2.47 & 1.000 & 15.58 & 2.47 & \textbf{1.000} & 12.32 & 1.59 & 0.994 & 12.32 & 1.59 & 0.000 & & Naive & 15.74 & 2.52 & 1.000 & 15.77 & 2.53 & \textbf{1.000} & 10.99 & 1.28 & 0.982 & 11.03 & 1.29 & 0.000 \\
& PPI & 2.26 & 0.18 & 0.080 & 2.26 & 0.18 & 0.862 & 3.46 & 0.36 & 0.104 & 3.46 & 0.36 & 0.618 & & PPI & 26.44 & 7.14 & 1.000 & 25.68 & 6.74 & 0.000 & 45.71 & 21.17 & 1.000 & 44.66 & 20.22 & 0.000 \\
& PPI++ & 0.97 & 0.07 & 0.066 & 0.97 & 0.07 & \underline{0.984} & 1.89 & 0.18 & 0.102 & 1.89 & \underline{0.18} & \textbf{0.850} & & PPI++ & 21.83 & 4.85 & 1.000 & 21.62 & 4.76 & 0.000 & 33.77 & 11.56 & 1.000 & 33.43 & 11.33 & 0.000 \\
& SynSurr & 0.98 & 0.07 & \underline{0.064} & 0.98 & \underline{0.07} & \underline{0.984} & 1.94 & 0.18 & 0.090 & 1.94 & 0.18 & \textbf{0.850} & & SynSurr & 15.19 & 2.39 & 1.000 & 15.25 & 2.41 & \underline{0.974} & 26.25 & 7.05 & 1.000 & 26.19 & 7.02 & \textbf{0.796} \\
& PS-PPI & \textbf{0.22} & \underline{0.07} & \textbf{0.046} & \textbf{0.22} & \textbf{0.07} & 0.950 & 0.16 & \underline{0.14} & \textbf{0.050} & 0.16 & \textbf{0.14} & 0.768 & & PS-PPI & \textbf{4.22} & \underline{0.38} & \textbf{0.154} & \underline{4.46} & \underline{0.40} & 0.732 & \textbf{4.77} & \underline{0.55} & \textbf{0.148} & \underline{5.00} & \underline{0.57} & \underline{0.594} \\
& MI & 0.30 & 0.46 & 0.010 & 0.60 & 0.27 & 0.594 & 0.45 & 1.36 & 0.016 & 0.53 & 0.49 & 0.416 & & MI & 12.07 & 1.92 & 0.452 & 12.65 & 1.97 & 0.200 & 11.43 & 1.99 & 0.234 & 12.05 & 2.11 & 0.306 \\
& MI-RF & \underline{0.26} & \textbf{0.03} & 0.020 & 5.47 & 0.36 & 0.588 & \underline{0.06} & \textbf{0.07} & 0.012 & 5.35 & 0.39 & 0.302 & & MI-RF & \underline{5.28} & \textbf{0.35} & \underline{0.302} & \textbf{0.53} & \textbf{0.12} & 0.192 & \underline{5.74} & \textbf{0.45} & \underline{0.182} & \textbf{1.24} & \textbf{0.20} & 0.198 \\
\midrule
% ===================== Pair: MAR2 (left)  ||  MNAR5 (right) =====================
\multirow{10}{*}{\shortstack{MAR2 \\ $(X_1, Z_2)$}} & Full & 0.11 & 0.03 & 0.052 & 0.11 & 0.03 & 1.000 & 0.04 & 0.05 & 0.050 & 0.04 & 0.05 & 0.988 & \multirow{10}{*}{\shortstack{MNAR5 \\ $(X_1, X_2, Z_2, Y, X_1Y)$}} & Full & 0.11 & 0.03 & 0.052 & 0.11 & 0.03 & 1.000 & 0.04 & 0.05 & 0.050 & 0.04 & 0.05 & 0.988 \\
& WCCA (est) & 0.16 & 0.38 & \underline{0.046} & \underline{0.16} & 0.38 & 0.368 & 0.27 & 0.50 & \underline{0.056} & 0.27 & 0.50 & 0.312 & & WCCA (est) & 64.24 & 41.56 & 1.000 & 63.28 & 40.32 & 0.386 & 7.61 & 1.11 & 0.178 & 7.39 & 1.05 & 0.418 \\
& CCA & 0.22 & 0.20 & 0.058 & 0.22 & \underline{0.20} & 0.606 & 0.39 & 0.32 & 0.068 & 0.39 & \underline{0.32} & 0.402 & & CCA & 55.52 & 30.98 & 1.000 & 55.03 & 30.45 & 0.710 & 6.30 & 0.79 & 0.162 & 6.47 & 0.81 & 0.506 \\
& Naive & 14.17 & 2.05 & 1.000 & 14.17 & 2.05 & \textbf{1.000} & 11.15 & 1.32 & 0.990 & 11.15 & 1.32 & 0.000 & & Naive & \underline{16.40} & \underline{2.73} & 1.000 & \underline{16.40} & \underline{2.73} & \textbf{1.000} & 11.52 & 1.40 & 0.992 & 11.53 & 1.40 & 0.000 \\
& PPI & 15.18 & 2.46 & 0.956 & 15.18 & 2.46 & 0.814 & 10.67 & 1.40 & 0.570 & 10.67 & 1.40 & 0.650 & & PPI & \textbf{8.28} & \textbf{0.83} & \textbf{0.626} & \textbf{8.03} & \textbf{0.79} & \underline{0.836} & 41.04 & 17.12 & 1.000 & 40.35 & 16.56 & 0.000 \\
& PPI++ & 7.91 & 0.71 & 0.824 & 7.91 & 0.71 & 0.958 & 5.83 & 0.48 & 0.416 & 5.83 & 0.48 & \textbf{0.858} & & PPI++ & 27.60 & 7.71 & 1.000 & 27.28 & 7.53 & 0.000 & 26.84 & 7.37 & 1.000 & 26.48 & 7.17 & 0.000 \\
& SynSurr & 8.16 & 0.75 & 0.804 & 8.16 & 0.75 & \underline{0.970} & 5.97 & 0.50 & 0.384 & 5.97 & 0.50 & \textbf{0.858} & & SynSurr & 26.42 & 7.06 & 1.000 & 26.16 & 6.92 & 0.000 & 22.00 & 5.01 & 1.000 & 21.70 & 4.87 & \textbf{0.798} \\
& PS-PPI & \textbf{0.05} & \underline{0.13} & \textbf{0.050} & \textbf{0.05} & \textbf{0.13} & 0.768 & 0.24 & \underline{0.21} & \textbf{0.048} & \underline{0.24} & \textbf{0.21} & 0.656 & & PS-PPI & 22.28 & 5.23 & 0.996 & 22.94 & 5.51 & 0.000 & \underline{2.84} & \underline{0.41} & \underline{0.124} & \textbf{3.20} & \underline{0.42} & \underline{0.592} \\
& MI & 0.62 & 0.69 & 0.024 & 0.68 & 0.51 & 0.432 & \textbf{0.05} & 1.22 & 0.016 & \textbf{0.12} & 0.71 & 0.350 & & MI & 60.06 & 36.32 & 1.000 & 60.38 & 36.69 & 0.030 & 6.83 & 1.03 & 0.134 & 7.89 & 1.16 & 0.200 \\
& MI-RF & \underline{0.07} & \textbf{0.04} & 0.004 & 6.27 & 0.45 & 0.358 & \underline{0.05} & \textbf{0.07} & 0.010 & 5.08 & 0.37 & 0.340 & & MI-RF & 43.09 & 18.89 & \underline{0.894} & 40.35 & 16.66 & 0.064 & \textbf{2.08} & \textbf{0.13} & \textbf{0.016} & \underline{3.34} & \textbf{0.24} & 0.260 \\
\midrule
% ===================== Pair: MNAR1 (left)  ||  MNAR6 (right) =====================
\multirow{10}{*}{\shortstack{MNAR1 \\ $(Z_2, Y)$}} & Full & 0.11 & 0.03 & 0.052 & 0.11 & 0.03 & 1.000 & 0.04 & 0.05 & 0.050 & 0.04 & 0.05 & 0.988 & \multirow{10}{*}{\shortstack{MNAR6 \\ $(X_1, X_2, Z_2, Y, X_2Y)$}} & Full & 0.11 & 0.03 & 0.052 & 0.11 & 0.03 & 1.000 & 0.04 & 0.05 & 0.050 & 0.04 & 0.05 & 0.988 \\
& WCCA (est) & 0.44 & 0.15 & 0.030 & 0.47 & 0.15 & 0.766 & \underline{0.05} & 0.34 & \textbf{0.046} & 0.85 & 0.35 & 0.490 & & WCCA (est) & 8.70 & 1.19 & 0.278 & 8.78 & 1.17 & 0.396 & 68.19 & 46.98 & 1.000 & 67.81 & 46.45 & 0.430 \\
& CCA & 0.38 & 0.14 & \textbf{0.032} & 0.59 & 0.14 & 0.820 & 0.13 & 0.33 & \underline{0.044} & 0.98 & 0.35 & 0.530 & & CCA & 8.02 & 0.84 & 0.422 & 8.30 & 0.89 & 0.710 & 61.70 & 38.37 & 1.000 & 61.67 & 38.34 & 0.532 \\
& Naive & 15.52 & 2.45 & 1.000 & 15.54 & 2.45 & \textbf{1.000} & 12.27 & 1.58 & 0.994 & 12.31 & 1.59 & 0.000 & & Naive & 15.38 & 2.41 & 1.000 & 15.43 & 2.42 & \textbf{1.000} & \textbf{11.03} & \textbf{1.29} & 0.980 & \textbf{11.02} & \textbf{1.29} & 0.000 \\
& PPI & 2.16 & 0.17 & 0.074 & 1.39 & 0.14 & 0.850 & 3.12 & 0.32 & 0.088 & 2.07 & 0.27 & 0.616 & & PPI & 24.13 & 5.98 & 1.000 & 23.53 & 5.69 & 0.814 & \underline{20.20} & \underline{4.36} & \underline{0.968} & \underline{19.77} & \underline{4.18} & \underline{0.544} \\
& PPI++ & 1.21 & 0.09 & 0.070 & 1.09 & 0.08 & 0.968 & 2.02 & 0.18 & 0.080 & 1.68 & \underline{0.17} & \underline{0.826} & & PPI++ & 17.43 & 3.12 & 1.000 & 17.21 & 3.05 & 0.954 & 37.45 & 14.18 & 1.000 & 37.17 & 13.97 & 0.000 \\
& SynSurr & 1.02 & \underline{0.08} & 0.070 & 1.17 & \textbf{0.08} & \underline{0.986} & 1.80 & 0.17 & 0.072 & 1.73 & 0.17 & \textbf{0.858} & & SynSurr & 12.03 & 1.53 & 0.984 & 11.91 & 1.50 & \underline{0.968} & 35.15 & 12.50 & 1.000 & 34.95 & 12.36 & 0.000 \\
& PS-PPI & \underline{0.15} & 0.08 & \textbf{0.068} & \underline{0.32} & \underline{0.08} & 0.942 & 0.12 & \underline{0.16} & 0.060 & \underline{0.58} & \textbf{0.16} & 0.740 & & PS-PPI & \underline{3.55} & \underline{0.34} & \underline{0.166} & \underline{4.02} & \underline{0.37} & 0.690 & 22.42 & 5.33 & 0.984 & 22.99 & 5.58 & \textbf{0.604} \\
& MI & 0.19 & 0.51 & 0.010 & \textbf{0.29} & 0.26 & 0.650 & 0.29 & 1.34 & 0.010 & \textbf{0.45} & 0.50 & 0.470 & & MI & 7.99 & 0.92 & 0.236 & 8.61 & 1.02 & 0.254 & 65.26 & 43.25 & 0.988 & 66.15 & 44.22 & 0.084 \\
& MI-RF & \textbf{0.12} & \textbf{0.03} & 0.010 & 5.08 & 0.32 & 0.658 & \textbf{0.02} & \textbf{0.08} & 0.014 & 4.79 & 0.33 & 0.392 & & MI-RF & \textbf{3.27} & \textbf{0.16} & \textbf{0.080} & \textbf{2.02} & \textbf{0.13} & 0.350 & 46.87 & 22.41 & \textbf{0.912} & 45.47 & 21.14 & 0.044 \\
\midrule
% ===================== Pair: MNAR2 (left)  ||  MNAR7 (right) =====================
\multirow{10}{*}{\shortstack{MNAR2 \\ $(X_1, Z_2, Y)$}} & Full & 0.11 & 0.03 & 0.052 & 0.11 & 0.03 & 1.000 & 0.04 & 0.05 & 0.050 & 0.04 & 0.05 & 0.988 & \multirow{10}{*}{\shortstack{MNAR7 \\ $(X_1, X_2, Z_2, Y, Z_2Y)$}} & Full & 0.11 & 0.03 & 0.052 & 0.11 & 0.03 & 1.000 & 0.04 & 0.05 & 0.050 & 0.04 & 0.05 & 0.988 \\
& WCCA (est) & 11.51 & 1.73 & 0.420 & 12.31 & 1.90 & 0.468 & 0.14 & 0.53 & \underline{0.054} & 0.90 & 0.53 & 0.304 & & WCCA (est) & 15.80 & 2.98 & \underline{0.596} & 16.59 & 3.21 & 0.426 & 14.84 & 2.88 & 0.448 & 15.55 & 3.07 & 0.378 \\
& CCA & 13.94 & 2.17 & 0.840 & 14.70 & 2.38 & 0.734 & 0.23 & 0.34 & 0.064 & 1.09 & 0.36 & 0.470 & & CCA & 23.36 & 5.65 & 1.000 & 24.11 & 6.00 & 0.766 & 20.11 & 4.41 & 0.900 & 20.83 & 4.70 & \textbf{0.584} \\
& Naive & 14.47 & 2.14 & 1.000 & 14.50 & 2.14 & \textbf{1.000} & 11.40 & 1.37 & 0.990 & 11.44 & 1.38 & 0.000 & & Naive & 15.59 & 2.47 & 1.000 & 15.63 & 2.48 & \textbf{1.000} & 10.80 & 1.24 & 0.976 & 10.86 & 1.25 & 0.000 \\
& PPI & 22.56 & 5.26 & 1.000 & 21.75 & 4.90 & 0.726 & 10.41 & 1.33 & 0.532 & 9.44 & 1.14 & 0.520 & & PPI & 26.31 & 7.07 & 1.000 & 25.32 & 6.56 & 0.000 & 45.22 & 20.72 & 1.000 & 43.86 & 19.52 & 0.000 \\
& PPI++ & 19.37 & 3.84 & 1.000 & 19.15 & 3.76 & 0.000 & 6.65 & 0.59 & 0.462 & 6.37 & 0.55 & \underline{0.808} & & PPI++ & 25.26 & 6.46 & 1.000 & 24.89 & 6.27 & 0.000 & 36.35 & 13.37 & 1.000 & 35.76 & 12.94 & 0.000 \\
& SynSurr & 12.70 & 1.70 & 0.986 & 12.76 & 1.72 & \underline{0.958} & 5.97 & 0.50 & 0.378 & 5.93 & 0.49 & \textbf{0.836} & & SynSurr & 17.43 & 3.11 & 1.000 & 17.49 & 3.14 & \underline{0.978} & 27.18 & 7.54 & 1.000 & 27.04 & 7.46 & 0.000 \\
& PS-PPI & \textbf{4.28} & \underline{0.33} & \textbf{0.190} & \underline{4.65} & \underline{0.36} & 0.836 & 0.22 & \underline{0.22} & \textbf{0.050} & \underline{0.57} & \textbf{0.23} & 0.664 & & PS-PPI & \textbf{5.47} & \textbf{0.49} & \textbf{0.256} & \underline{5.72} & \underline{0.51} & 0.770 & \textbf{5.77} & \textbf{0.65} & \textbf{0.226} & \underline{6.05} & \underline{0.67} & \underline{0.576} \\
& MI & 11.33 & 1.87 & 0.334 & 12.90 & 2.05 & 0.222 & \textbf{0.02} & 0.41 & 0.046 & \textbf{0.10} & 0.42 & 0.294 & & MI & 18.16 & 3.66 & 0.866 & 17.96 & 3.58 & 0.148 & 14.18 & 2.55 & 0.516 & 14.10 & 2.57 & 0.194 \\
& MI-RF & \underline{5.03} & \textbf{0.32} & \underline{0.284} & \textbf{0.52} & \textbf{0.13} & 0.262 & \underline{0.04} & \textbf{0.08} & 0.006 & 4.98 & \underline{0.35} & 0.286 & & MI-RF & \underline{8.61} & \underline{0.84} & 0.604 & \textbf{5.40} & \textbf{0.48} & 0.114 & \underline{7.20} & \underline{0.65} & \underline{0.242} & \textbf{3.19} & \textbf{0.33} & 0.204 \\
\bottomrule
\end{tabular}}
\end{table}

\begin{landscape}

\begin{table}[]
\centering
\caption{Summary statistics for complete cases (CC) and total analytical sample across six biomarker traits in the \textit{All of Us} program data. \textcolor{black}{Sample sizes of the complete-case subsets and the total analytic cohorts are reported. The proportion of the complete-case subset within the total analytic cohort is shown in parentheses next to the complete-case sample size.} For categorical variables, values represent proportions with counts in parentheses. For continuous variables, mean (SD) are reported. Socioeconomic status (SES) summary statistics are obtained based on three digit zip code linkage by the All of Us program. Abbreviations of the biomarkers are as follows: hemoglobin A1c (HbA1c), alanine aminotransferase (ALT), white blood cell count (WBC), and C-reactive protein (CRP).}
\label{tab:summary_statistics_traits}
\resizebox{1.0\columnwidth}{!}{
\begin{tabular}{lcccccccccccc}
\toprule
 & \multicolumn{2}{c}{\textbf{WBC}} & \multicolumn{2}{c}{\textbf{HbA1c}} & \multicolumn{2}{c}{\textbf{ALT}} & \multicolumn{2}{c}{\textbf{Albumin}} & \multicolumn{2}{c}{\textbf{CRP}} & \multicolumn{2}{c}{\textbf{Vitamin D}} \\
\cmidrule(lr){2-3} \cmidrule(lr){4-5} \cmidrule(lr){6-7} \cmidrule(lr){8-9} \cmidrule(lr){10-11} \cmidrule(lr){12-13}
 & \textbf{CC} & \textbf{Total} & \textbf{CC} & \textbf{Total} & \textbf{CC} & \textbf{Total} & \textbf{CC} & \textbf{Total} & \textbf{CC} & \textbf{Total} & \textbf{CC} & \textbf{Total} \\
\midrule
\textbf{Sample size} & 46,226 (96.3\%) & 47,970 & 50,135 (45.3\%) & 110,781 & 66,501 (87.3\%) & 76,146 & 25,455 (87.7\%) & 29,027 & 25,455 (26.2\%) & 97,210 & 27,172 (24.7\%) & 109,951 \\
\addlinespace[0.5em]
\textbf{Age group} ($\%$ (count)) & & & & & & & & & & & & \\
\quad 18--24 years old & 0.7 (303) & 0.7 (314) & 0.3 (152) & 0.7 (728) & 0.6 (367) & 0.6 (428) & 0.7 (180) & 0.8 (222) & 0.7 (180) & 0.6 (616) & 0.3 (92) & 0.6 (712) \\
\quad 25--34 years old & 6.8 (3,125) & 6.7 (3,237) & 3.8 (1,928) & 6.8 (7,502) & 6.1 (4,034) & 6.2 (4,704) & 5.2 (1,312) & 5.4 (1,560) & 5.2 (1,312) & 6.4 (6,219) & 3.6 (986) & 6.7 (7,407) \\
\quad 35--44 years old & 12.2 (5,622) & 12.1 (5,817) & 8.9 (4,450) & 11.8 (13,104) & 11.1 (7,392) & 11.4 (8,694) & 10.1 (2,581) & 10.3 (2,986) & 10.1 (2,581) & 11.5 (11,144) & 7.9 (2,138) & 11.8 (12,937) \\
\quad 45--54 years old & 12.0 (5,555) & 12.0 (5,762) & 11.0 (5,531) & 12.1 (13,399) & 11.9 (7,887) & 12.1 (9,185) & 12.2 (3,112) & 12.3 (3,569) & 12.2 (3,112) & 12.0 (11,673) & 9.9 (2,685) & 12.1 (13,255) \\
\quad 55--64 years old & 17.7 (8,185) & 17.7 (8,512) & 19.0 (9,528) & 18.4 (20,370) & 18.5 (12,303) & 18.6 (14,175) & 19.2 (4,885) & 19.1 (5,554) & 19.2 (4,885) & 18.4 (17,895) & 17.6 (4,779) & 18.4 (20,244) \\
\quad 65+ years old & 50.7 (23,436) & 50.7 (24,328) & 56.9 (28,546) & 50.3 (55,678) & 51.9 (34,518) & 51.2 (38,960) & 52.6 (13,385) & 52.1 (15,136) & 52.6 (13,385) & 51.1 (49,663) & 60.7 (16,492) & 50.4 (55,396) \\
\addlinespace[0.5em]
\textbf{Sex at birth} ($\%$ (count)) & & & & & & & & & & & & \\
\quad Female & 61.4 (28,360) & 61.3 (29,404) & 53.5 (26,829) & 59.6 (66,048) & 56.1 (37,325) & 57.2 (43,557) & 62.4 (15,876) & 63.0 (18,282) & 62.4 (15,876) & 59.6 (57,908) & 65.7 (17,839) & 59.7 (65,609) \\
\quad Male & 38.6 (17,866) & 38.7 (18,566) & 46.5 (23,306) & 40.4 (44,733) & 43.9 (29,176) & 42.8 (32,589) & 37.6 (9,579) & 37.0 (10,745) & 37.6 (9,579) & 40.4 (39,302) & 34.3 (9,333) & 40.3 (44,342) \\
\addlinespace[0.5em]
\textbf{SES} & & & & & & & & & & & & \\
\quad Median income (\$) & 64264.1 (14541.0) & 64084.5 (14637.9) & 69767.4 (18739.8) & 67661.8 (17275.7) & 70063.4 (19427.9) & 69196.3 (18662.4) & 64646.7 (15537.1) & 64820.4 (15301.6) & 64646.7 (15537.1) & 67276.2 (17696.9) & 70494.3 (20241.9) & 67608.9 (17252.7) \\
\quad No health insurance (\%) & 7.3 (3.1) & 7.4 (3.2) & 7.4 (3.4) & 7.7 (3.6) & 8.4 (3.9) & 8.3 (3.8) & 7.9 (3.3) & 8.0 (3.4) & 7.9 (3.3) & 7.8 (3.7) & 6.1 (2.9) & 7.7 (3.6) \\
\textbf{Body measurements} & & & & & & & & & & & & \\
\quad BMI (kg/m$^2$) & 29.7 (7.3) & 29.7 (7.2) & 30.6 (7.4) & 29.5 (7.2) & 29.3 (7.2) & 29.4 (7.2) & 30.5 (7.8) & 30.4 (7.8) & 30.5 (7.8) & 29.6 (7.3) & 29.6 (7.2) & 29.5 (7.2) \\
\quad Body weight (kg) & 84.8 (22.7) & 84.7 (22.7) & 87.9 (23.1) & 84.3 (22.4) & 84.4 (22.4) & 84.7 (22.6) & 86.2 (23.6) & 86.1 (23.6) & 86.2 (23.6) & 84.5 (22.5) & 83.4 (22.9) & 84.3 (22.4) \\
\quad Body height (cm) & 169.7 (9.9) & 169.7 (9.9) & 170.6 (10.2) & 169.9 (9.9) & 170.3 (10.0) & 170.3 (10.0) & 169.4 (10.1) & 169.4 (10.0) & 169.4 (10.1) & 169.8 (9.9) & 168.8 (9.7) & 169.9 (9.9) \\
\addlinespace[0.5em]
\textbf{Depression} ($\%$ (count)) & 38.1 (17,600) & 37.7 (18,085) & 38.7 (19,416) & 34.0 (37,693) & 35.6 (23,675) & 35.0 (26,678) & 46.3 (11,775) & 44.1 (12,802) & 46.3 (11,775) & 36.2 (35,230) & 43.1 (11,712) & 34.2 (37,591) \\
\addlinespace[0.5em]
\textbf{Hypertension} ($\%$ (count)) & 52.5 (24,282) & 52.5 (25,189) & 63.8 (32,007) & 51.4 (56,892) & 55.1 (36,647) & 54.5 (41,512) & 61.8 (15,725) & 60.8 (17,645) & 61.8 (15,725) & 53.2 (51,721) & 61.3 (16,666) & 51.6 (56,722) \\
\addlinespace[0.5em]
\textbf{Diabetes} ($\%$ (count)) & 20.7 (9,554) & 20.6 (9,881) & 32.0 (16,024) & 19.8 (21,966) & 22.4 (14,874) & 21.6 (16,473) & 28.0 (7,123) & 27.2 (7,882) & 28.0 (7,123) & 21.1 (20,515) & 24.9 (6,778) & 19.9 (21,905) \\
\addlinespace[0.5em]
\textbf{Anxiety} ($\%$ (count)) & 42.2 (19,506) & 41.9 (20,092) & 42.0 (21,033) & 38.8 (43,021) & 39.8 (26,458) & 39.5 (30,042) & 51.1 (13,014) & 49.0 (14,214) & 51.1 (13,014) & 41.0 (39,862) & 47.7 (12,965) & 39.0 (42,832) \\
\addlinespace[0.5em]
\textbf{Obesity} ($\%$ (count)) & 38.3 (17,724) & 38.1 (18,278) & 46.7 (23,436) & 35.5 (39,275) & 37.0 (24,613) & 36.9 (28,080) & 46.4 (11,802) & 44.6 (12,937) & 46.4 (11,802) & 37.4 (36,359) & 45.1 (12,255) & 35.5 (39,069) \\
\addlinespace[0.5em]
\bottomrule
\end{tabular}}
\end{table}

\end{landscape}

\begin{table}[H]
\centering
\caption{Summary of genome-wide association studies (GWAS) results across different traits including Hemoglobin A1c (HbA1c), alanine aminotransferase (ALT), white blood cell (WBC) count, albumin, C-reactive protein (CRP), and Vitamin D. For each trait, we report the total analytical sample size, which includes complete and incomplete sample sizes, and the machine learning (ML) fitting sample size. We report the predictive performance ($R^2$) of ML models in each trait. We report the source of the reference GWAS results, their total single-nucleotide polymorphisms (SNPs) and significant SNPs. Finally, we report the number of significant SNPs from each included method and the number of rediscovered SNPs, where rediscovery denotes overlap with the reference GWAS. The significance threshold are $p < 5 \times 10^{-8}$. Abbreviations of the biomarkers not mentioned above are as follows: aspartate aminotransferase (AST).}
\label{tab:gwas_summary}
\resizebox{1.0\textwidth}{!}{
\begin{tabular}{lcccccc}
\toprule
 & \textbf{HbA1c} & \textbf{ALT} & \textbf{WBC} & \textbf{Albumin} & \textbf{CRP} & \textbf{Vitamin D} \\
\midrule
\textbf{Sample sizes} & & & & & & \\
\quad Complete & 50{,}135 & 66{,}501 & 46{,}226 & 25{,}455 & 25{,}455 & 27{,}172 \\
\quad Incomplete & 60{,}646 & 9{,}645 & 1{,}744 & 3{,}572 & 71{,}755 & 82{,}779 \\
\quad Total analytical & 110{,}781 & 76{,}146 & 47{,}970 & 29{,}027 & 97{,}210 & 109{,}951 \\
\quad ML fitting & 39{,}556 & 57{,}616 & 34{,}206 & 17{,}523 & 17{,}523 & 13{,}685 \\
\addlinespace[0.5em]
\textbf{Predictive biomarker} & Glucose & AST & Neutrophils & CRP & Albumin & Calcium \\
\textbf{ML performance ($R^2$)} & 0.5051 & 0.5790 & 0.5532 & 0.2109 & 0.2362 & 0.2376 \\
\addlinespace[0.5em]
\textbf{Reference GWAS} & \citet{loya2025scalable} & \citet{verma2024diversity} & \citet{verma2024diversity} & \citet{verma2024diversity} & \citet{loya2025scalable} & \citet{manousaki2020genome} \\
\quad Total SNPs & 13{,}308{,}322 & 19{,}725{,}446 & 19{,}737{,}397 & 19{,}744{,}728 & 13{,}308{,}322 & 16{,}668{,}957 \\
\quad Significant SNPs & 102{,}583 & 30{,}365 & 128{,}903 & 20{,}301 & 57{,}330 & 15{,}445 \\
\addlinespace[0.5em]
\textbf{CCA} & & & & & & \\
\quad Significant SNPs & 59 & 6 & 10 & 0 & 1 & 13 \\
\quad Rediscovered SNPs & 53 & 6 & 10 & 0 & 1 & 13 \\
\addlinespace[0.5em]
\textbf{WCCA} & & & & & & \\
\quad Significant SNPs & 61 & 2 & 14 & 0 & 2 & 9 \\
\quad Rediscovered SNPs & 53 & 2 & 14 & 0 & 2 & 9 \\
\addlinespace[0.5em]
\textbf{PS-PPI} & & & & & & \\
\quad Significant SNPs & 54 & 6 & 10 & 0 & 4 & 45 \\
\quad Rediscovered SNPs & 45 & 6 & 10 & 0 & 4 & 12 \\
\addlinespace[0.5em]
\textbf{SynSurr} & & & & & & \\
\quad Significant SNPs & 65 & 11 & 10 & 0 & 3 & 13 \\
\quad Rediscovered SNPs & 55 & 11 & 10 & 0 & 3 & 13 \\
\bottomrule
\end{tabular}}
\end{table}

\appendix

\renewcommand{\thesection}{S\arabic{section}}
\renewcommand\thefigure{S\arabic{figure}}
\renewcommand\thetable{S\arabic{table}}
\numberwithin{equation}{section}
\makeatletter
\renewcommand\theequation{\thesection.\arabic{equation}}
% "activate" the preparatory code, but for section-level headers only
\newcommand{\section@cntformat}{Supplement \thesection:\ }
\makeatother

\newpage
\section*{Supplementary Materials}

\setcounter{figure}{0}
\setcounter{table}{0}

\begin{figure}[H]
\centering
\includegraphics[width=0.9\linewidth]{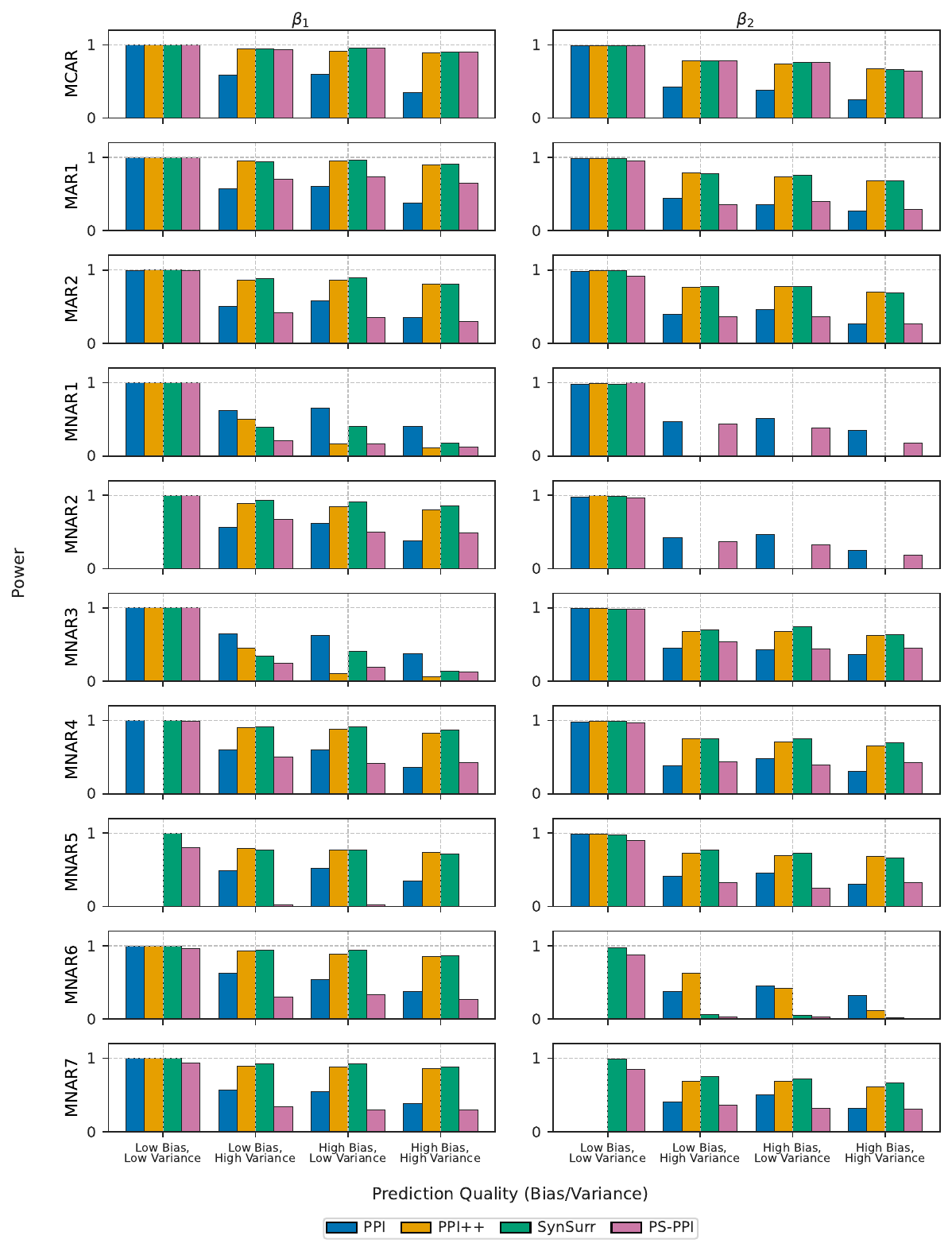}
\caption{Mean power of the included prediction-based inference methods for $\beta_1$ and $\beta_2$ under different quality of imputations in the continuous-confounder linear regression setting. Each row corresponds to a outcome observation model. Within each subplot, the y-axis represents the power of the selected methods, and the x-axis represents method groups under different levels of imputation quality.}
\label{fig:linear_regression_simulation_varying_quality_power}
\end{figure}

\begin{figure}[H]
\centering
\includegraphics[width=0.9\linewidth]{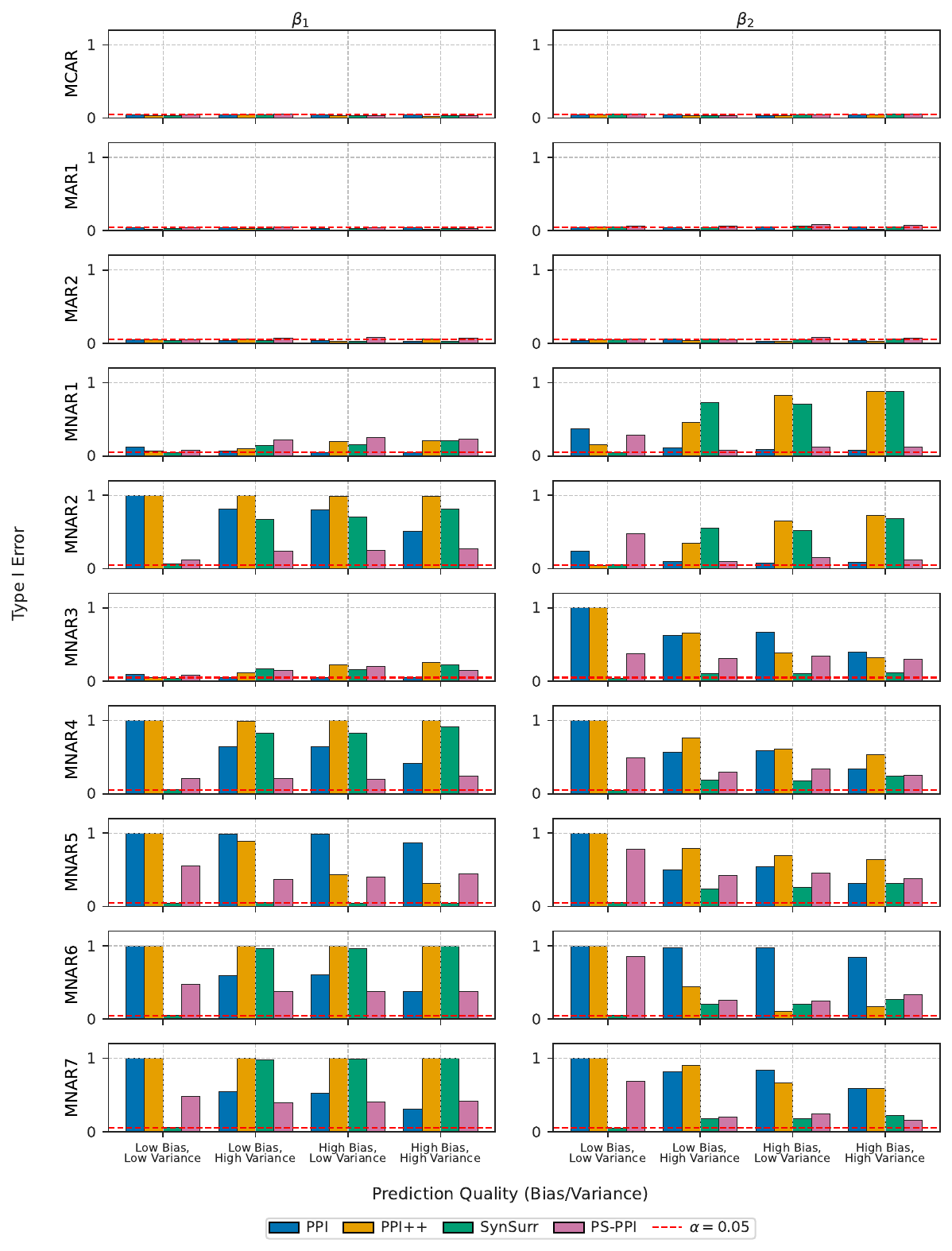}
\caption{Type I error of the included prediction-based inference methods for $\beta_1$ and $\beta_2$ under different quality of imputations in the continuous-confounder linear regression setting. Each row corresponds to a outcome observation model. Within each subplot, the y-axis represents the type I error of the selected methods, and the x-axis represents method groups under different levels of imputation quality.}
\label{fig:linear_regression_simulation_varying_quality_type_i_error}
\end{figure}

\begin{figure}[H]
\centering
\includegraphics[width=0.9\linewidth]{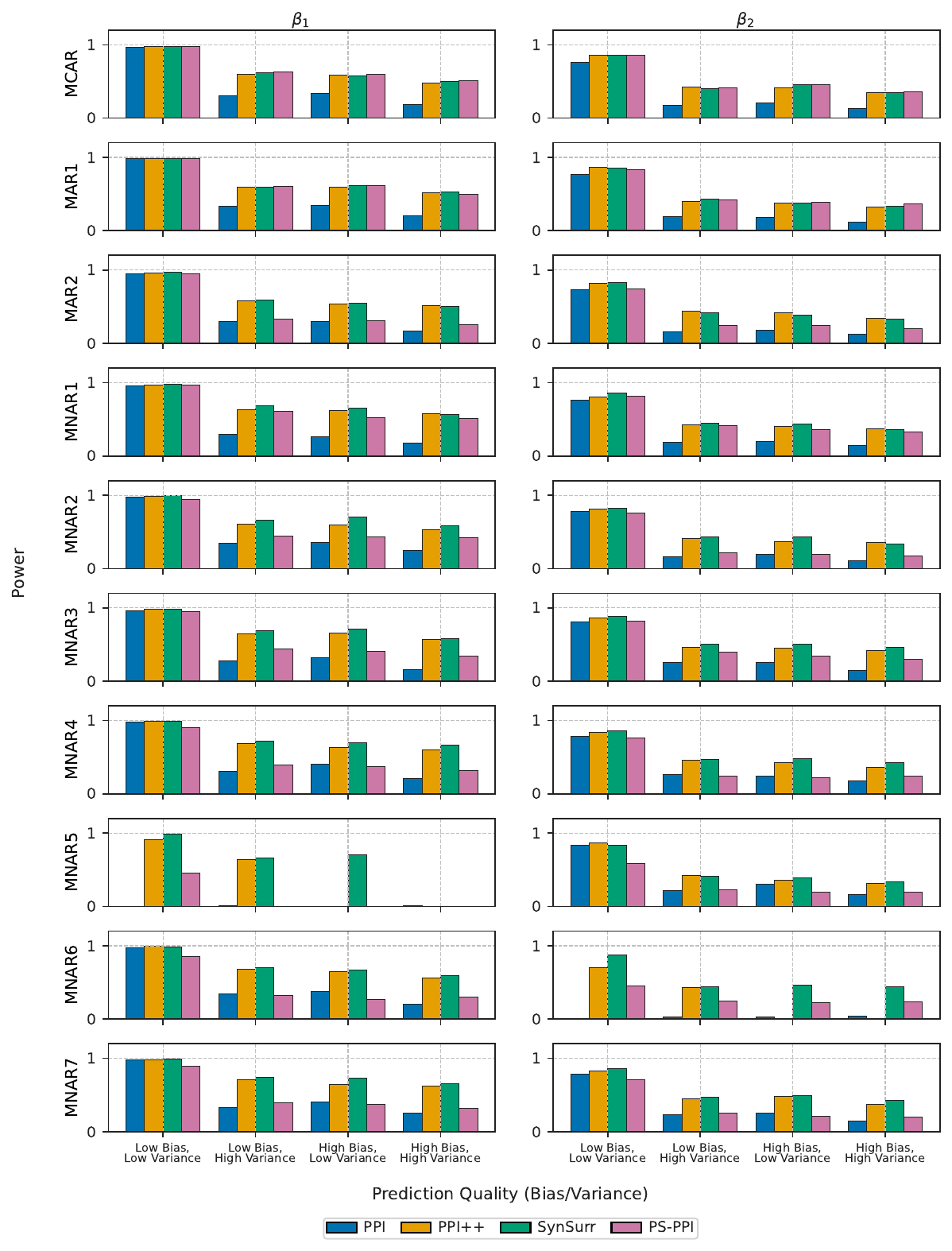}
\caption{Power of the included prediction-based inference methods for $\beta_1$ and $\beta_2$ under different quality of imputations in the categorical-confounder linear regression setting. Each row corresponds to a outcome observation model. Within each subplot, the y-axis represents the power of the selected methods, and the x-axis represents method groups under different levels of imputation quality.}
\label{fig:linear_regression_simulation_binary_z_varying_quality_power}
\end{figure}

\begin{figure}[H]
\centering
\includegraphics[width=0.9\linewidth]{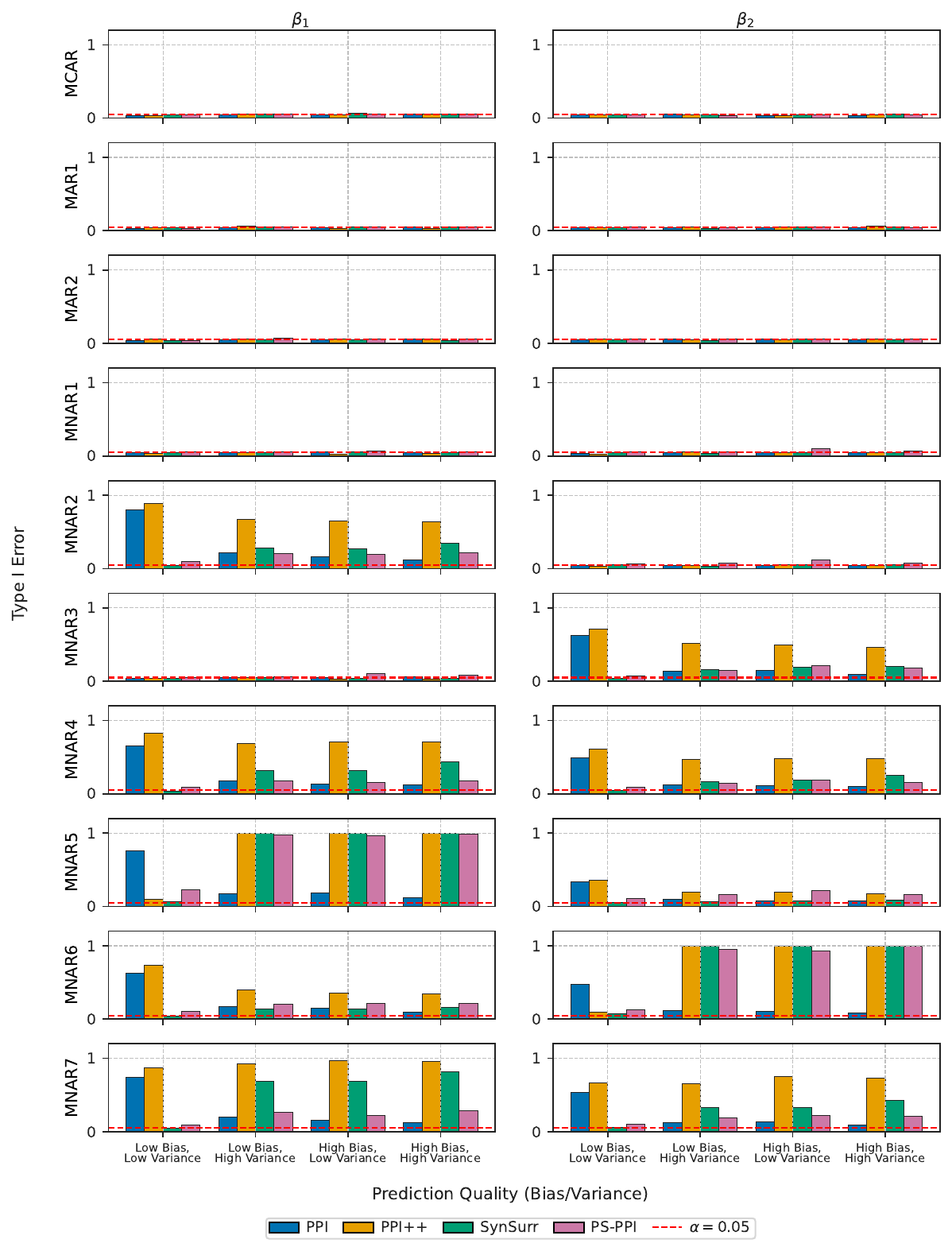}
\caption{Type I error of the included prediction-based inference methods for $\beta_1$ and $\beta_2$ under different quality of imputations in the categorical-confounder linear regression setting. Each row corresponds to a outcome observation model. Within each subplot, the y-axis represents the type I error of the selected methods, and the x-axis represents method groups under different levels of imputation quality.}
\label{fig:linear_regression_simulation_binary_z_varying_quality_type_i_error}
\end{figure}

\begin{figure}[H]
\centering
\includegraphics[width=0.9\linewidth]{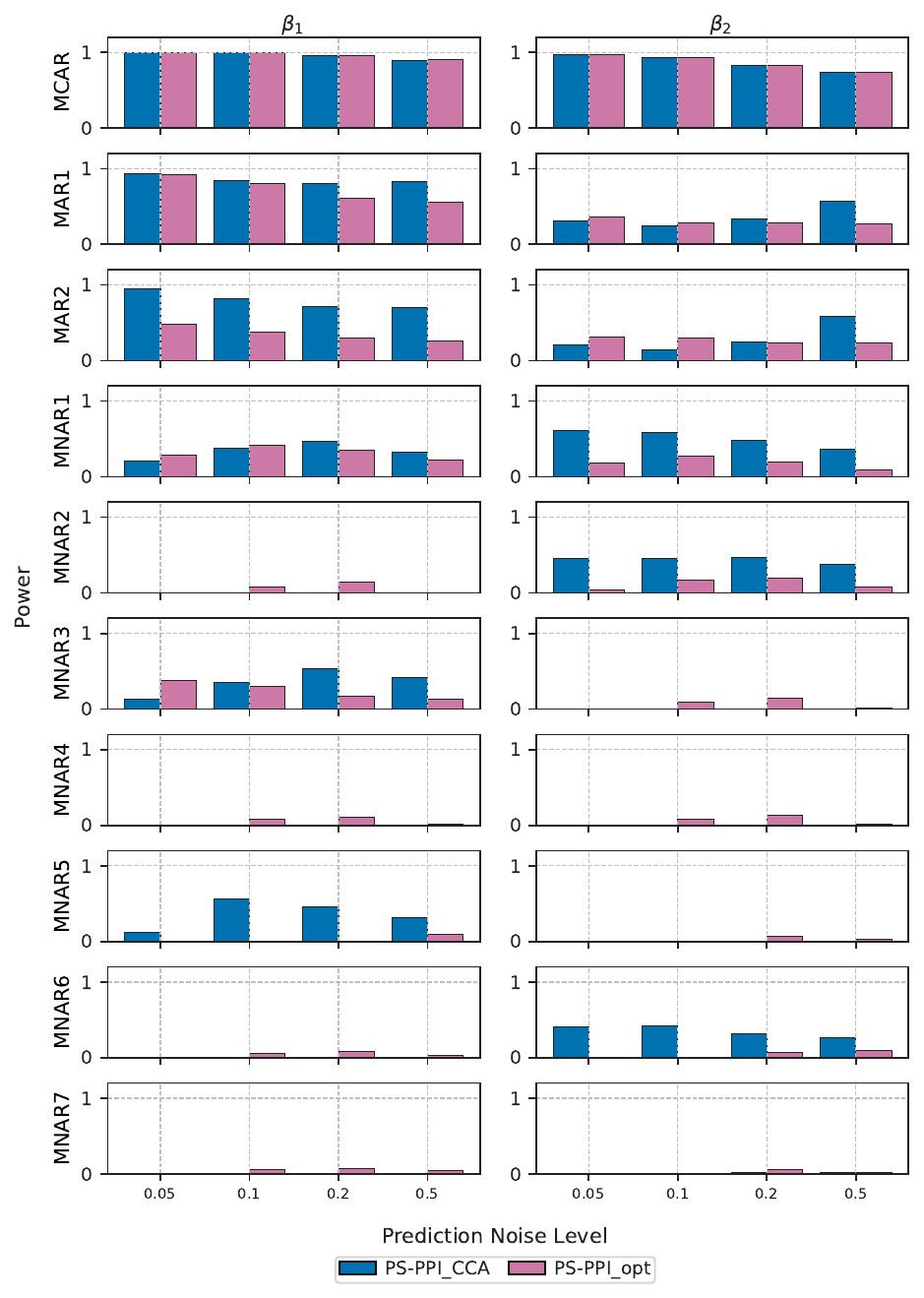}
\caption{Power of the included prediction-based inference methods for $\beta_1$ and $\beta_2$ under different quality of imputations in the logistic regression setting. Each row corresponds to a outcome observation model. Within each subplot, the y-axis represents the power of the selected methods, and the x-axis represents method groups under different levels of imputation quality.}
\label{fig:logistic_regression_varying_quality_power}
\end{figure}

\begin{figure}[H]
\centering
\includegraphics[width=0.9\linewidth]{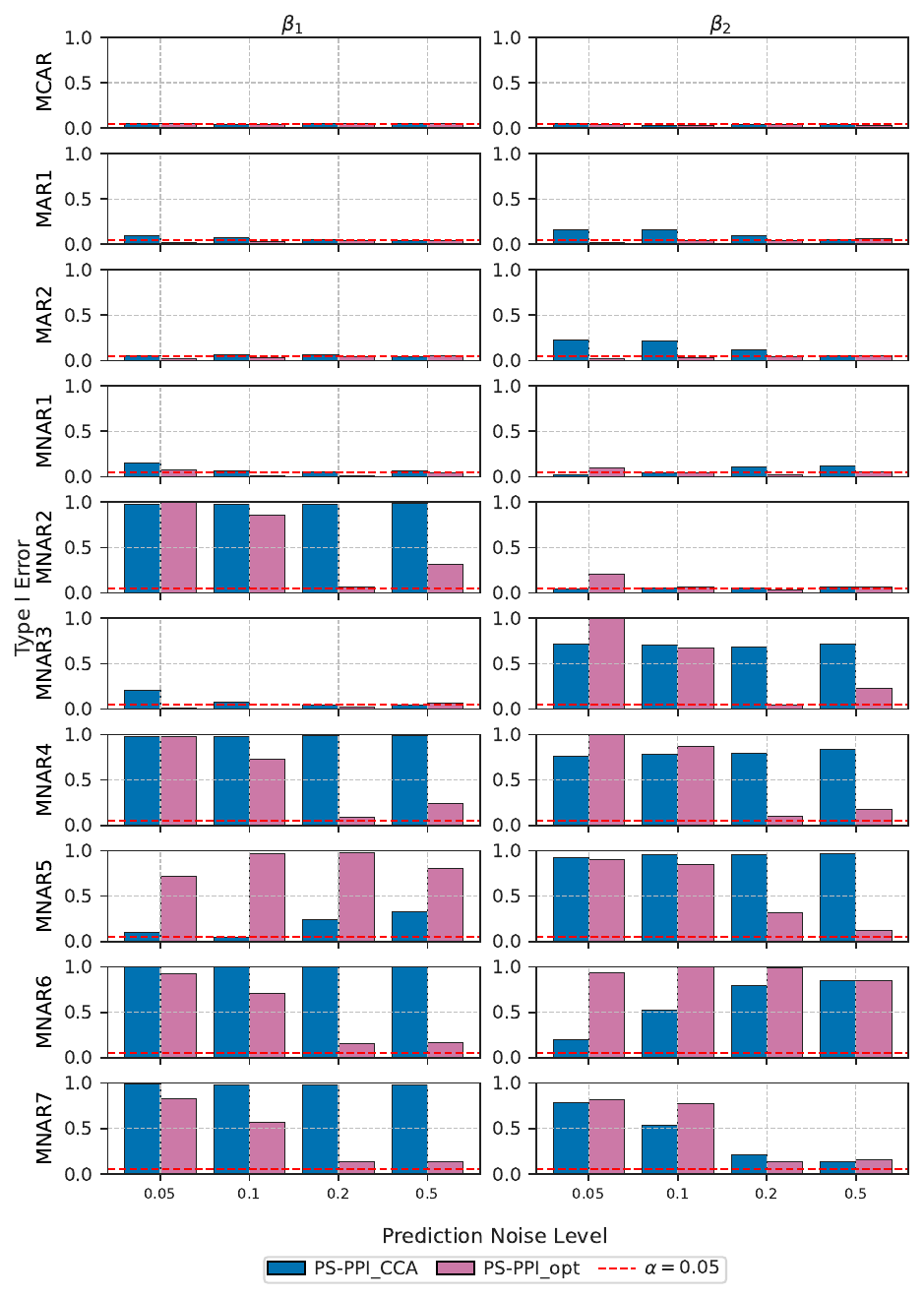}
\caption{Type I error of the included prediction-based inference methods for $\beta_1$ and $\beta_2$ under different quality of imputations in the logistic regression setting. Each row corresponds to a outcome observation model. Within each subplot, the y-axis represents the type I error of the selected methods, and the x-axis represents method groups under different levels of imputation quality.}
\label{fig:logistic_regression_varying_quality_type_i_error}
\end{figure}

% \begin{figure}[H]
%     \centering
%     \includegraphics[width=0.85\linewidth]{figs/linear_regression_tune_omega.pdf}
%     \caption{Type I error and power of the included methods for $\beta_1$ and $\beta_2$ under varying main effects in the selection models. Within each subplot, the X-axis represents the magnitude of the main effect of $X_1$ and $X_2$ in the selection model. The Y-axis represents the type I error and power in the first and second row, respectively.}
%     \label{fig:linear_regression_tune_omega}
% \end{figure}

% \begin{table}[!t]
% \centering
% \caption{Proportion of fully observed data under different $\omega$ in the linear regression simulation. The results are averaged across 500 replicates of simulation.}
% \label{tab:linear_regression_proportion_of_observed}
% \resizebox{1.0\columnwidth}{!}{
% \begin{tabular}{ccccccccccc}
% \toprule
%  & MCAR & MAR1 & MAR2 & MNAR1 & MNAR2 & MNAR3 & MNAR4 & MNAR5 & MNAR6 & MNAR7 \\
%  & $(\varnothing)$ & $(Z_{2})$ & $(X_{1},Z_{2})$ & $(Z_{2},Y)$ & $(X_{1},Z_{2},Y)$ & $(X_{2},Z_{2},Y)$ & $(X_{1},X_{2},Z_{2},Y)$ & $(X_{1},X_{2},Z_{2},Y,X_{1}Y)$ & $(X_{1},X_{2},Z_{2},Y,X_{2}Y)$ & $(X_{1},X_{2},Z_{2},Y,Z_{2}Y)$ \\
% \midrule
% $\omega = -1.5$ & 18.3\% & 20.9\% & 19.9\% & 15.3\% & 21.7\% & 23.2\% & 23.8\% & 23.2\% & 22.7\% & 23.9\% \\
% $\omega = 0.0$  & 49.9\% & 52.4\% & 49.9\% & 44.2\% & 49.2\% & 46.2\% & 48.5\% & 48.1\% & 48.6\% & 48.8\% \\
% $\omega = 1.5$  & 81.6\% & 82.2\% & 80.0\% & 77.4\% & 77.2\% & 70.1\% & 73.8\% & 73.9\% & 76.0\% & 74.1\% \\
% \bottomrule
% \end{tabular}}
% \end{table}

\begin{table}[H]
\centering
\caption{Simulation results under the logistic regression settings. \reddown; lower is better; \greenup; higher is better; $\rightarrow x$ indicates that values closer to $x$ are preferable. The best-performing results are \textbf{bolded}, and the second-best-performing results are \underline{underlined}. Abbreviations: mean squared error (MSE) and type I error (Type I). All bias and MSE values are scaled by a factor of 100 for ease of presentation.}
\label{tab:logistic_regression_summary}
\resizebox{\textwidth}{!}{
\begin{tabular}{@{}%
c|l|ccc|ccc|ccc|ccc||%
c|l|ccc|ccc|ccc|ccc@{}}
\toprule
& &
\multicolumn{6}{c|}{$\beta_{1}$} &
\multicolumn{6}{c||}{$\beta_{2}$} &
& &
\multicolumn{6}{c|}{$\beta_{1}$} &
\multicolumn{6}{c}{$\beta_{2}$} \\ \midrule
\multirow{3}{*}{\shortstack{\textbf{Missing}\\\textbf{Mechanism}}} & \multirow{3}{*}{\textbf{Method}} &
\multicolumn{3}{c}{$\beta_{1}=0,\,N=50{,}000$} &
\multicolumn{3}{c|}{$\beta_{1}=0.1,\,N=50{,}000$} &
\multicolumn{3}{c}{$\beta_{2}=0,\,N=50{,}000$} &
\multicolumn{3}{c||}{$\beta_{2}=0.1,\,N=50{,}000$} &
\multirow{3}{*}{\shortstack{\textbf{Missing}\\\textbf{Mechanism}}} & \multirow{3}{*}{\textbf{Method}} &
\multicolumn{3}{c}{$\beta_{1}=0,\,N=10{,}000$} &
\multicolumn{3}{c|}{$\beta_{1}=0.1,\,N=1{,}000$} &
\multicolumn{3}{c}{$\beta_{2}=0,\,N=10{,}000$} &
\multicolumn{3}{c}{$\beta_{2}=0.1,\,N=1{,}000$} \\
& &
\textbf{$\lvert$Bias$_0\rvert$} & \textbf{MSE$_0$} & \textbf{Type I} &
\textbf{$\lvert$Bias$\rvert$} & \textbf{MSE} & \textbf{Power} &
\textbf{$\lvert$Bias$_0\rvert$} & \textbf{MSE$_0$} & \textbf{Type I} &
\textbf{$\lvert$Bias$\rvert$} & \textbf{MSE} & \textbf{Power} &
& &
\textbf{$\lvert$Bias$_0\rvert$} & \textbf{MSE$_0$} & \textbf{Type I} &
\textbf{$\lvert$Bias$\rvert$} & \textbf{MSE} & \textbf{Power} &
\textbf{$\lvert$Bias$_0\rvert$} & \textbf{MSE$_0$} & \textbf{Type I} &
\textbf{$\lvert$Bias$\rvert$} & \textbf{MSE} & \textbf{Power} \\
& &
$\rightarrow 0$ & \reddown & $\rightarrow 0.05$ &
$\rightarrow 0$ & \reddown & \greenup &
$\rightarrow 0$ & \reddown & $\rightarrow 0.05$ &
$\rightarrow 0$ & \reddown & \greenup &
& &
$\rightarrow 0$ & \reddown & $\rightarrow 0.05$ &
$\rightarrow 0$ & \reddown & \greenup &
$\rightarrow 0$ & \reddown & $\rightarrow 0.05$ &
$\rightarrow 0$ & \reddown & \greenup \\
\midrule
% ===================== Pair: MCAR (left)  ||  MNAR3 (right) =====================
\multirow{8}{*}{\shortstack{MCAR \\ $(\varnothing)$}} & Full & 0.06 & 0.02 & 0.042 & 0.02 & 0.02 & 1.000 & 0.07 & 0.03 & 0.038 & 0.06 & 0.03 & 1.000 & \multirow{8}{*}{\shortstack{MNAR3 \\ $(X_2, Z_2, Y)$}} & Full & 0.06 & 0.02 & 0.042 & 0.02 & 0.02 & 1.000 & 0.07 & 0.03 & 0.038 & 0.06 & 0.03 & 1.000 \\
& WCCA (est) & \underline{0.01} & 0.10 & 0.062 & 0.12 & 0.10 & 0.870 & \underline{0.03} & 0.17 & \textbf{0.050} & 0.05 & 0.17 & 0.692 & & WCCA (est) & 0.30 & 1.25 & \textbf{0.050} & \underline{0.22} & 1.35 & 0.142 & 16.29 & 4.65 & \underline{0.220} & 15.94 & 4.52 & 0.006 \\
& CCA & \textbf{0.01} & 0.10 & 0.064 & 0.13 & 0.10 & 0.866 & \textbf{0.03} & 0.17 & \underline{0.046} & \underline{0.05} & 0.17 & 0.700 & & CCA & 0.30 & \underline{0.31} & 0.040 & 0.39 & \underline{0.32} & 0.422 & 21.60 & 5.40 & 0.726 & 21.44 & 5.34 & 0.000 \\
& Naive & 0.03 & \textbf{0.04} & 0.136 & \textbf{0.02} & \textbf{0.03} & \textbf{1.000} & 0.05 & \textbf{0.05} & 0.132 & \textbf{0.02} & \textbf{0.05} & \textbf{0.996} & & Naive & \textbf{0.20} & \textbf{0.04} & 0.162 & 0.23 & \textbf{0.04} & \textbf{0.998} & \underline{12.02} & \underline{1.50} & 1.000 & \underline{12.01} & \underline{1.50} & \textbf{0.990} \\
& PS-PPI (CCA) & 0.05 & 0.10 & \underline{0.046} & \underline{0.10} & \underline{0.10} & \underline{0.904} & 0.16 & \underline{0.16} & 0.032 & 0.18 & \underline{0.17} & \underline{0.732} & & PS-PPI (CCA) & \underline{0.24} & 0.34 & \underline{0.048} & \textbf{0.19} & 0.35 & \underline{0.432} & 21.86 & 5.54 & 0.732 & 21.66 & 5.48 & 0.004 \\
& PS-PPI & 0.05 & \underline{0.10} & \textbf{0.050} & 0.10 & 0.10 & \underline{0.904} & 0.15 & 0.16 & 0.030 & 0.18 & 0.17 & 0.728 & & PS-PPI & 0.31 & 1.28 & 0.060 & 0.52 & 1.29 & 0.132 & 17.43 & 5.13 & 0.224 & 16.96 & 5.06 & 0.010 \\
& MI & 8.44 & 0.81 & 0.570 & 7.56 & 0.67 & 0.194 & 6.07 & 0.54 & 0.252 & 4.98 & 0.42 & 0.348 & & MI & 22.57 & 5.41 & 0.874 & 20.87 & 4.72 & 0.138 & \textbf{2.54} & \textbf{0.73} & \textbf{0.078} & \textbf{4.46} & \textbf{0.88} & \underline{0.082} \\
& MI-RF & 2.00 & 0.20 & 0.334 & 3.22 & 0.26 & 0.700 & 2.74 & 0.37 & 0.322 & 4.86 & 0.54 & 0.466 & & MI-RF & 3.34 & 1.45 & 0.286 & 4.35 & 1.62 & 0.130 & 20.93 & 7.38 & 0.542 & 19.86 & 6.87 & 0.038 \\
\midrule
% ===================== Pair: MAR1 (left)  ||  MNAR4 (right) =====================
\multirow{8}{*}{\shortstack{MAR1 \\ $(Z_2)$}} & Full & 0.06 & 0.02 & 0.042 & 0.02 & 0.02 & 1.000 & 0.07 & 0.03 & 0.038 & 0.06 & 0.03 & 1.000 & \multirow{8}{*}{\shortstack{MNAR4 \\ $(X_1, X_2, Z_2, Y)$}} & Full & 0.06 & 0.02 & 0.042 & 0.02 & 0.02 & 1.000 & 0.07 & 0.03 & 0.038 & 0.06 & 0.03 & 1.000 \\
& WCCA (est) & 0.20 & 0.23 & \underline{0.048} & 0.30 & 0.23 & 0.522 & 0.84 & 0.58 & 0.040 & 0.88 & 0.57 & 0.220 & & WCCA (est) & 14.81 & 5.29 & \underline{0.214} & 13.95 & 5.21 & 0.010 & 15.38 & 5.76 & \underline{0.152} & 14.39 & 5.45 & 0.014 \\
& CCA & \underline{0.11} & 0.13 & \textbf{0.052} & 0.24 & 0.13 & 0.752 & 0.37 & \underline{0.20} & \textbf{0.046} & 0.37 & \underline{0.20} & 0.578 & & CCA & 30.80 & 9.91 & 0.996 & 30.34 & 9.66 & 0.000 & 25.20 & 7.17 & 0.830 & 24.68 & 6.92 & 0.000 \\
& Naive & \textbf{0.00} & \textbf{0.04} & 0.144 & \textbf{0.05} & \textbf{0.04} & \textbf{1.000} & \textbf{0.04} & \textbf{0.06} & 0.144 & \textbf{0.00} & \textbf{0.06} & \textbf{0.984} & & Naive & \underline{8.23} & \textbf{0.72} & 0.994 & \underline{8.18} & \textbf{0.71} & \textbf{0.998} & \underline{10.47} & \textbf{1.15} & 1.000 & \underline{10.40} & \textbf{1.14} & \textbf{0.994} \\
& PS-PPI (CCA) & 0.16 & \underline{0.12} & 0.042 & \underline{0.20} & \underline{0.13} & \underline{0.828} & \underline{0.09} & 0.23 & \underline{0.058} & \underline{0.08} & 0.23 & \underline{0.582} & & PS-PPI (CCA) & 30.74 & 9.92 & 0.992 & 30.08 & 9.52 & 0.000 & 25.63 & 7.35 & 0.832 & 25.03 & 7.10 & 0.002 \\
& PS-PPI & 0.20 & 0.22 & 0.046 & 0.25 & 0.22 & 0.564 & 0.34 & 0.64 & 0.060 & 0.38 & 0.65 & 0.262 & & PS-PPI & 15.00 & 5.55 & 0.240 & 14.49 & 5.48 & 0.014 & 15.85 & 5.92 & 0.174 & 15.11 & 5.93 & 0.018 \\
& MI & 13.35 & 1.92 & 0.854 & 11.99 & 1.56 & 0.094 & 18.21 & 3.49 & 0.918 & 16.51 & 2.90 & 0.116 & & MI & \textbf{7.31} & \underline{1.16} & \textbf{0.120} & \textbf{4.78} & \underline{0.86} & \underline{0.036} & \textbf{6.51} & \underline{1.55} & \textbf{0.112} & \textbf{4.88} & \underline{1.28} & \underline{0.052} \\
& MI-RF & 0.64 & 0.35 & 0.314 & 1.75 & 0.38 & 0.314 & 10.15 & 1.86 & 0.554 & 11.40 & 2.18 & 0.190 & & MI-RF & 19.14 & 5.49 & 0.552 & 18.09 & 5.22 & 0.028 & 22.03 & 8.52 & 0.518 & 20.50 & 8.11 & 0.022 \\
\midrule
% ===================== Pair: MAR2 (left)  ||  MNAR5 (right) =====================
\multirow{8}{*}{\shortstack{MAR2 \\ $(X_1, Z_2)$}} & Full & 0.06 & 0.02 & 0.042 & 0.02 & 0.02 & 1.000 & 0.07 & 0.03 & 0.038 & 0.06 & 0.03 & 1.000 & \multirow{8}{*}{\shortstack{MNAR5 \\ $(X_1, X_2, Z_2, Y, X_1Y)$}} & Full & 0.06 & 0.02 & 0.042 & 0.02 & 0.02 & 1.000 & 0.07 & 0.03 & 0.038 & 0.06 & 0.03 & 1.000 \\
& WCCA (est) & \underline{0.06} & 0.57 & 0.068 & 0.20 & 0.56 & 0.238 & 0.98 & 0.97 & 0.076 & 0.83 & 0.91 & 0.130 & & WCCA (est) & 71.57 & 62.18 & 0.818 & 74.17 & 67.06 & 0.124 & 13.29 & 11.13 & \underline{0.130} & 11.53 & 10.84 & \underline{0.040} \\
& CCA & 0.07 & 0.17 & 0.062 & 0.19 & 0.17 & 0.676 & 0.65 & \underline{0.22} & 0.060 & 0.62 & \underline{0.22} & 0.478 & & CCA & \underline{14.95} & \underline{2.99} & \underline{0.348} & \underline{17.54} & 3.90 & \underline{0.358} & 37.67 & 15.22 & 0.974 & 36.52 & 14.37 & 0.000 \\
& Naive & \textbf{0.00} & \textbf{0.04} & 0.152 & \textbf{0.05} & \textbf{0.04} & \textbf{1.000} & \textbf{0.05} & \textbf{0.05} & 0.122 & \textbf{0.00} & \textbf{0.06} & \textbf{0.986} & & Naive & 19.08 & 3.68 & 1.000 & 18.89 & \textbf{3.61} & \textbf{1.000} & \textbf{10.80} & \textbf{1.22} & 1.000 & \underline{10.70} & \textbf{1.20} & \textbf{0.994} \\
& PS-PPI (CCA) & 0.15 & \underline{0.16} & \underline{0.042} & 0.10 & \underline{0.16} & \underline{0.698} & \underline{0.12} & 0.23 & \textbf{0.048} & \underline{0.12} & 0.24 & \underline{0.598} & & PS-PPI (CCA) & \textbf{14.42} & \textbf{2.90} & \textbf{0.326} & \textbf{17.24} & \underline{3.81} & 0.330 & 37.90 & 15.41 & 0.968 & 36.69 & 14.58 & 0.002 \\
& PS-PPI & 0.12 & 0.52 & \textbf{0.056} & \underline{0.05} & 0.49 & 0.264 & 0.16 & 0.90 & \underline{0.056} & 0.38 & 0.87 & 0.222 & & PS-PPI & 70.50 & 59.66 & 0.806 & 72.75 & 63.44 & 0.096 & 12.20 & 11.15 & \textbf{0.126} & \textbf{10.63} & 11.17 & 0.034 \\
& MI & 12.75 & 1.80 & 0.714 & 11.57 & 1.51 & 0.124 & 19.65 & 4.02 & 0.952 & 18.34 & 3.54 & 0.138 & & MI & 38.01 & 15.41 & 0.860 & 40.05 & 17.12 & 0.090 & 20.66 & \underline{5.10} & 0.466 & 21.14 & 5.24 & 0.014 \\
& MI-RF & 3.37 & 0.67 & 0.370 & 2.67 & 0.61 & 0.076 & 7.67 & 1.57 & 0.456 & 8.93 & 1.79 & 0.094 & & MI-RF & 47.20 & 25.31 & 0.908 & 48.59 & 26.95 & 0.058 & \underline{11.59} & 5.54 & 0.276 & 11.55 & \underline{5.10} & 0.032 \\
\midrule
% ===================== Pair: MNAR1 (left)  ||  MNAR6 (right) =====================
\multirow{8}{*}{\shortstack{MNAR1 \\ $(Z_2, Y)$}} & Full & 0.06 & 0.02 & 0.042 & 0.02 & 0.02 & 1.000 & 0.07 & 0.03 & 0.038 & 0.06 & 0.03 & 1.000 & \multirow{8}{*}{\shortstack{MNAR6 \\ $(X_1, X_2, Z_2, Y, X_2Y)$}} & Full & 0.06 & 0.02 & 0.042 & 0.02 & 0.02 & 1.000 & 0.07 & 0.03 & 0.038 & 0.06 & 0.03 & 1.000 \\
& WCCA (est) & \textbf{0.06} & 0.80 & \textbf{0.054} & \textbf{0.03} & 0.81 & 0.200 & \underline{3.07} & 2.20 & \underline{0.080} & \textbf{2.73} & 2.18 & 0.076 & & WCCA (est) & 12.38 & 8.98 & \underline{0.170} & 11.38 & 9.33 & 0.018 & 74.11 & 61.09 & 0.864 & 76.03 & 64.34 & 0.118 \\
& CCA & 0.07 & \underline{0.35} & 0.040 & \underline{0.06} & \underline{0.35} & \underline{0.392} & 6.13 & 0.97 & 0.138 & 6.23 & 0.98 & \underline{0.372} & & CCA & 39.63 & 16.25 & 1.000 & 38.86 & 15.70 & 0.000 & 33.57 & 12.43 & 0.862 & 35.39 & 13.69 & 0.250 \\
& Naive & 0.06 & \textbf{0.04} & 0.144 & 0.09 & \textbf{0.04} & \textbf{0.996} & \textbf{2.73} & \textbf{0.13} & 0.352 & \underline{2.80} & \textbf{0.14} & \textbf{0.990} & & Naive & \textbf{7.71} & \textbf{0.64} & 0.990 & \textbf{7.65} & \textbf{0.63} & \textbf{0.996} & \textbf{20.47} & \textbf{4.25} & 1.000 & \textbf{20.29} & \textbf{4.17} & \textbf{0.994} \\
& PS-PPI (CCA) & 0.22 & 0.38 & 0.060 & 0.07 & 0.38 & 0.318 & 6.12 & \underline{0.92} & 0.126 & 6.22 & \underline{0.95} & 0.364 & & PS-PPI (CCA) & 39.96 & 16.55 & 1.000 & 38.97 & 15.78 & 0.000 & \underline{33.18} & \underline{12.18} & \underline{0.858} & \underline{35.04} & \underline{13.53} & \underline{0.256} \\
& PS-PPI & \underline{0.06} & 0.79 & \underline{0.044} & 0.15 & 0.81 & 0.222 & 3.25 & 1.91 & \textbf{0.052} & 2.91 & 1.95 & 0.090 & & PS-PPI & 12.73 & 8.63 & \textbf{0.168} & 11.67 & 8.64 & \underline{0.032} & 72.58 & 59.36 & \textbf{0.852} & 74.75 & 62.84 & 0.100 \\
& MI & 20.71 & 4.67 & 0.800 & 19.19 & 4.06 & 0.128 & 19.41 & 4.25 & 0.616 & 17.49 & 3.56 & 0.130 & & MI & 12.13 & \underline{2.34} & 0.276 & 11.13 & \underline{2.27} & 0.024 & 56.26 & 33.25 & 0.888 & 57.61 & 34.99 & 0.092 \\
& MI-RF & 4.84 & 1.16 & 0.314 & 5.37 & 1.19 & 0.168 & 3.56 & 2.57 & 0.314 & 5.58 & 2.81 & 0.112 & & MI-RF & \underline{8.36} & 3.80 & 0.264 & \underline{8.59} & 3.71 & 0.026 & 69.69 & 54.49 & 0.954 & 72.60 & 58.83 & 0.068 \\
\midrule
% ===================== Pair: MNAR2 (left)  ||  MNAR7 (right) =====================
\multirow{8}{*}{\shortstack{MNAR2 \\ $(X_1, Z_2, Y)$}} & Full & 0.06 & 0.02 & 0.042 & 0.02 & 0.02 & 1.000 & 0.07 & 0.03 & 0.038 & 0.06 & 0.03 & 1.000 & \multirow{8}{*}{\shortstack{MNAR7 \\ $(X_1, X_2, Z_2, Y, Z_2Y)$}} & Full & 0.06 & 0.02 & 0.042 & 0.02 & 0.02 & 1.000 & 0.07 & 0.03 & 0.038 & 0.06 & 0.03 & 1.000 \\
& WCCA (est) & 15.99 & 4.39 & 0.338 & 15.55 & 4.34 & 0.014 & \underline{2.18} & 2.67 & 0.092 & \textbf{1.49} & 2.68 & 0.096 & & WCCA (est) & 12.19 & 14.75 & 0.170 & 10.93 & 15.19 & \underline{0.048} & 22.64 & 22.10 & 0.166 & 20.49 & 21.07 & 0.036 \\
& CCA & 26.74 & 7.55 & 0.986 & 26.44 & 7.40 & 0.000 & 2.51 & 0.62 & 0.078 & 2.53 & 0.62 & 0.312 & & CCA & 38.14 & 15.35 & 0.992 & 37.59 & 15.03 & 0.002 & 11.23 & 2.96 & \underline{0.140} & 11.22 & 2.97 & 0.012 \\
& Naive & \underline{10.92} & \underline{1.23} & 1.000 & \underline{10.88} & \underline{1.22} & \textbf{1.000} & 2.40 & \textbf{0.12} & 0.302 & \underline{2.35} & \textbf{0.11} & \textbf{0.992} & & Naive & \textbf{6.87} & \textbf{0.51} & 0.976 & \textbf{6.79} & \textbf{0.51} & \textbf{0.996} & \textbf{10.96} & \textbf{1.26} & 1.000 & \textbf{10.85} & \textbf{1.24} & \textbf{0.992} \\
& PS-PPI (CCA) & 26.73 & 7.55 & 0.986 & 26.26 & 7.31 & 0.000 & 2.63 & \underline{0.55} & \underline{0.066} & 2.59 & \underline{0.58} & \underline{0.390} & & PS-PPI (CCA) & 37.73 & 15.08 & 0.982 & 37.09 & 14.61 & 0.000 & \underline{11.22} & \underline{2.69} & \textbf{0.130} & \underline{11.02} & \underline{2.76} & 0.010 \\
& PS-PPI & 16.21 & 4.18 & \underline{0.312} & 15.73 & 4.01 & 0.006 & 3.23 & 2.45 & \textbf{0.064} & 2.77 & 2.54 & 0.078 & & PS-PPI & \underline{10.48} & 13.43 & \underline{0.134} & \underline{9.03} & 13.15 & 0.040 & 21.09 & 20.95 & 0.158 & 18.24 & 19.09 & 0.018 \\
& MI & \textbf{4.71} & \textbf{0.78} & \textbf{0.102} & \textbf{4.50} & \textbf{0.81} & \underline{0.076} & 19.10 & 4.20 & 0.534 & 18.45 & 3.94 & 0.144 & & MI & 10.67 & \underline{2.20} & \textbf{0.132} & 10.37 & \underline{2.14} & 0.012 & 38.59 & 17.08 & 0.462 & 41.58 & 19.16 & \underline{0.128} \\
& MI-RF & 17.61 & 4.45 & 0.626 & 17.10 & 4.33 & 0.018 & \textbf{2.16} & 2.61 & 0.296 & 5.27 & 2.90 & 0.112 & & MI-RF & 18.41 & 7.46 & 0.178 & 20.19 & 7.96 & 0.022 & 25.27 & 19.69 & 0.344 & 24.18 & 18.37 & 0.028 \\
\bottomrule
\end{tabular}}
\end{table}

\begin{figure}[H]
    \centering
    \includegraphics[width=1.0\linewidth]{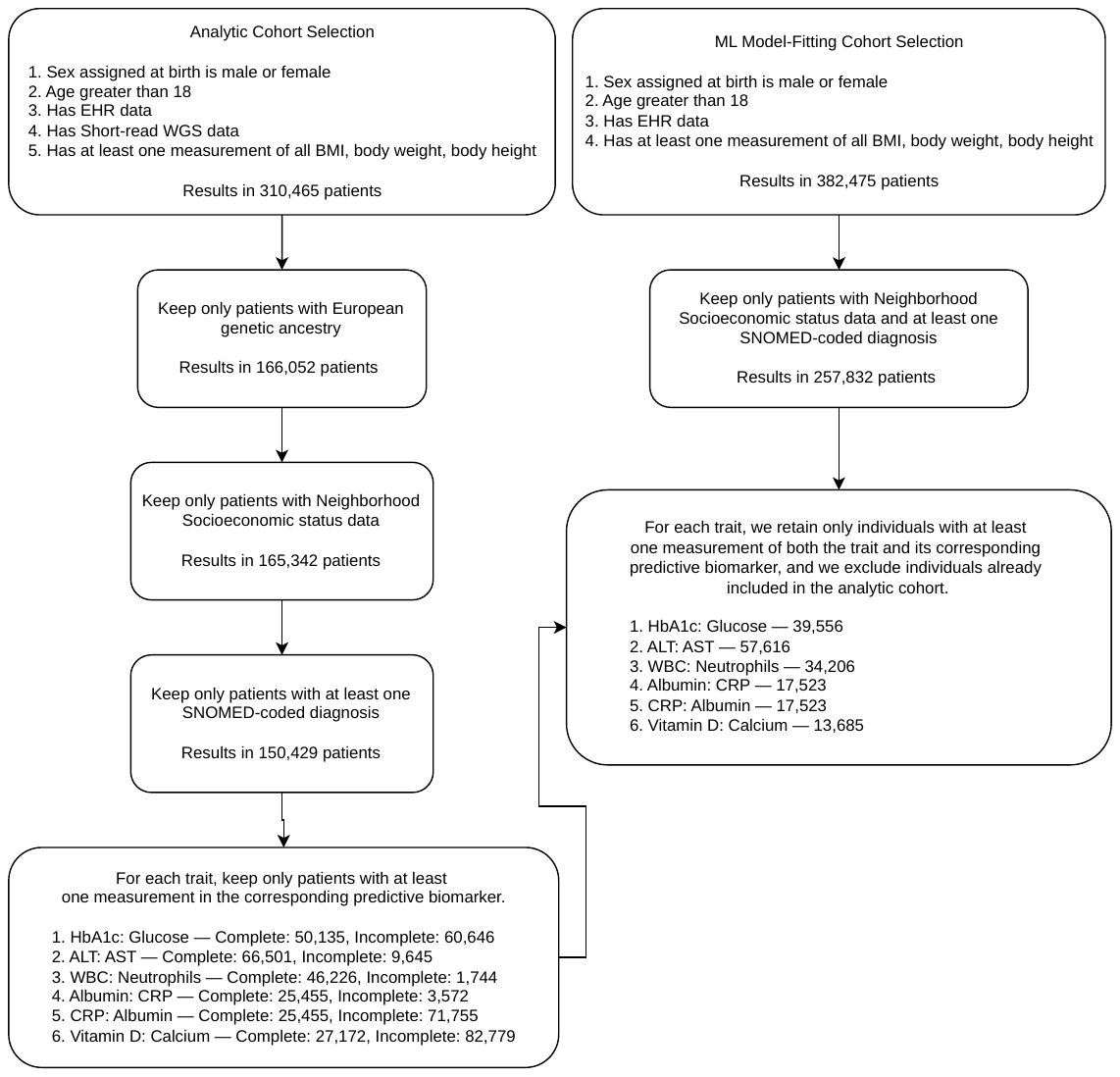}
    \caption{Flow diagram of the preprocessing of two datasets derived from the All of Us (AoU) database for lab biomarker genome-wide association studies (GWAS) in Section~\ref{sec:gwas}. The analytic cohort is used to fit the GWAS model; the machine learning (ML) model-fitting cohort is used to train ML models for imputing missing biomarker values in prediction-based inference approaches.}
    \label{fig:gwas-pipeline}
\end{figure}

\begin{table}[H]
\centering
\caption{International Classification of Diseases (ICD) code (both versions 9 and 10) prefixes of common phenotypes used in Table~\ref{tab:summary_statistics_traits}.}
\label{tab:phenotype_icd_prefix}
\begin{tabular}{lc}
\toprule
\textbf{Phenotype} & \textbf{ICD codes} \\
\midrule
Depression   & F32,\; F33,\; 311 \\
Hypertension & I10,\; 401 \\
Diabetes     & 250,\; E11 \\
Anxiety      & F41,\; 300 \\
Obesity      & E66,\; 278 \\
\bottomrule
\end{tabular}
\end{table}

\begin{table}[!tbp]
\centering
\caption{Units and All of Us (AoU) field names for the biomarkers used in this work. Abbreviations of the biomarkers are as follows: hemoglobin A1c (HbA1c), alanine aminotransferase (ALT), aspartate aminotransferase (AST), white blood cell count (WBC),  and C-reactive protein (CRP). }
\label{tab:aou_trait_units_fields}
\resizebox{\textwidth}{!}{
\begin{tabular}{lll}
\toprule
\textbf{Biomarkers} & \textbf{Unit} & \textbf{AoU field name} \\
\midrule
HbA1c        & percent                    & Hemoglobin A1c/Hemoglobin.total in Blood \\
ALT          & international unit per liter & Alanine aminotransferase [Enzymatic activity/volume] in Serum or Plasma \\
WBC          & thousand per microliter    & Leukocytes [\#/volume] in Blood by Automated count \\
Albumin      & gram per liter             & Albumin [Mass/volume] in Serum or Plasma \\
CRP          & milligram per deciliter    & C reactive protein [Mass/volume] in Serum or Plasma \\
Vitamin D    & nanogram per milliliter    & 25-Hydroxyvitamin D3+25-Hydroxyvitamin D2 [Mass/volume] in Serum or Plasma \\
Glucose      & milligram per deciliter    & Glucose [Mass/volume] in Serum or Plasma \\
AST          & international unit per liter & Aspartate aminotransferase [Enzymatic activity/volume] in Serum or Plasma \\
Neutrophils  & thousand per microliter    & Neutrophils [\#/volume] in Blood by Automated count \\
Calcium      & milligram per deciliter    & Calcium [Mass/volume] in Serum or Plasma \\
Body weight  & kilogram                   & Body weight \\
Body height  & centimeter                 & Body height \\
BMI          & kilogram per square meter  & Body mass index (BMI) [Ratio] \\
\bottomrule
\end{tabular}}
\end{table}

% \begin{table}[H]
% \centering
% \caption{Biomarker measurement patterns among 150,429 patients, of whom 150,055 had at least one outpatient visit. \textbf{Left}: $N$: the number of patients with at least one biomarker measurements, as a proportion (Prop.) among those with at least one outpatient visit. \textbf{Right}: \#\ Visits: the median [IQR] number of outpatient visits among those with at least one biomarker measurement; \#\ Meas.: the median [IQR] number of biomarker measurements per patient; Prop.: the median proportion of a patient’s outpatient visits at which the biomarker was measured.}
% \label{tab:biomarkers}
% %\resizebox{\textwidth}{!}{%
% \begin{tabular}{lrrrrr}
% \toprule
%  & \multicolumn{2}{c}{\shortstack[c]{\textbf{Number of Patients with}\\\textbf{at least one measurement}}} & \multicolumn{3}{c}{\shortstack[c]{\textbf{Summary of repeated}\\\textbf{measurements if at least one measure}}} \\
% \cmidrule(lr){2-3} \cmidrule(lr){4-6}
% \textbf{Biomarker}  & \textbf{$N$}  & \textbf{Prop.\ (\%)}  & \textbf{\#\ Visits}  & \textbf{\#\ Meas.}  & \textbf{Prop.\ (\%)} \\
% \midrule
% WBC & 111,850 & 74.5 & 33 [6, 100] & 7 [3, 18] & 4.4 \\
% Albumin & 104,179 & 69.4 & 44 [8, 133] & 6 [2, 15] & 3.3 \\
% ALT & 104,392 & 69.6 & 39 [7, 123] & 7 [3, 16] & 3.7 \\
% HbA1c & 65,770 & 43.8 & 62 [15, 172] & 3 [1, 7] & 0.8  \\
% CRP & 32,178 & 21.4 & 49 [9, 160] & 2 [1, 3] & 0.0 \\
% Vitamin D & 29,269 & 19.5 & 106 [42, 245] & 2 [1, 5] & 0.8 \\
% \bottomrule
% \end{tabular}%
% %}
% \end{table}

\begin{figure}[H]
    \centering
    \includegraphics[width=0.85\linewidth]{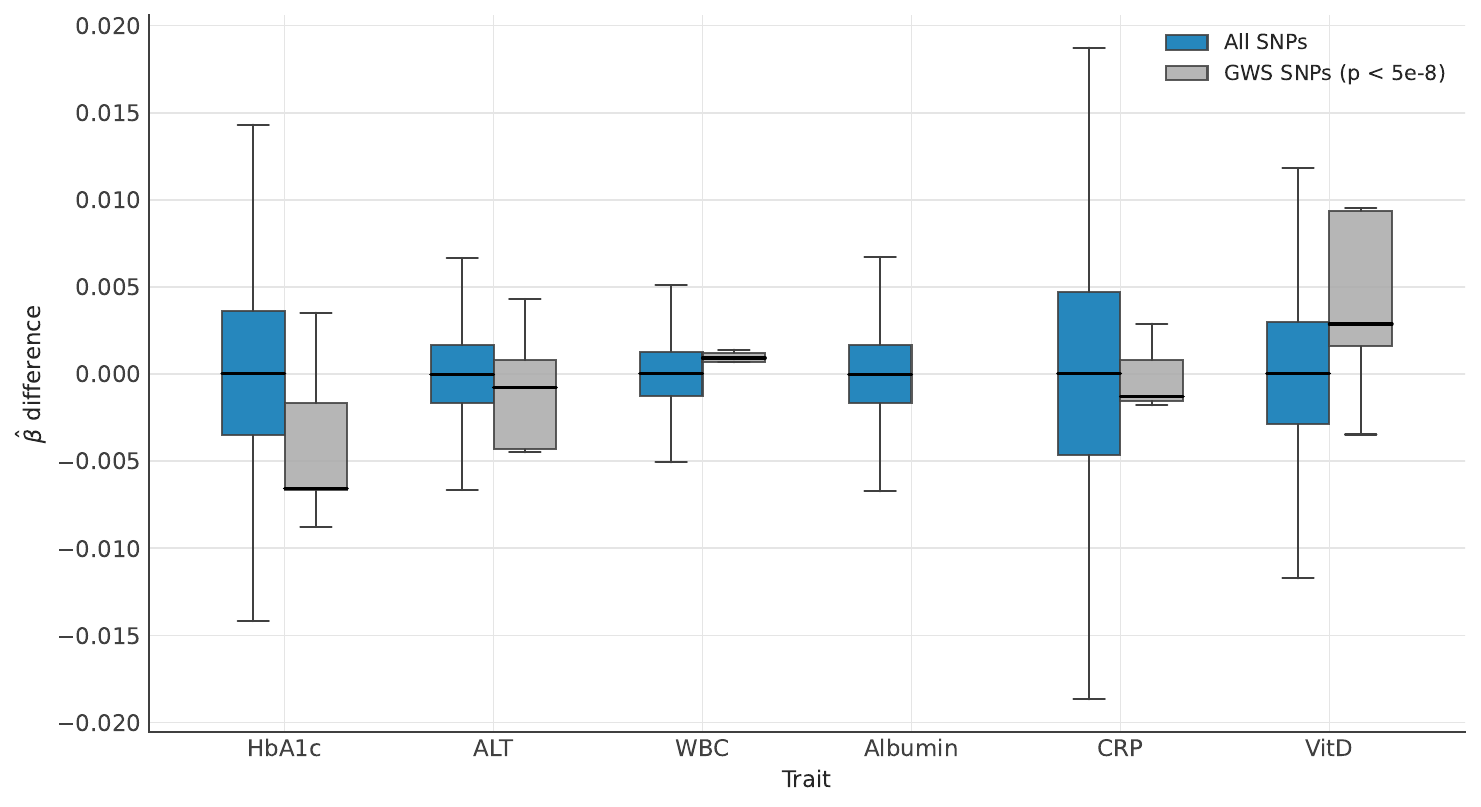}
    \caption{Difference of $\hat{\beta}$ estimates (SynSurr - CCA) across traits for all SNPs. The dashed line indicates 0.}
    \label{fig:beta-diff-synsurr-cca}
\end{figure}

\begin{figure}[H]
    \centering
    \includegraphics[width=0.85\linewidth]{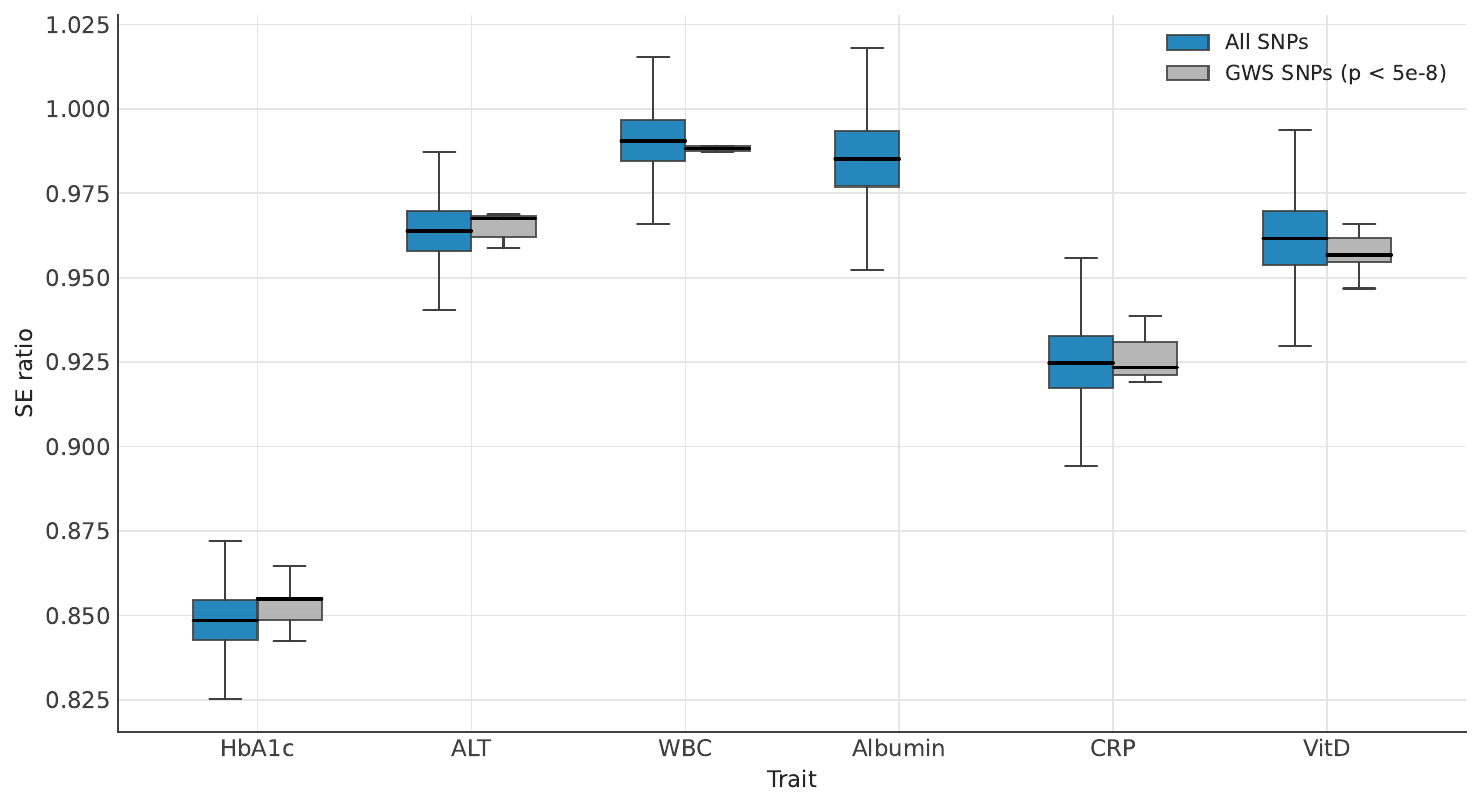}
    \caption{Ratio of standard errors (SynSurr / CCA) across traits for all SNPs. The dashed line indicates 1.}
    \label{fig:se-ratio-synsurr-cca}
\end{figure}

\begin{figure}[H]
    \centering
    \includegraphics[width=0.85\linewidth]{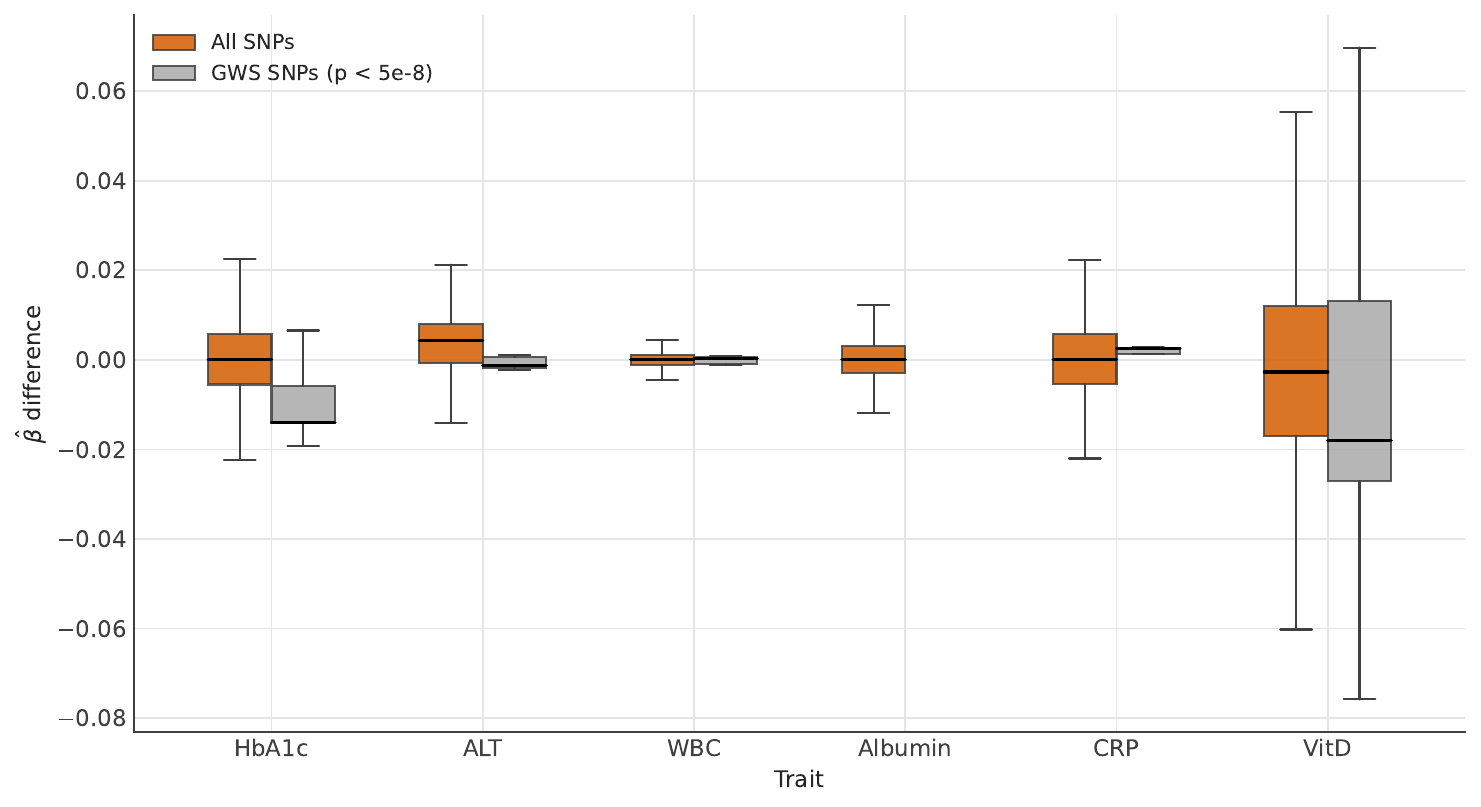}
    \caption{Difference of $\hat{\beta}$ estimates (PS-PPI - WCCA) across traits for all SNPs. The dashed line indicates 0.}
    \label{fig:beta-diff-psppi-wcca}
\end{figure}

\begin{figure}[H]
    \centering
    \includegraphics[width=0.85\linewidth]{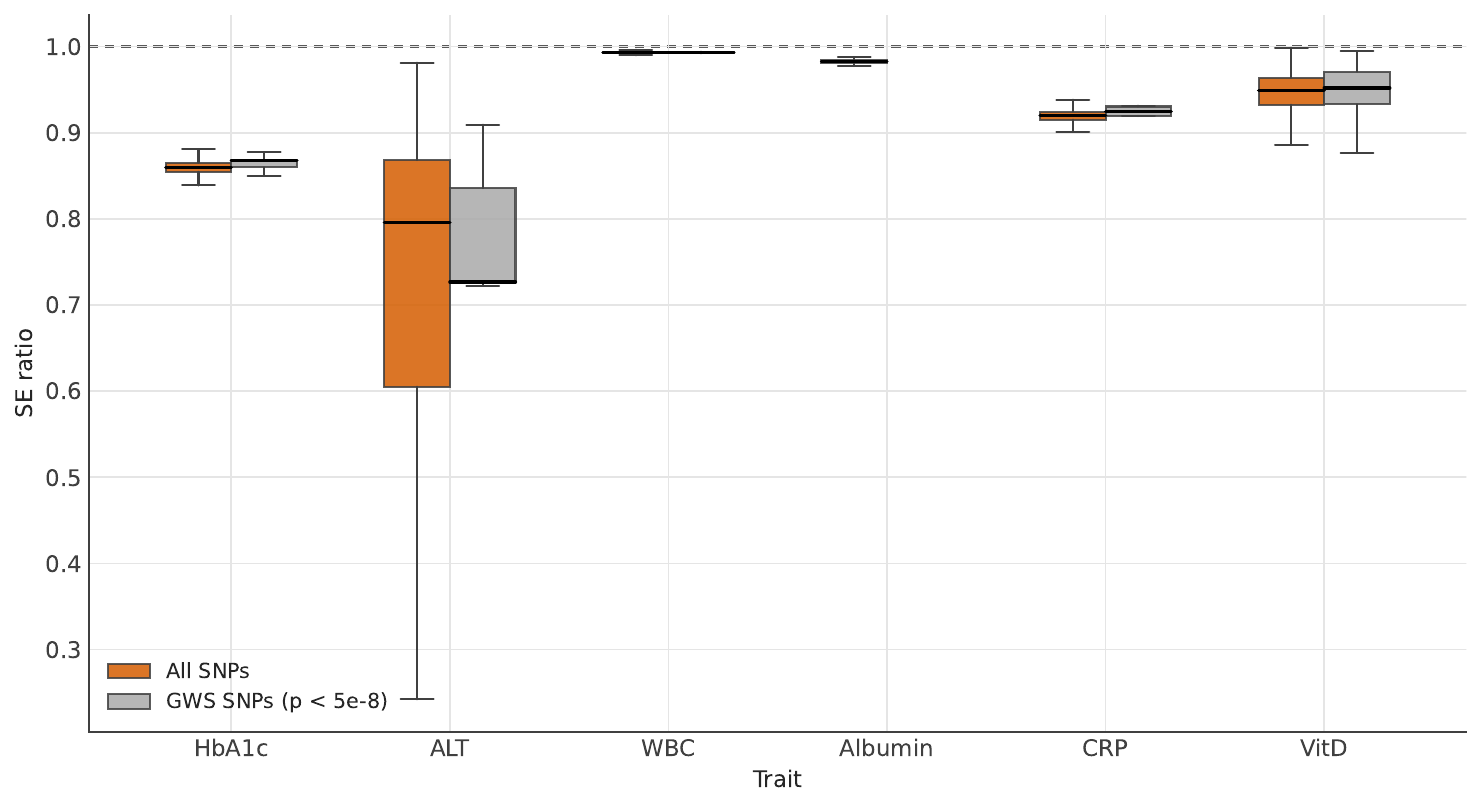}
    \caption{Ratio of standard errors (PS-PPI / WCCA) across traits for all SNPs. The dashed line indicates 1.}
    \label{fig:se-ratio-psppi-wcca}
\end{figure}

% =========== significant SNPs ============
% \begin{figure}[H]
%     \centering
%     \includegraphics[width=0.85\linewidth]{figs/sig_beta_diff_synsurr_cca.pdf}
%     \caption{Difference of $\hat{\beta}$ estimates (SynSurr - CCA) across traits for significant SNPs ($p<5\times10^{-8}$). The dashed line indicates 0.}
%     \label{fig:sig-beta-diff-synsurr-cca}
% \end{figure}

% \begin{figure}[H]
%     \centering
%     \includegraphics[width=0.85\linewidth]{figs/sig_se_ratio_synsurr_cca.pdf}
%     \caption{Ratio of standard errors (SynSurr / CCA) across traits for significant SNPs ($p<5\times10^{-8}$). The dashed line indicates 1.}
%     \label{fig:sig-se-ratio-synsurr-cca}
% \end{figure}

% \begin{figure}[H]
%     \centering
%     \includegraphics[width=0.85\linewidth]{figs/sig_beta_diff_psppi_wcca.pdf}
%     \caption{Difference of $\hat{\beta}$ estimates (PS-PPI - WCCA) across traits for significant SNPs ($p<5\times10^{-8}$). The dashed line indicates 0.}
%     \label{fig:sig-beta-diff-psppi-wcca}
% \end{figure}

% \begin{figure}[H]
%     \centering
%     \includegraphics[width=0.85\linewidth]{figs/sig_se_ratio_psppi_wcca.pdf}
%     \caption{Ratio of standard errors (PS-PPI / WCCA) across traits for significant SNPs ($p<5\times10^{-8}$). The dashed line indicates 1.}
%     \label{fig:sig-se-ratio-psppi-wcca}
% \end{figure}

\newpage

\section{Proofs}
\label{sec:proofs}

\subsection{Proof of Theorem~\ref{thrm:linear_regression_cca}}
\label{sec:proof_theorem_linear_regression_cca}
\begin{proof}
Notice that $Y_i = \beta_0 + \vX_i\vbeta^\ast_{\vX} + \vZ_i\vbeta_{\vZ}^\ast + \epsilon_i$, where $\vX_i$ and $\vZ_i$ are row vectors, and $\epsilon_i \sim \cN(0, \sigma^2)$. Denote $\vG_i = [1, \vX_i, \vZ_i]$, $\vbeta = [\beta_0, \vbeta^\top_{\vX}, \vbeta^\top_{\vZ}]^\top$, fitting a linear regression on $Y_i$ to $\vX_i$ and $\vZ_i$ using only samples with $R_i = 1$ is equivalent to solve $\widehat{\vbeta}_{\text{CC}}$ from the following equation:
$$ \frac{1}{n}\sum^n_{i=1} I(R_i = 1) \vG^\top_i (\vG_i \vbeta - Y_i) = \vzero.$$
Denote $\vbeta^\triangle$ as the population parameter of the model above, i.e., $\vbeta^\triangle$ satisfies
$$\E \left [\vG^\top_1 (\vG_1 \vbeta^\triangle - Y_1) \mid R_1 = 1 \right ] = \vzero.$$
By rewriting the above equation we have:
\begin{align*}
\vzero &= \E \left [\vG^\top_1 \vG_1 \vbeta^\triangle - \vG^\top_1 Y_1 \mid R_1 = 1 \right ] \\
&= \E_{(\vX, \vZ) \mid R = 1} \left \{ \vG^\top_1 \vG_1 \vbeta^\triangle - \vG^\top_1 \E \left [ Y_1 \mid \vX_1, \vZ_1, R_1 = 1 \right ] \right \} \\
&= \E_{(\vX, \vZ) \mid R = 1} \left \{ \vG^\top_1 \vG_1 (\vbeta^\triangle - \vbeta^\ast ) - \vG^\top_1 (\E \left [ Y_1 \mid \vX_1, \vZ_1, R_1 = 1 \right ] - \vG_1\vbeta^\ast) \right \} \\
&= \E \left [ \vG^\top_1 \vG_1 (\vbeta^\triangle - \vbeta^\ast ) - \vG^\top_1 r(\vX_1, \vZ_1) \mid R_1 = 1 \right ],
\end{align*}
where we denote $r(\vX_1, \vZ_1) = \E \left [ Y_1 \mid \vX_1, \vZ_1, R_1 = 1 \right ] - \vG_1\vbeta^\ast$. Hence,
$$ \vbeta^\triangle = \vbeta^\ast + \underbrace{ \E \left (\vG^\top_1 \vG_1 \mid R_1 = 1 \right)^{-1} \E \left ( \vG_1  r(\vX_1, \vZ_1) \mid R_1 = 1  \right ) }_{\vzeta^\ast},$$
where $\vzeta^\ast$ is the OLS estimate of linear regression between $r(\vX_i, \vZ_i)$ and $\vG$ using the fully observed cases.

When the missing data mechanism is at least MAR, \textcolor{black}{we know that $Y$ and $R$ are independent given $\vX$ and $\vZ$. Hence,} we have $r(\vX_1, \vZ_1) = \vG_1\vbeta^\ast - \vG_1\vbeta^\ast = 0$. As a result, $\vzeta^\ast = 0$ and $\vbeta^\triangle = \vbeta^\ast$.

When the missing data mechanism is MNAR, we know that when $r(\vX, \vZ)$ is a linear function that only depends on $\vZ$, i.e., $r(\vX, \vZ) = h(\vZ) = \vZ\tilde{\vbeta}^\ast$, we have $\vzeta^\ast_{\vX} = \vzero$, namely, $\vbeta_{\vX}^\triangle = \vbeta_{\vX}^\ast$.

Now we show the following conditions suffice for $r(\vX, \vZ)$ to be a linear function of $\vZ$ only: (a) $\beta^\ast = 0$; (b) $R_i$ does not depend on $\vX_i$ given $Y_i$ and $\vZ_i$; (c) \textcolor{black}{$\vZ_i$ is a multi-dimensional one-hot encoded vector}. First, \textcolor{black}{according to condition (a) and (b),} the conditional distribution
\begin{align*}
f(Y_i | \vX_i, \vZ_i, R_i = 1) &= \frac{f(R_i = 1 | \vX_i, \vZ_i, Y_i)f(Y_i | \vX_i, \vZ_i)}{\int f(R_i = 1 | \vX_i, \vZ_i, Y_i) f(Y_i | \vX_i, \vZ_i) d\epsilon_i} \\
&= \frac{f(R_i = 1 | \vZ_i, Y_i)f(Y_i | \vZ_i)}{\int f(R_i = 1 | \vZ_i, Y_i) f(Y_i | \vZ_i) d\epsilon_i}.
\end{align*}
This is because first, \textcolor{black}{as $R_i$ does not depend on $\vX_i$ given $Y_i$ and $\vZ_i$}, we have $f(R_i = 1 | \vX_i, Y_i,  \vZ_i) = f(R_i = 1 | Y_i,  \vZ_i)$. Second, \textcolor{black}{as} $\beta^\ast = 0$, \textcolor{black}{we have} $f(Y_i | \vX_i, \vZ_i) = f(Y_i | \vZ_i)$. \textcolor{black}{The above equation indicates that $Y_i$ and $\vX_i$ are independent given conditions (a) and (b).} Accordingly, we have
$$\E \left [ Y_i | \vX_i, \vZ, R_i = 1 \right ] = \E \left [ Y_i | \vZ_i, R_i = 1 \right ] = h(\vZ_i),$$
which is a function of only $\vZ_i$. Finally, if $\vZ_i$ is a multi-dimensional one-hot encoded random vector, then such a conditional expectation can be written as a linear function $h(\vZ_i) = \vZ_i\tilde{\vbeta}^\ast$, \textcolor{black}{where the $k$-th element of $\tilde{\vbeta}^\ast$ is the conditional expectation of $Y_i$ given that the $k$th element of $\vZ_i$ equals 1 and $R_i = 1$, that is,} $\tilde{\vbeta}^\ast_k = \E \left [ Y_i | \vZ_{i, k} = 1, R_i = 1 \right ]$.
\end{proof}

\subsection{Proof of Theorem~\ref{thrm:logistic_regression_cca}}
\label{sec:proof_theorem_logistic_regression}

\textcolor{black}{Suppose the true model is $\text{logit}(\E(Y_i \mid \vX, \vZ)) = \beta^\ast_0 + \vX_i\vbeta_{\vX}^\ast + \vZ_i\vbeta_{\vZ}^\ast$. When the outcome $Y_i$ is partially observed, the complete-case analysis targets the following model induced by conditioning on $R_i = 1$~\citep{kundu2024framework}:}
$$\text{logit}(\E(Y_i \mid \vX, \vZ, R = 1)) = \beta^\ast_0 + \vX_i\vbeta_{\vX}^\ast + \vZ_i\vbeta_{\vZ}^\ast + \log r(\vX_i, \vZ_i),$$
where $r(\vX_i, \vZ_i) = \frac{P(R_i = 1 \mid Y_i = 1, \vX_i, \vZ_i)}{P(R_i = 1 \mid Y_i = 0, \vX_i, \vZ_i)}$. When the missing data mechanism is at least MAR, we have $r(\vX_i, \vZ_i) = 1$, and therefore
$$\text{logit}(\E(Y_i \mid \vX, \vZ, R = 1)) = \beta^\ast_0 + \vX_i\vbeta_{\vX}^\ast + \vZ_i\vbeta_{\vZ}^\ast.$$

When the missingness data mechanism is MNAR, with $\vX_i \perp R_i \mid (Y_i, \vZ_i)$ and $\vZ_i \perp R_i \mid Y_i$, we have $r(\vX_i, \vZ_i) = \frac{P(R_i = 1 \mid Y_i = 1, \vX_i, \vZ_i)}{P(R_i = 1 \mid Y_i = 0, \vX_i, \vZ_i)} = \frac{P(R_i = 1 \mid Y_i = 1)}{P(R_i = 1 \mid Y_i = 0)}$, which is a constant function of $\vX_i$ and $\vZ_i$. As such, \textcolor{black}{consistent estimator} of $\vbeta^\ast_{\vX}$ \textcolor{black}{can be obtained since this term is fully absorbed into the intercept term}. \textcolor{black}{Under standard regularity conditions, the resulting estimator of $\vbeta^\ast_{\vX}$ is also asymptotically normal.}

\subsection{Proof of Theorem~\ref{theorem:linear_regression_pb}}
\label{sec:proof_theorem_linear_regression_pb}

Estimator-level PB inference estimators are in the form of Equation~(\ref{eq:estimator_level}). From Theorem~\ref{thrm:linear_regression_cca}, with the set of sufficient conditions given, the base estimator $\widehat{\vbeta}_{\text{CC}}$ is a consistent estimation of $\vtheta^\ast$. As such, we want to show that $\widehat{\vgamma}_1 - \widehat{\vgamma}_2$ is a consistent estimate of zero.

We know that $\widehat{\vgamma}_1$ and $\widehat{\vgamma}_2$ are solutions of the following two estimating equations:
$$ \frac{1}{n}\sum^n_{i=1} I(R_i = 1) \vG^\top_i (\vG_i \vgamma_1 -
\widehat{Y}_i) = \vzero,$$
$$ \frac{1}{n}\sum^n_{i=1} I(R_i = 0) \vG^\top_i (\vG_i \vgamma_2 -
\widehat{Y}_i) = \vzero.$$
where $\widehat{Y}_i = \vG_i\tilde{\vbeta} + \tau_{i,\text{ML}}$. We want to show that there exists a $\vgamma^\triangle$ such that it is the solution for both
$$\E \left [\vG^\top_1 (\vG_1 \vgamma^\triangle - \widehat{Y}_1) \mid R_1 = 1 \right ] = \vzero.$$
and
$$\E \left [\vG^\top_1 (\vG_1 \vgamma^\triangle - \widehat{Y}_1) \mid R_1 = 0 \right ] = \vzero,$$
Specifically, we know
\begin{align*}
\E \left [\vG^\top_1 (\vG_1 \vgamma^\triangle - \widehat{Y}_1) \mid R_1 = 1 \right ] &= \E \left [\vG^\top_1 (\vG_1 \vgamma^\triangle - \vG_1\tilde{\vbeta} - \tau_{i,\text{ML}}) \mid R_1 = 1 \right ] \\
&= \E \left [\vG^\top_1\vG_1 (\vgamma^\triangle - \tilde{\vbeta}) \mid R_1 = 1 \right ] - \E \left [ \tau_{i,\text{ML}} \right ] \\
&= \vzero.
\end{align*}
As such, $\vgamma^\triangle = \tilde{\vbeta} - (\E \left [ \tau_{i,\text{ML}} \right ], \vzero)^\top$. Similarly, we can obtain that $\vgamma^\triangle$ is also the solution of $E \left [\vG^\top_1 (\vG_1 \vgamma^\triangle - \widehat{Y}_1) \mid R_1 = 0 \right ] = \vzero$. Hence, $\widehat{\vgamma}_1 - \widehat{\vgamma}_2$ is a consistent estimate of zero.

\section{Supplementary Details of Simulation Settings}
\label{sec:supplementary-setting-details}

\subsection{Settings of outcome observation models}

In this subsection, we present the detailed configuration of the outcome observation models in each simulation settings.

\paragraph{Linear regression with continuous confounders.} Under the MCAR setting, we set $\Pr(R_i = 1) = 0.2$ for all $i$. When the missing data mechanisms are MAR, we adopt the following models
\begin{align*}
&\text{logit}\{\Pr(R_i = 1)\} = -1.7 + 0.5 Z_{i2}, \tag{MAR1} \\
&\text{logit}\{\Pr(R_i = 1)\} = -2 + X_{i1} + 0.5 Z_{i2}. \tag{MAR2}
\end{align*}
When the missing data mechanisms are MNAR, we adopt the following models
\begin{align*}
&\text{logit}\{\Pr(R_i = 1)\}
= -2.8 + 0.5\, Z_{i2} + 2\, I(Y_i < 1), &&\text{(MNAR1)} \\
&\text{logit}\{\Pr(R_i = 1)\}
= -3 + X_{i1} + 0.5\, Z_{i2} + 2\, I(Y_i < 1), &&\text{(MNAR2)} \\
&\text{logit}\{\Pr(R_i = 1)\}
= -3 + X_{i2} + 0.5\, Z_{i2} + 2\, I(Y_i < 1), &&\text{(MNAR3)} \\
&\text{logit}\{\Pr(R_i = 1)\}
= -3.2 + X_{i1} + X_{i2} + 0.5\, Z_{i2} + 2\, I(Y_i < 1), &&\text{(MNAR4)} \\
&\text{logit}\{\Pr(R_i = 1)\}
= -3.5 + X_{i1} + X_{i2} + 0.5\, Z_{i2} + 2\, I(Y_i < 1)
+ X_{i1} I(Y_i < 1), &&\text{(MNAR5)} \\
&\text{logit}\{\Pr(R_i = 1)\}
= -3.2 + X_{i1} + X_{i2} + 0.5\, Z_{i2} + 2\, I(Y_i < 1)
+ X_{i2} I(Y_i < 1), &&\text{(MNAR6)} \\
&\text{logit}\{\Pr(R_i = 1)\}
= -3.2 + X_{i1} + X_{i2} + 0.5\, Z_{i2} + 2\, I(Y_i < 1)
+ 0.5\, Z_{i2} I(Y_i < 1). &&\text{(MNAR7)}
\end{align*}

\paragraph{Linear regression with a categorical confounder.} Under the MCAR setting, we set $\Pr(R_i = 1) = 0.2$ for all $i$. When the missing data mechanisms are MAR, we adopt the following models
\begin{align*}
&\text{logit}\{\Pr(R_i = 1)\} = -1.7 + Z_{i4}, \tag{MAR1} \\
&\text{logit}\{\Pr(R_i = 1)\} = -2 + X_{i1} + Z_{i4}. \tag{MAR2}
\end{align*}
When the missing data mechanisms are MNAR, we adopt the following models
\begin{align*}
&\text{logit}\{\Pr(R_i = 1)\}
= -2 + Z_{i4} + I(Y_i < 0), &&\text{(MNAR1)} \\
&\text{logit}\{\Pr(R_i = 1)\}
= -2.4 + X_{i1} + Z_{i4} + I(Y_i < 0), &&\text{(MNAR2)} \\
&\text{logit}\{\Pr(R_i = 1)\}
= -2.2 + X_{i2} + Z_{i4} + I(Y_i < 0), &&\text{(MNAR3)} \\
&\text{logit}\{\Pr(R_i = 1)\}
= -2.5 + X_{i1} + X_{i2} + Z_{i4} + I(Y_i < 0), &&\text{(MNAR4)} \\
&\text{logit}\{\Pr(R_i = 1)\}
= -2.5 + X_{i1} + X_{i2} + Z_{i4} + I(Y_i < 0)
- X_{i1} I(Y_i < 0), &&\text{(MNAR5)} \\
&\text{logit}\{\Pr(R_i = 1)\}
= -2.5 + X_{i1} + X_{i2} + Z_{i4} + I(Y_i < 0)
- X_{i2} I(Y_i < 0), &&\text{(MNAR6)} \\
&\text{logit}\{\Pr(R_i = 1)\}
= -2.5 + X_{i1} + X_{i2} + Z_{i4} + I(Y_i < 0)
+ Z_{i4} I(Y_i < 0). &&\text{(MNAR7)}
\end{align*}

\paragraph{Logistic Regression.} Under the MCAR setting, we set $\Pr(R_i = 1) = 0.2$ for all $i$.
When the missing data mechanisms are MAR, we adopt the following models
\begin{align*}
&\text{logit}\{\Pr(R_i = 1)\} = -1.7 + 0.5 Z_{i2}, \tag{MAR1} \\
&\text{logit}\{\Pr(R_i = 1)\} = -2 + X_{i1} + 0.5 Z_{i2}. \tag{MAR2}
\end{align*}
When the missing data mechanisms are MNAR, we adopt the following models
\begin{align*}
&\text{logit}\{\Pr(R_i = 1)\}
= -3.3 + 0.5 Z_{i2} + 2 Y_i, &&\text{(MNAR1)} \\
&\text{logit}\{\Pr(R_i = 1)\}
= -3.5 + X_{i1} + 0.5 Z_{i2} + 2 Y_i, &&\text{(MNAR2)} \\
&\text{logit}\{\Pr(R_i = 1)\}
= -3.5 + X_{i2} + 0.5 Z_{i2} + 2 Y_i, &&\text{(MNAR3)} \\
&\text{logit}\{\Pr(R_i = 1)\}
= -3.9 + X_{i1} + X_{i2} + 0.5 Z_{i2} + 2 Y_i, &&\text{(MNAR4)} \\
&\text{logit}\{\Pr(R_i = 1)\}
= -4.2 + X_{i1} + X_{i2} + 0.5 Z_{i2} + 2 Y_i
+ X_{i1} Y_i, &&\text{(MNAR5)} \\
&\text{logit}\{\Pr(R_i = 1)\}
= -4.2 + X_{i1} + X_{i2} + 0.5 Z_{i2} + 2 Y_i
+ X_{i2} Y_i, &&\text{(MNAR6)} \\
&\text{logit}\{\Pr(R_i = 1)\}
= -4.5 + X_{i1} + X_{i2} + 0.5 Z_{i2} + 2 Y_i
+ 0.5 Z_{i2} Y_i. &&\text{(MNAR7)}
\end{align*}

\subsection{Settings of synthetic surrogates in logistic regression}
\label{sec:synthetic_surrogates_logistic_regression}

For the logistic regression settings, another set of $(Y_i, X_{i1}, X_{i2}, Z_{i1}, Z_{i2})$ is sampled and fit a correctly specified logistic regression model to obtain $\beta_{\text{train}}$. The fitted model is then used to make predictions. Specifically, $\widehat{Y}_i$ is sampled from Bernoulli($\widehat{p}_i$), where $\text{logit}(\widehat{p}_i) = \vG_i\beta_{\text{train}}$.

\section{Supplementary Simulations}
\label{sec:supplementary-results}

\subsection{Simulation on the binary outcome setting}
\label{sec:binary-outcome-results}

\subsubsection{Missing completely at random}

\paragraph{Hypothesis testing} Under the MCAR setting, CCA, WCCA (est), and the two PS-PPI variants show valid type I error close to 0.05 for both $\beta_1$ and $\beta_2$. Specifically, \textcolor{black}{for $\beta_1$,} CCA and WCCA (est) have type I errors around 0.06, while PS-PPI (CCA) and PS-PPI are 0.05. For $\beta_2$, CCA and WCCA (est) show type I errors close to the 0.05 nominal level, while PS-PPI (CCA) and PS-PPI are slightly conservative with type I errors around 0.03. The two MI methods and the Naive method show inflated type I error controls. In terms of power, the two PS-PPI variants exhibit strongest statistical power of around 0.90 and 0.73 for $\beta_1$ and $\beta_2$, higher than those from their base estimators CCA and WCCA (est).

\paragraph{Point estimation} For point estimation under MCAR, the Naive method achieves the lowest MSE ranging from 0.03 to 0.05 for both coefficients. In contrast, CCA, WCCA (est) and the two PS-PPI variants show MSE of 0.1-0.2 for both coefficients.

\subsubsection{Missing at random}

\paragraph{Hypothesis testing} Under the two MAR settings, we observe similar type I error patterns. Specifically, CCA, WCCA (est) and PS-PPI with estimated propensity score show valid type I error controls for both coefficients under MAR1 and MAR2, while the two MI methods and Naive show inflated type I errors. Note that although PS-PPI (CCA) shows valid type I error controls on Table~\ref{tab:logistic_regression_summary}, it exhibits inflated type I errors in Figure~\ref{fig:logistic_regression_varying_quality_type_i_error}. In terms of power, among methods with valid type I error controls, CCA achieves strongest power performance of 0.75 and 0.58 for $\beta_1$ and $\beta_2$ under MAR1, and 0.68 and 0.48 for the two coefficients under MAR2. PS-PPI exhibits weaker power than CCA but consistently outperforms WCCA (est). Under MAR1, PS-PPI attains power of 0.56 for $\beta_1$ and 0.26 for $\beta_2$, compared to 0.52 and 0.22 for WCCA (est), respectively; under MAR2, the corresponding power values are 0.26 and 0.22 for PS-PPI versus 0.24 and 0.13 for WCCA (est).

\paragraph{Point estimation} For point estimation, consistent with the MCAR setting, the Naive method achieves the smallest MSE across both coefficients and both missing data mechanisms settings, with MSEs of 0.04 for $\beta_1$ and 0.05--0.06 for $\beta_2$. Furthermore, the two PS-PPI variants show MSE that are overall better than CCA and WCCA (est) and the two MI methods.

\subsubsection{Missing not at random}

\paragraph{Hypothesis testing} Under the MNAR settings, the type I error patterns are different under different missing data mechanism settings. Specifically, according to Table~\ref{tab:logistic_regression_summary} and Figure~\ref{fig:logistic_regression_varying_quality_type_i_error}, when $X_1$ is not included in the observation model, CCA, WCCA (est) exhibit valid type I error control of around 0.05 on $\beta_1$. However, when $X_1$ is included in the observation model, all the selected method show inflated type I error controls. We provide theoretical explanation of this phenomenon in Theorem~\ref{thrm:logistic_regression_cca}. On the other hand, for $\beta_2$, all the selected methods show inflated type I error controls in all the missing data mechanism settings.

\paragraph{Point estimation} Similar to what have been observed from the MCAR and MAR settings, under the MNAR setting, the Naive method show the overall best performance in terms of achieving smallest MSE. However, among the performance of the remaining methods, there is no method that uniformly dominates the other methods across both coefficients and all MNAR settings.

\subsection{Simulation on the quality of imputation}
\label{sec:simulation_quality_of_imputation}

In this subsection we conduct a supplemental simulation to investigate the impact of the quality of machine learning imputation on the performance of the included methods. To allow controls over the bias and variance of the imputation, we introduce artificially constructed imputations. Specifically, instead of fitting a model with additional data, we generate predictions by perturbing the underlying true values with pre-specified bias and noise levels. Under the linear regression setting, we set
$$\widehat{Y}_i = Y_i + b_{i, \text{pred}} + \epsilon_{i, \text{pred}}.$$
Here, we define $b_{i,\text{pred}} \sim \textsf{Exponential}(\lambda_{\text{pred}})$, and $\epsilon_{i, \text{pred}} \sim \cN(0, \sigma^2_{\text{pred}})$. For the logistic regression setting, we generate predictions by flipping the true outcomes $Y_i$ with a predefined probability $p$, thereby introducing the controllable prediction error. We iterate $\lambda_{\text{pred}}$ and $\sigma_{\text{pred}}$ from $\{0.2, 2.0\}$ to investigate the performance of naive imputation and PB inference methods under different quality of imputation. Similarly, we iterate $p$ from $\{0.05, 0.1, 0.2, 0.5\}$. We replace the imputations introduced in the main simulation with these different quality of imputations, and redo the simulation. Note that we only include PB inference methods in this simulation as they are the only methods that are impacted by the quality of the imputations in addition to Naive. We exclude Naive from this supplemental simulation as it has been shown to fail to achieve valid type I error in all the settings in the main simulation. The results for the linear regression setting with continuous confounders are shown in Figure~\ref{fig:linear_regression_simulation_varying_quality_power} and~\ref{fig:linear_regression_simulation_varying_quality_type_i_error}; the results for the linear regression setting with a categorical confounder are shown in Figure~\ref{fig:linear_regression_simulation_binary_z_varying_quality_power} and~\ref{fig:linear_regression_simulation_binary_z_varying_quality_type_i_error}; the results for the logistic regression setting are shown in Figure~\ref{fig:logistic_regression_varying_quality_power} and~\ref{fig:logistic_regression_varying_quality_type_i_error}.

\end{document}

%% file: math_command.tex
\usepackage{amsmath,amsfonts,bm}

\def\eqref#1{equation~\ref{#1}}
\def\1{\bm{1}}

\def\vzero{{\bm{0}}}

\def\vtheta{{\bm{\theta}}}
\def\vomega{{\bm{\omega}}}
\def\vbeta{{\bm{\beta}}}
\def\valpha{{\bm{\alpha}}}
\def\vpsi{{\bm{\psi}}}
\def\vzeta{{\bm{\zeta}}}
\def\vomega{{\bm{\omega}}}
\def\vgamma{{\bm{\gamma}}}

\DeclareMathAlphabet{\mathsfit}{\encodingdefault}{\sfdefault}{m}{sl}
\SetMathAlphabet{\mathsfit}{bold}{\encodingdefault}{\sfdefault}{bx}{n}

\newcommand{\E}{\mathbb{E}}

\newcommand{\R}{\mathbb{R}}

\newcommand{\Var}{\mathrm{Var}}

\renewcommand{\cal}{\mathcal}

\newcommand{\cD}{{\cal D}}

\newcommand{\cN}{{\cal N}}

\renewcommand{\geq}{\geqslant}
\newtheorem{theorem}{Theorem}

\let\Pr\relax
\DeclareMathOperator*{\Pr}{\mathbb{P}}

\def\vzero{\boldsymbol{0}}

\def\vX{\boldsymbol{X}}

\def\vZ{\boldsymbol{Z}}
\def\vG{\boldsymbol{G}}

\def\vTheta{\boldsymbol{\Theta}}